\documentclass[12pt,letterpaper]{article}

\usepackage[margin=0.92in]{geometry}
\usepackage{amsmath,amssymb,amsthm}
\usepackage{mathtools}
\usepackage{graphicx}
\usepackage{booktabs}
\usepackage{caption}
\usepackage{subcaption}
\usepackage{colortbl}
\usepackage[round,authoryear]{natbib}
\usepackage{setspace}
\usepackage{xcolor}
\usepackage[hidelinks,colorlinks=false]{hyperref}
\usepackage{enumitem}

\newtheorem{proposition}{Proposition}
\newtheorem{lemma}{Lemma}

\newtheorem{theorem}{Theorem}
\theoremstyle{definition}

\theoremstyle{remark}
\newtheorem{remark}{Remark}

\newcommand{\R}{\mathbb{R}}
\newcommand{\E}{\mathbb{E}}
\newcommand{\PP}{\mathbb{P}}

\newcommand{\Lam}{\Lambda}            
\newcommand{\logit}{\mathrm{logit}}
\newcommand{\Tstar}{T^{\!\star}}

\title{}
\author{}
\date{}

\begin{document}

\thispagestyle{empty}
\vspace*{1.8cm}
\begin{center}
{\LARGE\bfseries On the Possibility of Informationally}\\[6pt]
{\LARGE\bfseries Inefficient Markets Without Noise}\\[28pt]
{\large Matthijs Breugem\footnote{Nyenrode Business University.
I thank Doruk Ceremen, Diego Garcia, Roberto Marf\`e,
Francesco Sangiorgi, G\"unter Strobl, and Jing Zeng
for useful discussions and feedback.
E-mail: \texttt{m.breugem@nyenrode.nl}.}}\\[10pt]
{\normalsize\itshape This version: May 2026}\\[8pt]
{\normalsize First version: May 2026}\\[14pt]
{\normalsize \today}\\[10pt]
{\normalsize\itshape Preliminary and Incomplete}
\end{center}

\vspace{10pt}

\begin{center}
\begin{minipage}{0.85\textwidth}
\begin{abstract}
\noindent
Noise traders can be dispensed with entirely. Partial revelation of
information through prices arises under any non-exponential
expected-utility preference---including CRRA---without noise traders,
random endowments, supply shocks, hedging motives, or behavioural
biases. The model contains zero exogenous noise.

The mechanism is a mismatch between the space in which market clearing
aggregates signals and the Bayesian sufficient statistic. CARA demand
is linear in log-odds, so prices aggregate in log-odds space and reveal
the statistic exactly. Every other preference aggregates differently;
the resulting Jensen gap makes revelation partial. I prove CARA is
the unique fully revealing preference class, characterise the
rational-expectations equilibrium via a contour-integration fixed point,
and verify that partial revelation survives learning from prices. The
Grossman--Stiglitz paradox is resolved: information acquisition has
positive value within the rational class. Numerical solution of the
rational-expectations fixed point at $K=3$ confirms partial revelation,
positive trade volume, and positive value of information across the full
range of CRRA risk aversion, vanishing only in the CARA limit.
\\[16pt]
\noindent\textbf{Keywords:} rational expectations, partial revelation,
noise traders, CRRA, Grossman--Stiglitz paradox, aggregation space.\\[4pt]
\noindent\textbf{JEL codes:} D82, D83, G14.
\end{abstract}
\end{minipage}
\end{center}

\newpage

\section{Introduction}\label{sec:intro}

Half a century after \citet{GrossmanStiglitz1980}, the textbook resolution of
the informational paradox in asset markets remains the same: prices cannot
aggregate private signals perfectly because noise --- exogenous shocks to
supply, liquidity demand, or trader population --- masks the information
content of order flow. From \citet{Hellwig1980} and \citet{DiamondVerrecchia1981}
through \citet{Kyle1985} and the modern microstructure literature, the
conditional partial revelation of information is engineered by an
unmodelled population whose trades are unrelated to fundamentals. The
modelling convention is so entrenched that introductory treatments routinely
list ``noise traders'' alongside risk aversion and information asymmetry as a
\emph{primitive} of the rational-expectations framework.

This paper makes a simple but consequential point: that primitive is not
necessary. In a fully rational economy with no liquidity shocks, no
behavioural traders, and zero net supply, partial revelation arises
endogenously from preferences alone. What is required is that agents do not
have constant absolute risk aversion. Once preferences exit the CARA class,
the mapping from private information to equilibrium prices fails to identify
the Bayesian sufficient statistic, and prices reveal partial information for
purely aggregational reasons.

The mechanism is best stated in terms of an \emph{aggregation space}. Under
any preference class, equilibrium prices aggregate private posteriors via
market clearing. Under CARA, demand is linear in the log-odds of the
posterior, and clearing therefore averages log-odds. The Bayesian sufficient
statistic in the canonical Gaussian-signal model is itself a sum of
log-likelihood ratios. The two spaces coincide --- log-odds in, log-odds out
--- and prices fully reveal that statistic. Under any other CRRA preference,
demand is nonlinear in log-odds and approximately linear in probability;
clearing therefore averages in probability space. The probability average of
a logistic transform is not the logistic transform of the average,
\[
   \frac{1}{K}\sum_{k=1}^{K}\Lam(\tau u_{k}) \neq
   \Lam\!\left(\frac{1}{K}\sum_{k=1}^{K}\tau u_{k}\right),
\]
and the Jensen gap on the right is the size of the revelation failure.
CARA is the boundary case in which the gap closes because the link from
posteriors to demand is itself the inverse logistic --- a coincidence of
functional form, not an economic primitive.

The paper makes one conceptual and four formal contributions. The
conceptual contribution is to recast the question of revelation in
preference-theoretic rather than friction-theoretic terms: in any
rational-expectations economy without noise, full revelation requires
that the demand functional aggregate private posteriors linearly in
log-odds, and CARA is the unique CRRA preference class for which this is
the case. We refer to this requirement as the \emph{alignment principle}
in Section~\ref{sec:agg-space}. Once stated cleanly, the noise-trader
assumption appears as a device that perturbs CARA enough to break the
alignment artificially; non-CARA preferences break the same alignment
endogenously and at no modelling cost.

Building on this principle, the paper makes four formal contributions.
\emph{First}, in a binary-state model with $K\geq 3$ informed groups, I
show that the CARA market-clearing condition aggregates private posteriors
in log-odds for arbitrary heterogeneity in $(\alpha_{k},\tau_{k},W_{k})$
and is fully revealing of the Bayesian sufficient statistic when risk
aversion is homogeneous (Proposition~\ref{prop:cara}). I prove a
companion uniqueness result: among all smooth concave utilities, CARA
is the only class whose binary-asset demand is linear in $\logit\mu-\logit p$
(Theorem~\ref{thm:cara-unique}); this delineates the knife-edge across
preferences, not only across the CRRA parametrisation. \emph{Second}, I
show that under any CRRA preference with $\gamma\in(0,\infty)$ the
no-learning equilibrium is partially revealing, derive the third-order
asymptotic expansion of the Jensen gap between the actual price and the
fully-revealing benchmark
(Propositions~\ref{prop:non-cara}--\ref{prop:jensen}), and prove that the
revelation deficit is a continuous, strictly monotone function of $\gamma$
that vanishes only at $\gamma=\infty$ (Proposition~\ref{prop:smooth}).
\emph{Third}, I characterise the full rational-expectations equilibrium as
a fixed point on the price function $P[i,j,l]$ over a three-dimensional
grid, develop a contour-integration solution method, and verify
numerically that partial revelation survives rational learning from prices
(Proposition~\ref{prop:ree}). \emph{Fourth}, I trace the welfare
consequences and the resolution of the Grossman--Stiglitz paradox: trade
volume is strictly positive under any CRRA preference, the value of
private information $V(\tau)$ is strictly positive for every $\tau>0$ and
increasing in precision, the paradox dissolves within the rational class
without recourse to noise, and the vanishing-noise limit selects partial
revelation at every finite $\gamma$
(Propositions~\ref{prop:welfare}--\ref{prop:vanishing}).

\paragraph{Relation to the literature.}
The paper sits at the intersection of three strands.

The \emph{informational-revelation} literature begins with
\citet{GrossmanStiglitz1980}, \citet{Hellwig1980}, and
\citet{DiamondVerrecchia1981}, which together formalise the
noisy-rational-expectations paradigm and identify Gaussian-CARA-with-noise
as the canonical solvable case. \citet{DeMarzoSkiadas1998} prove that the
zero-noise limit of this model is the unique fully revealing equilibrium,
and \citet{BreonDrish2015} extend full revelation to the entire
exponential family of signal distributions paired with CARA demands. These
results are typically read as evidence that revelation is generic and
that exogenous noise is required to generate non-trivial informational
rents. The present paper takes the opposite view: full revelation is a
knife-edge property of CARA, not a property of the rational class as a
whole, and the alignment between log-odds aggregation and the Bayesian
sufficient statistic is a feature of CARA in particular rather than of
rational expectations in general.

The \emph{noise-trader} literature has long debated whether the
liquidity-trader assumption is innocuous. \citet{Black1986} introduced
the term informally, and \citet{DSSW1990} showed that noise traders can
survive in equilibrium and earn premia of their own. The microstructure
strand following \citet{Kyle1985} and \citet{GlostenMilgrom1985} treats
liquidity demand as a primitive of trading, while
\citet{Kyle1989} and \citet{SpiegelSubrahmanyam1992} address strategic
information transmission with noise. \citet{BhattacharyaSpiegel1991}
study no-noise breakdowns under CARA. The present paper offers a
disjoint resolution: rather than asking whether noise traders are
behaviourally plausible or strategically necessary, it shows that the
informational role of noise is preference-specific. Within CRRA, partial
revelation arises endogenously and noise traders are not required to
sustain it.

The \emph{higher-moment and non-CARA} literature is the closest
neighbourhood. \citet{Wang1993,Wang1994} solve dynamic
asymmetric-information models with CARA and study volume; the present
results suggest that the volume implications are CARA-specific.
\citet{BarlevyVeronesi2003} and \citet{Mele2007} obtain non-Gaussian
prices via state-dependent preferences, and \citet{AllenMorrisShin2006}
and \citet{KasaWalkerWhiteman2014} analyse higher-order beliefs in
information-aggregation environments. \citet{Peress2004} studies
information acquisition under CRRA with noise and obtains
wealth-dependent acquisition; the present paper isolates the pure
preference channel by setting noise to zero. \citet{Vives2011} obtains
partial revelation in a private-values strategic setting without noise;
the present mechanism is complementary, operating in a common-values
competitive environment via preferences.
\citet{AlbagliHellwigTsyvinski2024} characterise non-Gaussian REE with
CARA-like demands and show that the alignment between aggregation and
the sufficient statistic is fragile on the distributional dimension; the
present paper makes the parallel point on the preference dimension. On
information acquisition more broadly, \citet{VeldkampBook} provides a
comprehensive treatment of which the GS resolution here can be read as a
preference-driven complement.

Numerically, the contour-integration fixed point introduced in
Section~\ref{sec:ree} builds on the projection method of
\citet{BreugemBuss2019} for non-standard rational-expectations problems.
The role of \citet{Milgrom1982} appears in Section~\ref{sec:implications}: the
classical no-trade theorem describes precisely the CARA-REE outcome,
which the CRRA-REE escapes.

\paragraph{Organisation.}
Section~\ref{sec:model} sets out the model and the demand
characterisations. Section~\ref{sec:knife-edge} proves the no-learning
benchmark results and presents the smooth-transition table that maps the
move from CARA to log utility. Section~\ref{sec:ree} develops the contour
method, states the rational-expectations result, and reports converged
posteriors. Section~\ref{sec:implications} treats welfare, the value of
information, and the Grossman--Stiglitz paradox.
Section~\ref{sec:mechanisms} dissects the mechanisms that drive
partial revelation under heterogeneity, including a fourth channel within
CARA itself that learning from prices fully neutralises.
Section~\ref{sec:literature} relates
the results to the wider literature, and Section~\ref{sec:conclusion}
concludes. All proofs are in Appendix~\ref{app:proofs}; the numerical
implementation of the contour fixed point is in Appendix~\ref{app:contour}.

\section{Model}\label{sec:model}

A binary state of the world $v\in\{0,1\}$ is realised at $t=1$. At $t=0$ a
single risky asset trades against a riskless numeraire; the asset pays $v$
units of the numeraire at $t=1$ and is in zero net supply,
$\bar{z}=0$.\footnote{Zero net supply is the strongest version of our
result: partial revelation arises without any exogenous friction. CRRA
demand is self-bounding ---  marginal utility diverges as wealth
approaches zero in either state --- so no agent takes an unbounded
position or crosses into negative wealth. Market clearing at
$\bar{z}=0$ always admits a unique solution
(Lemma~\ref{lem:existence}). Introducing a positive deterministic
supply $\bar{z}>0$ shifts the price level (and under CARA shifts the
sufficient statistic by a known constant $\alpha\bar{z}$) but does not
affect the informational content of the aggregation: the alignment
between demand linearity and the Bayesian sufficient statistic is a
preference property, not a supply property. Our choice $\bar{z}=0$
isolates the pure preference channel.} There are no noise traders, no
liquidity shocks, and no exogenous supply variation: the entire
equilibrium is determined by the preferences and signals of the
rational agents.

\subsection{Information structure}

There are $K\geq 3$ groups of informed agents indexed by $k=1,\ldots,K$.
Each group $k$ has continuum measure normalised to one within the group and
observes a private signal
\begin{equation}
   s_{k} \;=\; v + \varepsilon_{k}, \qquad
   \varepsilon_{k}\sim\mathcal{N}(0,1/\tau_{k}),
\label{eq:signal}
\end{equation}
where $\varepsilon_{k}$ is independent across $k$ and independent of $v$.
The prior is uniform, $\PP(v=1)=\tfrac12$. It will be convenient to work
with the centred signal $u_{k}\equiv s_{k}-\tfrac12$, which satisfies
\(u_{k}\mid v=1\sim\mathcal{N}(\tfrac12,1/\tau_{k})\) and
\(u_{k}\mid v=0\sim\mathcal{N}(-\tfrac12,1/\tau_{k})\). The signal density is
\begin{equation}
   f_{v}(u_{k}) \;=\; \sqrt{\tfrac{\tau_{k}}{2\pi}}\,
   \exp\!\left(-\tfrac{\tau_{k}}{2}\bigl(u_{k}-v+\tfrac12\bigr)^{2}\right),
\label{eq:density}
\end{equation}
and the private log-likelihood ratio is
\(\ln[f_{1}(u_{k})/f_{0}(u_{k})] = \tau_{k}u_{k}\).

By Bayes' rule, the private posterior is
\begin{equation}
   \mu_{k} \;\equiv\; \PP(v=1\mid u_{k}) \;=\; \Lam(\tau_{k}u_{k}),
\label{eq:posterior}
\end{equation}
where $\Lam(z)=1/(1+e^{-z})$ is the logistic function. The Bayesian
sufficient statistic for the joint likelihood ratio across all $K$ signals is
\begin{equation}
   \Tstar \;\equiv\; \sum_{k=1}^{K}\tau_{k}u_{k},
\label{eq:Tstar}
\end{equation}
since $\ln\PP(v=1\mid u_{1},\ldots,u_{K})/\PP(v=0\mid u_{1},\ldots,u_{K})=\Tstar$.
Throughout, $\logit(p)\equiv\ln[p/(1-p)]$ denotes the inverse logistic.

\subsection{Preferences and demands}

Each agent in group $k$ has initial wealth $W_{k}>0$ and maximises expected
utility over terminal wealth
\begin{equation}
   W_{k} - x_{k}p + x_{k}v,
\label{eq:wealth}
\end{equation}
where $x_{k}$ is her holding of the risky asset and $p\in(0,1)$ is the
equilibrium price. Two preference families are central. The CRRA family
parametrised by $\gamma\in(0,\infty)$ is
\begin{equation}
   U_{\mathrm{CRRA}}(W) \;=\; \frac{W^{1-\gamma}}{1-\gamma},
\label{eq:CRRA}
\end{equation}
with the logarithmic limit $U(W)=\ln W$ at $\gamma=1$. The CARA family
parametrised by $\alpha>0$ is
\begin{equation}
   U_{\mathrm{CARA}}(W) \;=\; -\exp(-\alpha W).
\label{eq:CARA}
\end{equation}
We will refer informally to ``CARA'' as the $\gamma\to\infty$ limit of CRRA;
this is justified because the binary structure of the asset implies that
optimal demands depend on wealth only through a CARA-like log-odds quotient
in that limit.

\paragraph{Demand functions.}
Pointwise maximisation of expected utility yields closed-form demands.

\begin{lemma}[CARA demand]\label{lem:cara-demand}
Under preferences \eqref{eq:CARA}, the optimal holding of an agent with
posterior $\mu_{k}$ at price $p$ is
\begin{equation}
   x_{k}^{\mathrm{CARA}}(\mu_{k},p) \;=\;
   \frac{1}{\alpha}\,\bigl[\logit(\mu_{k})-\logit(p)\bigr].
\label{eq:cara-demand}
\end{equation}
\end{lemma}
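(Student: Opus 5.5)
The argument is a direct first-order-condition computation for a two-outcome lottery. Because $v\in\{0,1\}$, terminal wealth \eqref{eq:wealth} takes only two values: $W_{k}+x_{k}(1-p)$ in state $v=1$ and $W_{k}-x_{k}p$ in state $v=0$. Expected CARA utility is therefore
\[
   \Phi(x) \;=\; -\mu_{k}\,e^{-\alpha(W_{k}+x(1-p))}
   -(1-\mu_{k})\,e^{-\alpha(W_{k}-xp)}.
\]
I will first note that $\Phi$ is strictly concave on $\R$, since it is a positive combination of concave functions $-e^{-(\text{affine in }x)}$ with nonzero slopes when $p\in(0,1)$. Hence any stationary point is the unique global maximiser. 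This is also where the assumption $p\in(0,1)$ matters: it guarantees an interior optimum, because $\Phi(x)\to-\infty$ as $x\to\pm\infty$.

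Next I will write the first-order condition:
\[
   \mu_{k}(1-p)\,e^{-\alpha(W_{k}+x(1-p))} \;=\; (1-\mu_{k})\,p\,e^{-\alpha(W_{k}-xp)}.
\]
The common factor $e^{-\alpha W_{k}}$ cancels, which is the usual absence of a wealth effect under CARA. Rearranging gives
\[
   e^{\alpha x} \;=\; \frac{\mu_{k}}{1-\mu_{k}}\cdot\frac{1-p}{p},
\]
because the exponents differ by $-\alpha x(1-p)-\alpha xp=-\alpha x$. Taking logarithms yields $\alpha x=\logit(\mu_{k})-\logit(p)$, which is \eqref{eq:cara-demand}.

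No step is a genuine obstacle. The only points needing care are checking that the exponents combine exactly to $\alpha x$, and recording that $\mu_{k}\in(0,1)$. The latter holds because $\mu_{k}=\Lam(\tau_{k}u_{k})$ by \eqref{eq:posterior}, so $\logit(\mu_{k})$ is finite. Together these make the demand well defined and unique for every signal realisation.
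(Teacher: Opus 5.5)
Your proposal is correct and follows the paper's proof essentially step for step: two-state expected utility, the first-order condition, cancelling $e^{-\alpha W_{k}}$, collecting the exponents into $e^{\alpha x}=\frac{\mu_{k}(1-p)}{(1-\mu_{k})p}$, and taking logarithms. Your strict-concavity and coercivity check is a small addition the paper omits, and it properly justifies that the stationary point is the unique optimum. The computation only needs $\mu_{k}\in(0,1)$, so it applies equally to REE posteriors and not just to $\Lam(\tau_{k}u_{k})$.
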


\begin{lemma}[CRRA demand]\label{lem:crra-demand}
Under preferences \eqref{eq:CRRA} with $\gamma\in(0,\infty)$, the optimal
holding of an agent with wealth $W_{k}$, posterior $\mu_{k}$, and price $p$ is
\begin{equation}
   x_{k}^{\mathrm{CRRA}}(\mu_{k},p) \;=\;
   \frac{W_{k}\,(R_{k}-1)}{(1-p)+R_{k}\,p}, \qquad
   R_{k} \;\equiv\; \exp\!\left(\frac{\logit(\mu_{k})-\logit(p)}{\gamma}\right).
\label{eq:crra-demand}
\end{equation}
At $\gamma=1$ this reduces to the log-utility demand
$x_{k} = W_{k}(\mu_{k}-p)/[p(1-p)]$.
\end{lemma}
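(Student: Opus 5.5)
The plan is to maximise expected utility pointwise in $x$ and solve the first-order condition in closed form, treating the general $\gamma$ case first and then specialising to $\gamma=1$.

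\textbf{Setting up the objective.} Terminal wealth is $W_k + x(1-p)$ in state $v=1$ and $W_k - xp$ in state $v=0$. The agent therefore maximises
\[
\mu_k\,U\bigl(W_k+x(1-p)\bigr)+(1-\mu_k)\,U\bigl(W_k-xp\bigr)
\]
over the open interval $x\in\bigl(-W_k/(1-p),\,W_k/p\bigr)$, on which both wealth levels are positive.

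\textbf{Existence and uniqueness of the maximiser.} Since $U$ is strictly concave, the objective is strictly concave in $x$. Because $U'(W)=W^{-\gamma}\to\infty$ as $W\downarrow0$, the derivative of the objective tends to $+\infty$ at the left endpoint and to $-\infty$ at the right endpoint. Hence there is a unique interior maximiser, characterised by the first-order condition
\[
\mu_k(1-p)\,W_1^{-\gamma}=(1-\mu_k)\,p\,W_0^{-\gamma},
\]
where $W_1$ and $W_0$ denote wealth in the two states.

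\textbf{Solving the first-order condition.} Rearranging gives
\[
(W_1/W_0)^{\gamma}=\frac{\mu_k(1-p)}{(1-\mu_k)p}=\exp\bigl(\logit\mu_k-\logit p\bigr),
\]
so $W_1=R_kW_0$ with $R_k$ exactly as defined in \eqref{eq:crra-demand}. Substituting the wealth expressions yields the linear equation
\[
W_k+x(1-p)=R_k(W_k-xp),
\]
which solves to $x\bigl[(1-p)+R_kp\bigr]=W_k(R_k-1)$. This is \eqref{eq:crra-demand}.

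\textbf{Log-utility case.} At $\gamma=1$ the same first-order condition holds with $U'(W)=1/W$, so
\[
R_k=\frac{\mu_k(1-p)}{(1-\mu_k)p}.
\]
Multiplying the numerator and denominator of \eqref{eq:crra-demand} by $(1-\mu_k)p$ gives a numerator of $W_k\bigl[\mu_k(1-p)-(1-\mu_k)p\bigr]=W_k(\mu_k-p)$. The denominator becomes $(1-p)p(1-\mu_k)+\mu_k(1-p)p=p(1-p)$. The result is $x_k=W_k(\mu_k-p)/[p(1-p)]$.

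\textbf{Main obstacle.} The only delicate step is justifying interiority, namely that the Inada condition rules out corner solutions for every $\gamma\in(0,\infty)$, including $\gamma<1$, where $U$ is bounded below at zero wealth but $U'$ still diverges. Everything after that is algebra.
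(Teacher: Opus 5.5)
Your proof is correct and follows the paper's route: write the first-order condition, express the ratio of state-contingent marginal utilities as $(W_1/W_0)^{\gamma}$ to get $W_1/W_0=R_k$, and cross-multiply to obtain the linear equation in $x$. Your strict-concavity and Inada argument for a unique interior optimum (including the $\gamma<1$ case), and your explicit $\gamma=1$ simplification, are useful additions that the paper's proof leaves implicit.
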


\begin{lemma}[CARA as the $\gamma\to\infty$ limit of CRRA]\label{lem:cara-limit}
Fix $W_{k}>0$, $\mu_{k}\in(0,1)$, and $p\in(0,1)$. The CRRA demand
\eqref{eq:crra-demand} satisfies
\begin{equation}
   \lim_{\gamma\to\infty}\,\gamma\cdot x_{k}^{\mathrm{CRRA}}(\mu_{k},p;\gamma)
   \;=\; W_{k}\cdot\bigl[\logit(\mu_{k})-\logit(p)\bigr],
\label{eq:cara-as-crra-limit}
\end{equation}
which is the CARA demand \eqref{eq:cara-demand} with absolute risk
aversion $\alpha=1/W_{k}$. Equivalently, the rescaled demand
$\hat x_{k}^{\gamma}\equiv\gamma\cdot x_{k}^{\mathrm{CRRA}}/W_{k}$ converges
pointwise on $(0,1)\times(0,1)$ to $x_{k}^{\mathrm{CARA}}|_{\alpha=1}$ as
$\gamma\to\infty$.
\end{lemma}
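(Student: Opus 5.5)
The argument is a first-order Taylor expansion in $1/\gamma$ applied directly to the closed form \eqref{eq:crra-demand} from Lemma~\ref{lem:crra-demand}. Fix $W_{k}$, $\mu_{k}$, $p$ and write $\Delta\equiv\logit(\mu_{k})-\logit(p)$, a finite real number because $\mu_{k},p\in(0,1)$. Set $h\equiv 1/\gamma$. Then $R_{k}=e^{h\Delta}$, and
\[
\gamma\,x_{k}^{\mathrm{CRRA}} \;=\; W_{k}\,\frac{e^{h\Delta}-1}{h}\cdot\frac{1}{(1-p)+e^{h\Delta}p}.
\]
I would then take the limit $h\downarrow 0$ of each factor separately.

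\emph{First factor.} We have $(e^{h\Delta}-1)/h\to\Delta$. This is the derivative of $h\mapsto e^{h\Delta}$ at $h=0$; equivalently, $|e^{h\Delta}-1-h\Delta|\le \tfrac12 h^{2}\Delta^{2}e^{h|\Delta|}$. The case $\Delta=0$ is trivial, since both sides are then zero.

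\emph{Second factor.} The denominator $(1-p)+e^{h\Delta}p$ tends to $(1-p)+p=1$. It is strictly positive for every $h$, so the limit of the quotient is the quotient of the limits.

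Multiplying the two limits gives $\lim_{\gamma\to\infty}\gamma x_{k}^{\mathrm{CRRA}}=W_{k}\Delta$, which is \eqref{eq:cara-as-crra-limit}. Comparing with Lemma~\ref{lem:cara-demand}, $W_{k}\Delta$ is exactly $x_{k}^{\mathrm{CARA}}$ with $1/\alpha=W_{k}$, i.e.\ $\alpha=1/W_{k}$. Dividing by $W_{k}$ gives the pointwise convergence $\hat x_{k}^{\gamma}\to\Delta=x_{k}^{\mathrm{CARA}}|_{\alpha=1}$ at every $(\mu_{k},p)\in(0,1)^{2}$.

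There is no real obstacle, because only pointwise convergence is claimed. The one point needing care is the handling of the log case $\gamma=1$ and of the $\gamma\in(0,1)$ versus $\gamma>1$ branches of \eqref{eq:CRRA}. I would note that the limit only involves large $\gamma$ and that formula \eqref{eq:crra-demand} is uniform in $\gamma$, so no case split is needed.

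If one wanted more, the same expansion gives the rate. Expanding $\tfrac{e^{h\Delta}-1}{h}=\Delta+\tfrac12 h\Delta^{2}+O(h^{2})$ and the denominator $=1+hp\Delta+O(h^{2})$ yields
\[
\hat x_{k}^{\gamma}=\Delta+\frac{\Delta^{2}}{\gamma}\bigl(\tfrac12-p\bigr)+O(\gamma^{-2}).
\]
The error term is locally uniform on compact subsets of $(0,1)^{2}$, since $\Delta$ is bounded there. This first-order correction in $1/\gamma$ is the source of the non-log-odds-linearity exploited in the later results.
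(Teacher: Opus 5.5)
Your proof is correct and follows the paper's argument. Both expand $R_k=e^{z/\gamma}$ around $z/\gamma=0$, note that $R_k-1=z/\gamma+O(\gamma^{-2})$ while the denominator $(1-p)+R_kp\to 1$, and conclude $\gamma x_k^{\mathrm{CRRA}}\to W_k z$. Your explicit correction term $\Delta^{2}(\tfrac12-p)/\gamma$ is a slightly sharper form of the paper's $[1+O(1/\gamma)]$ remainder.
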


\begin{proof}
The proof is in Appendix~\ref{app:proofs}. The convergence is uniform
on compact subsets of $(0,1)^{2}$ and preserves the market-clearing
identity.
\end{proof}

The CRRA demand thus shrinks at rate $1/\gamma$ as risk aversion
increases, and the rescaled demand reproduces CARA exactly in the limit.
This justifies the standard parametrisation $\gamma=\infty$ as the CARA
boundary of the CRRA family, gives content to the
``smooth-transition-to-CARA'' phrasing of
Proposition~\ref{prop:smooth}, and grounds the Alignment Principle of
Section~\ref{sec:agg-space}: the no-learning aggregator $\Psi^{\gamma}$
defined by market clearing under CRRA demand converges, in the rescaled
sense above, to the CARA aggregator \eqref{eq:agg-cara} as
$\gamma\to\infty$. The price-level rescaling is benign for the
informational question because $1-R^{2}(\Tstar;p_{\gamma})$ depends only
on the price function $p_{\gamma}$, not on the demand magnitudes that
clear it.

The decisive feature of \eqref{eq:cara-demand} is that the CARA demand is
linear in the difference $\logit(\mu_{k})-\logit(p)$. The CRRA demand is
\emph{nonlinear} in this difference; in the small-bet limit it is
approximately linear in the probability gap $\mu_{k}-p$, but the small-bet
approximation is local and the global market-clearing condition cannot be
reduced to a log-odds aggregator.

\subsection{Equilibrium}

A no-learning equilibrium of the model is a price $p$ that clears the market
when each group $k$ uses only its own private posterior $\mu_{k}$:
\begin{equation}
   \sum_{k=1}^{K} x_{k}\!\left(\mu_{k},p\right) \;=\; 0.
\label{eq:no-learning}
\end{equation}

\begin{lemma}[Existence and uniqueness of the no-learning
equilibrium]\label{lem:existence}
For any $\gamma\in(0,\infty]$, $K\geq 1$, $\tau_{k}>0$, $W_{k}>0$, and any
posterior vector $(\mu_{1},\ldots,\mu_{K})\in(0,1)^{K}$, the
market-clearing equation \eqref{eq:no-learning} has a unique solution
$p^{\star}\in(0,1)$.
\end{lemma}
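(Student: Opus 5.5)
The plan is to treat the aggregate excess demand $Z(p)\equiv\sum_{k=1}^{K}x_{k}(\mu_{k},p)$ as a continuous function of $p$ on $(0,1)$. I will show that it is strictly decreasing and that it tends to $+\infty$ at one end of the interval and to $-\infty$ at the other. Existence then follows from the intermediate value theorem and uniqueness from strict monotonicity. Both properties hold group by group, so the heterogeneity in $(\tau_{k},W_{k})$ plays no role. The parameter $\tau_{k}$ enters only through $\mu_{k}$, which is held fixed.

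For the CARA case ($\gamma=\infty$), Lemma~\ref{lem:cara-demand} gives $x_{k}=\alpha^{-1}[\logit\mu_{k}-\logit p]$. This is strictly decreasing in $p$ because $\logit$ is strictly increasing. It tends to $+\infty$ as $p\downarrow 0$ and to $-\infty$ as $p\uparrow 1$. In fact the root is explicit: $\logit p^{\star}$ is the (risk-tolerance-weighted) average of the $\logit\mu_{k}$, and this gives uniqueness directly.

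For $\gamma\in(0,\infty)$ I use Lemma~\ref{lem:crra-demand}. Write $R_{k}(p)=\exp\{(\logit\mu_{k}-\logit p)/\gamma\}$, which is strictly decreasing in $p$ and ranges over $(0,\infty)$. The demand is $x_{k}=W_{k}(R_{k}-1)/(1-p+R_{k}p)$, and I check its monotonicity by differentiation. The derivative with respect to $R$ at fixed $p$ is $W_{k}/(1-p+Rp)^{2}>0$. The derivative with respect to $p$ at fixed $R$ is $-W_{k}(R-1)^{2}/(1-p+Rp)^{2}\le 0$. Since $dR_{k}/dp<0$, the chain rule gives $dx_{k}/dp<0$ strictly.

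For the boundary behaviour, take $p\downarrow 0$: then $R_{k}\to\infty$, and $R_{k}p=\exp\{(\logit\mu_{k})/\gamma\}\,p^{1-1/\gamma}(1-p)^{1/\gamma}$. So $x_{k}\to W_{k}/p$-type growth when $\gamma\le1$, and a finite positive limit when $\gamma>1$. In either case $x_{k}$ is eventually positive. Symmetrically, $x_{k}$ is eventually negative as $p\uparrow1$.

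The main obstacle is that for $\gamma>1$ the demands stay bounded at the endpoints. This is the self-bounding feature mentioned in the footnote: the limits are $W_{k}/p$ and $-W_{k}/(1-p)$ evaluated in the appropriate regime. So I cannot rely on divergence to $\pm\infty$. What I actually need is only a sign change of $Z$. The cleanest route is to evaluate $Z$ at two interior points. At $p=\min_{k}\mu_{k}$, every $R_{k}\ge1$, so every $x_{k}\ge0$ and hence $Z\ge0$. At $p=\max_{k}\mu_{k}$, every $R_{k}\le1$, so $Z\le0$.

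If $\min_{k}\mu_{k}=\max_{k}\mu_{k}$, the common value is the unique root. Otherwise strict monotonicity makes both inequalities strict, and the intermediate value theorem yields a unique $p^{\star}\in[\min_{k}\mu_{k},\max_{k}\mu_{k}]\subset(0,1)$. Outside that interval $Z$ has a strict sign, so the root is unique on all of $(0,1)$. The same bracketing argument covers CARA and avoids any asymptotics.
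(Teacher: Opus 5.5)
Your proof is correct, and it reaches existence by a different route from the paper. The paper argues from the endpoints of $(0,1)$. It asserts that each CRRA demand tends to $+W_{k}$ as $p\to0^{+}$ and to $-W_{k}$ as $p\to1^{-}$, so $Z$ changes sign. It gets strict monotonicity from the single remark that $R_{k}$ falls in $p$. You instead bracket the root at the interior points $p=\min_{k}\mu_{k}$ and $p=\max_{k}\mu_{k}$, where all $R_{k}\geq1$ (respectively $\leq1$). You establish monotonicity by the explicit decomposition $\partial x/\partial R=W_{k}/(1-p+Rp)^{2}>0$ and $\partial x/\partial p|_{R}=-W_{k}(R-1)^{2}/(1-p+Rp)^{2}\leq0$. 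Your route buys three things:
\begin{enumerate}[label=(\roman*),topsep=2pt,itemsep=2pt]
\item It localises $p^{\star}\in[\min_{k}\mu_{k},\max_{k}\mu_{k}]$, a fact the paper later invokes without proof in the trade-volume argument for Proposition~\ref{prop:welfare}.
\item It accounts for the direct dependence of the CRRA demand on $p$, not only the dependence through $R_{k}$; the paper's one-line monotonicity argument passes over this.
\item It needs no asymptotics. This matters because the paper's stated endpoint limits $\pm W_{k}$ are not right: the demands actually diverge, although the sign change the paper needs does hold.
\end{enumerate}
The paper's route is shorter and works once those limits are corrected.

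Your own asymptotic aside has the same problem and should be deleted. Write $x_{k}=W_{k}(R_{k}-1)/\bigl(1+(R_{k}-1)p\bigr)$. For $\gamma>1$ and $p\to0^{+}$, $R_{k}\to\infty$ but $R_{k}p\to0$, so the denominator tends to $1$ and $x_{k}\to+\infty$ at rate $p^{-1/\gamma}$. As $p\to1^{-}$, the numerator tends to $-W_{k}$ while $(1-p)+R_{k}p\to0^{+}$, so $x_{k}\to-\infty$ for every finite $\gamma$. The ``self-bounding'' in the footnote only means $x_{k}\in\bigl(-W_{k}/(1-p),\,W_{k}/p\bigr)$ at each fixed $p$, and those bounds blow up at the endpoints. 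Nothing in your bracketing argument uses the aside, so the proof stands.

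One last small point. When the $\mu_{k}$ are not all equal, $Z(\min_{k}\mu_{k})>0$ holds because some agent has $\mu_{k}>\min_{j}\mu_{j}$ and hence $x_{k}>0$, not because $Z$ is monotone. In any case the weak inequalities already suffice for the intermediate value theorem.
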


\begin{proof}
Let $Z(p)\equiv\sum_{k=1}^{K}x_{k}(\mu_{k},p)$. Both \eqref{eq:cara-demand}
and \eqref{eq:crra-demand} are continuous on $(0,1)$ in $p$ at fixed
$\mu_{k}$ and strictly decreasing in $p$ (raising $p$ raises $\logit p$,
which lowers $R_{k}=\exp([\logit\mu_{k}-\logit p]/\gamma)$ in the CRRA
case and lowers $\logit\mu_{k}-\logit p$ directly in the CARA case);
hence $Z$ is continuous and strictly decreasing on $(0,1)$. As
$p\to 0^{+}$, $\logit p\to-\infty$, $R_{k}\to+\infty$, and each
$x_{k}\to+W_{k}$ in the CRRA case (or $x_{k}\to+\infty$ in the CARA case),
so $Z(p)\to$ a strictly positive value. As $p\to 1^{-}$, $\logit p\to+\infty$,
$R_{k}\to 0$, each $x_{k}\to-W_{k}$ in the CRRA case (or
$x_{k}\to-\infty$ in the CARA case), so $Z(p)\to$ a strictly negative
value. By the intermediate value theorem $Z$ has at least one root on
$(0,1)$; strict monotonicity delivers uniqueness.
\end{proof}

A rational-expectations equilibrium (REE) is a price function $p(u_{1},
\ldots,u_{K})$ such that each group $k$ uses the conditional posterior
\begin{equation}
   \mu_{k}^{\mathrm{REE}} \;\equiv\; \PP\!\left(v=1\mid u_{k},\,p(u_{1},\ldots,u_{K})\right)
\label{eq:ree-posterior}
\end{equation}
in place of $\mu_{k}$ and the market clears at every realisation.
Section~\ref{sec:ree} develops the contour fixed-point characterisation of
this equilibrium. Throughout, we focus on equilibria in which $p$ depends on
the signal vector only through prices the agents can observe (a standard
internal consistency requirement).

\section{The Knife-Edge}\label{sec:knife-edge}

This section establishes the paper's central result: CARA is the
unique preference class that generates full revelation. Any departure
from CARA---however small---breaks it.

\subsection{The aggregation space and the alignment principle}\label{sec:agg-space}

The unifying observation of this paper is that no-learning equilibrium prices
aggregate private posteriors in a space determined by the linearisation of
demand, and that whether prices fully reveal private information reduces to
whether that space coincides with the space in which the Bayesian sufficient
statistic lives. To make this precise, define the \emph{aggregation map}
$\Psi:[0,1]^{K}\to[0,1]$ implicitly by the market-clearing condition: $\Psi$
returns the price $p$ that solves $\sum_{k}x_{k}(\mu_{k},p)=0$ given the
posterior vector $(\mu_{1},\ldots,\mu_{K})$. Under CARA, demand is linear in
$\logit\mu_{k}-\logit p$, and \eqref{eq:no-learning} solves to
\begin{equation}
   \logit p \;=\; \sum_{k=1}^{K} w_{k}\,\logit\mu_{k},
   \qquad w_{k} \;\propto\; 1/\alpha_{k},
\label{eq:agg-cara}
\end{equation}
which is a weighted average in log-odds space. Under log utility, demand
$x_{k}=W_{k}(\mu_{k}-p)/[p(1-p)]$ is linear in $\mu_{k}-p$, and clearing
delivers
\begin{equation}
   p \;=\; \sum_{k=1}^{K} w_{k}\,\mu_{k},
   \qquad w_{k} \;\propto\; W_{k},
\label{eq:agg-log}
\end{equation}
a weighted average in probability space. The general CRRA case with
$\gamma\in(0,\infty)$ interpolates between these limits but, for any finite
$\gamma$, the implicit aggregator $\Psi$ is non-affine in log-odds.

The Bayesian sufficient statistic for the joint posterior is, by
\eqref{eq:Tstar}, a sum in log-odds space:
$\logit\PP(v=1\mid u_{1},\ldots,u_{K})=\Tstar$. Whether the equilibrium
price reveals the joint posterior therefore reduces to whether the
aggregation space induced by demand is informationally equivalent to
log-odds. The CARA aggregator \eqref{eq:agg-cara} matches the sufficient
statistic exactly: $\logit p$ is a linear combination of $\logit\mu_{k}$,
hence of $\tau_{k}u_{k}$, hence of the components of $\Tstar$. The log
aggregator \eqref{eq:agg-log} does not, because the logistic transform of
the average is not the average of the logistic transforms. We summarise
this observation as the
\begin{quote}
\itshape Alignment principle. Full revelation of the Bayesian sufficient
statistic in a no-noise equilibrium requires that the demand functional
aggregate posteriors linearly in log-odds. Within the CRRA preference
class, only $\gamma=\infty$ (CARA) satisfies this requirement; every
other $\gamma$ produces an aggregator that is non-affine in log-odds and
therefore strictly partially revealing.
\end{quote}
The remainder of Section~\ref{sec:knife-edge} formalises both halves of
this principle, and Section~\ref{sec:ree} verifies that the principle
survives the introduction of price-based learning.

\subsection{CARA implies full revelation}

\begin{proposition}[CARA log-odds aggregation and full revelation]\label{prop:cara}
Fix any $K\geq 1$ and any parameter vector
$(\alpha_{k},\tau_{k},W_{k})_{k=1}^{K}$. The unique no-learning equilibrium
under CARA preferences aggregates private posteriors in log-odds:
\begin{equation}
   \logit p \;=\; \sum_{k=1}^{K} w_{k}\,\logit\mu_{k}
              \;=\; \sum_{k=1}^{K} w_{k}\,\tau_{k}\,u_{k},
   \qquad w_{k} \;=\; \frac{\alpha^{-1}_{k}}{\sum_{j}\alpha^{-1}_{j}}.
\label{eq:cara-clearing}
\end{equation}
When risk aversion is homogeneous across groups, $w_{k}=1/K$ and
$\logit p=\Tstar/K$ is a known scalar multiple of the Bayesian sufficient
statistic $\Tstar$ of \eqref{eq:Tstar}. The equilibrium is then fully
revealing in the sense that
$\PP(v=1\mid p)=\PP(v=1\mid u_{1},\ldots,u_{K})$ for almost every realisation.
With heterogeneous $\alpha_{k}$ the no-learning price reveals a different
linear combination of the signals --- a fourth route to partial revelation,
discussed in Section~\ref{sec:mechanisms} --- but, because the candidate
fully-revealing price function $p=\Lam(\Tstar)$ satisfies market clearing
identically (zero trade) regardless of $(\alpha_{k})$, the
\emph{rational-expectations} equilibrium remains fully revealing for
arbitrary heterogeneous CARA parameters.
\end{proposition}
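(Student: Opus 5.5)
The argument has three parts: the no-learning clearing formula, full revelation under homogeneous risk aversion, and full revelation of the REE under heterogeneous $\alpha_{k}$. For the first, I substitute the CARA demand of Lemma~\ref{lem:cara-demand}, with group-specific $\alpha_{k}$, into the clearing condition \eqref{eq:no-learning}. This gives
\[
\sum_{k}\alpha_{k}^{-1}\bigl[\logit\mu_{k}-\logit p\bigr]=0 .
\]
The equation is linear in the scalar $\logit p$, and solving it gives $\logit p=\sum_{k}w_{k}\logit\mu_{k}$ with $w_{k}=\alpha_{k}^{-1}/\sum_{j}\alpha_{j}^{-1}$. Since $\logit$ is a bijection from $(0,1)$ onto $\R$, this determines a unique $p\in(0,1)$, in agreement with Lemma~\ref{lem:existence}. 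The posterior formula \eqref{eq:posterior} gives $\logit\mu_{k}=\tau_{k}u_{k}$, which yields the second equality in \eqref{eq:cara-clearing}. Wealth $W_{k}$ does not enter, because CARA demand is wealth-free, so the formula holds for arbitrary $(\alpha_{k},\tau_{k},W_{k})$.

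For the homogeneous case, $\alpha_{k}=\alpha$ gives $w_{k}=1/K$, so $\logit p=\Tstar/K$. The map $\Tstar\mapsto\Lam(\Tstar/K)$ is strictly increasing and known, so $\sigma(p)=\sigma(\Tstar)$. Because $\Tstar$ is the log posterior odds given all signals, by \eqref{eq:Tstar}, we have $\PP(v=1\mid u_{1},\dots,u_{K})=\Lam(\Tstar)$, which is $\sigma(\Tstar)$-measurable. Hence
\[
\PP(v=1\mid p)=\E\bigl[\PP(v=1\mid u)\mid \Tstar\bigr]=\Lam(\Tstar)=\Lam(K\logit p),
\]
and this equals the full-information posterior. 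It holds for every realisation, so in particular almost surely.

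For the heterogeneous REE claim, I verify directly that the price function $p(u)=\Lam(\Tstar)$ is an REE. The price reveals $\Tstar$, and $\Tstar$ is sufficient for the full signal vector. Adding one's own $u_{k}$ therefore does not change the posterior, so $\mu_{k}^{\mathrm{REE}}=\PP(v=1\mid u_{k},\Tstar)=\Lam(\Tstar)=p$ for every $k$. This step uses the tower property and the fact that $\PP(v=1\mid u)$ depends on $u$ only through $\Tstar$. Every CARA demand is then $\alpha_{k}^{-1}[\logit p-\logit p]=0$, so the market clears identically at every realisation whatever the $\alpha_{k}$. The equilibrium is fully revealing by construction, and it satisfies the internal-consistency requirement because agents condition on $p$ itself.

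The main point of care is this last step. The argument is existence by verification: it shows that the fully revealing price function is an REE, not that it is the only REE. I would state the claim that way and note the measurability point: conditioning on $\sigma(p)$ equals conditioning on $\sigma(\Tstar)$ because $\Lam$ is invertible. This is what makes the posterior computation rigorous despite $\sigma(u_{k},p)$ being larger than $\sigma(\Tstar)$.
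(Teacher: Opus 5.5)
Your proposal is correct and follows the paper's proof essentially step for step: you substitute the wealth-free CARA demand into clearing and solve the linear equation in $\logit p$, you use invertibility of $\Lam(\cdot/K)$ together with sufficiency of $\Tstar$ in the homogeneous case, and you verify directly that $p=\Lam(\Tstar)$ yields common posteriors and zero demands for arbitrary $(\alpha_{k})$. Your measurability and existence-versus-uniqueness remarks are, if anything, more careful than the paper, which likewise only verifies existence; the one clause you leave implicit, that with heterogeneous $\alpha_{k}$ the no-learning price reveals a different linear combination, follows at once from your first step, since $(w_{k}\tau_{k})_{k}$ is proportional to $(\tau_{k})_{k}$ only when all $\alpha_{k}$ coincide.
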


The proof, in Appendix~\ref{app:proofs}, exploits the linearity of
\eqref{eq:cara-demand} to reduce \eqref{eq:no-learning} to a linear equation
in $\logit p$, then composes with \eqref{eq:posterior} to obtain the
weighted log-odds form. Under homogeneous risk aversion, full revelation
follows because $\Tstar$ is a sufficient statistic for the joint posterior
$\PP(v=1\mid u_{1},\ldots,u_{K})$ and any affine function of $\Tstar$ is
informationally equivalent. The log-odds aggregation form generalises
\citet{DeMarzoSkiadas1998} from the Gaussian-asset CARA model to a binary
asset, and a parallel extension obtains for any signal distribution in the
exponential family with CARA preferences (\citealp{BreonDrish2015}); the
proof in Appendix~\ref{app:proofs} sketches this extension.

\begin{remark}[CARA full revelation for general payoff distributions]%
\label{rem:general-payoff}
The binary payoff $v\in\{0,1\}$ is adopted for tractability, but the
full-revelation result under CARA holds for arbitrary payoff
distributions. The argument requires only two properties of CARA:
\begin{enumerate}[label=(\roman*),topsep=2pt,itemsep=2pt]
\item \emph{Wealth cancellation.} Under CARA, the first-order condition
for a general risky payoff $v$ with price $p$ is
$\E\bigl[(v-p)\,e^{-\alpha x(v-p)} \mid \mathcal{F}_{k}\bigr]=0$,
where $\mathcal{F}_{k}$ is agent~$k$'s information set.  The wealth
$W_{k}$ does not appear: the factor $e^{-\alpha W_{k}}$ cancels from
both sides.  Hence the optimal demand $x_{k}$ depends only on the
conditional distribution of~$v$ given $\mathcal{F}_{k}$, not on
$W_{k}$.

\item \emph{Identical posteriors imply zero trade.}  Let $T$ be a
sufficient statistic for $v$ given the joint signal
$(s_{1},\ldots,s_{K})$.  If the price function is invertible in $T$
--- that is, observing $p$ reveals $T$ --- then every agent's posterior
satisfies $\PP(v \in \cdot \mid s_{k}, T) = \PP(v \in \cdot \mid T)$
by sufficiency.  All agents share the same conditional distribution.
Because the CARA demand depends only on the conditional distribution
(by~(i)) and the demands are evaluated at a common price, all agents
solve identical optimisation problems.  Market clearing
$\sum_{k}x_{k}=0$ with zero supply then forces $x_{k}=0$ for each $k$
(otherwise all agents would trade in the same direction).  The
equilibrium price is $p^{*}(T) = \E[v \mid T]$, which satisfies the
first-order condition at $x=0$:
$\E[(v-p^{*})\mid T] = \E[v\mid T]-p^{*} = 0$.
\end{enumerate}
The fully-revealing REE $p=\E[v\mid T]$, $x_{k}\equiv 0$ thus exists
for any payoff distribution, any signal structure, and any
configuration of CARA parameters $(\alpha_{k},W_{k})_{k=1}^{K}$.  The
binary and Gaussian cases are special instances: $\E[v\mid T] =
\Lam(\Tstar)$ for binary, $\E[v\mid T] = \bar v + \sigma_{v}^{2}
\sum_{k}\tau_{k}u_{k}/(\sigma_{v}^{-2}+\sum_{k}\tau_{k})$ for
Gaussian.  The non-CARA partial-revelation result, by contrast, does
depend on the payoff structure through the functional form of the
demand --- but the \emph{mechanism} (aggregation-space mismatch) is
universal.
\end{remark}

\begin{remark}[CARA as the unique log-odds aggregator within CRRA]\label{rem:unique-cara}
Within the CRRA preference class indexed by $\gamma\in(0,\infty]$, the
demand function is $x_{k}=W_{k}(R_{k}-1)/[(1-p)+R_{k}p]$ with
$R_{k}=\exp([\logit\mu_{k}-\logit p]/\gamma)$, which is linear in
$\logit\mu_{k}-\logit p$ if and only if $\gamma=\infty$, that is, in the
CARA limit. Consequently, CARA is the unique CRRA preference whose
no-learning aggregator $\Psi$ takes the form \eqref{eq:agg-cara} of a
linear functional of log-odds, and so the unique CRRA preference for which
\(\PP(v=1\mid p)=\PP(v=1\mid u_{1},\ldots,u_{K})\) for every realisation
under homogeneous parameters. The next theorem strengthens this
observation by removing the CRRA restriction: among all smooth utilities,
CARA is the unique class that produces log-odds-linear demand on a
binary asset.
\end{remark}

\begin{theorem}[CARA uniqueness for log-odds-linear demand]\label{thm:cara-unique}
Let $U:\mathcal{D}\to\R$ be three times continuously differentiable on an
open interval $\mathcal{D}\subset\R$ with $U'>0$ and $U''<0$, and consider
the binary-asset trade $\max_{x}\E[U(W+(v-p)x)]$ for $v\in\{0,1\}$ with
$\PP(v=1)=\mu\in(0,1)$, price $p\in(0,1)$, and wealth $W\in\mathcal{D}$
such that the optimal $x^{\star}(\mu,p,W)$ leaves both $W+(1-p)x^{\star}$
and $W-px^{\star}$ in $\mathcal{D}$. Suppose the optimal demand admits the
representation
\begin{equation}
   x^{\star}(\mu,p,W) \;=\; \frac{\logit\mu-\logit p}{\alpha(p,W)}
\label{eq:linear-demand-rep}
\end{equation}
for some positive function $\alpha:(0,1)\times\mathcal{D}\to(0,\infty)$,
holding for every $(\mu,p,W)$ in the admissible domain. Then $U$ is CARA:
there exist constants $\alpha_{0}>0$ and $a,b\in\R$ with $b>0$ such that
\begin{equation}
   U(W) \;=\; a - b\exp(-\alpha_{0}W),
\label{eq:CARA-form}
\end{equation}
and $\alpha(p,W)\equiv\alpha_{0}$ is constant. Conversely, if $U$ takes
the form \eqref{eq:CARA-form}, the optimal demand on the binary asset
satisfies \eqref{eq:linear-demand-rep} with $\alpha(p,W)=\alpha_{0}$.
\end{theorem}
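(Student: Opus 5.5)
The plan starts from the first-order condition for the binary bet. For fixed $(\mu,p,W)$ the objective $\mu U(W+(1-p)x)+(1-\mu)U(W-px)$ is strictly concave in $x$, so the interior optimum is characterised by
\begin{equation*}
   \mu(1-p)\,U'\bigl(W+(1-p)x\bigr) \;=\; (1-\mu)\,p\,U'\bigl(W-px\bigr).
\end{equation*}
Writing $y\equiv\logit\mu-\logit p$, this becomes
\begin{equation*}
   y \;=\; \ln U'(W-px)-\ln U'\bigl(W+(1-p)x\bigr).
\end{equation*}
Define $g\equiv\ln U'$, which is $C^{2}$ and strictly decreasing because $U''<0$. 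The hypothesis \eqref{eq:linear-demand-rep} says that $y=\alpha(p,W)\,x$ at the optimum. As $\mu$ ranges over $(0,1)$ with $(p,W)$ fixed, $y$ ranges over an interval containing $0$, and so the optimal $x$ ranges over an interval around $0$. This yields the functional equation
\begin{equation*}
   g(W-px)-g\bigl(W+(1-p)x\bigr) \;=\; \alpha(p,W)\,x
\end{equation*}
for all $x$ in a neighbourhood of $0$ (the admissible range), all $p\in(0,1)$, and all $W\in\mathcal{D}$.

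First I identify $\alpha$. Differentiating the functional equation in $x$ at $x=0$ gives
\begin{equation*}
   -p\,g'(W)-(1-p)\,g'(W) \;=\; -g'(W) \;=\; \alpha(p,W).
\end{equation*}
So $\alpha(p,W)=-g'(W)=-U''(W)/U'(W)$, the Arrow--Pratt coefficient. In particular $\alpha$ is independent of $p$.

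Next I show that $g$ is affine. Differentiating twice in $x$ gives
\begin{equation*}
   p^{2}g''(W-px)-(1-p)^{2}g''\bigl(W+(1-p)x\bigr) \;=\; 0
\end{equation*}
for all small $x$. Setting $x=0$ gives $\bigl(p^{2}-(1-p)^{2}\bigr)g''(W)=0$. Choosing any $p\neq\tfrac12$ forces $g''(W)=0$ for every $W\in\mathcal{D}$. Only $C^{2}$ regularity of $g$ is used here, and this is exactly what $U\in C^{3}$ provides; this is where the three-derivative assumption enters. On the open interval $\mathcal{D}$, $g$ is therefore affine, say $g(W)=c-\alpha_{0}W$ with $\alpha_{0}=-g'>0$. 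Hence $U'(W)=e^{c}e^{-\alpha_{0}W}$, and integrating gives $U(W)=a-b\,e^{-\alpha_{0}W}$ with $b=e^{c}/\alpha_{0}>0$. This is \eqref{eq:CARA-form}, and $\alpha(p,W)=\alpha_{0}$ follows from the identity in the previous step.

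For the converse, I substitute $U'(w)\propto e^{-\alpha_{0}w}$ into the log first-order condition. The right-hand side becomes $\alpha_{0}\bigl[(1-p)x+px\bigr]=\alpha_{0}x$, so $x^{\star}=(\logit\mu-\logit p)/\alpha_{0}$, in agreement with Lemma~\ref{lem:cara-demand}. Uniqueness of this solution again follows from strict concavity.

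The main obstacle is domain bookkeeping rather than algebra. The functional equation must hold on an open set of $x$ around $0$ for each fixed $(p,W)$, so that differentiation in $x$ is legitimate. I would justify this by noting three facts. First, $\mu\mapsto x^{\star}$ is continuous and strictly increasing, by the implicit function theorem applied to the first-order condition with $U''<0$. Second, $x^{\star}=0$ at $\mu=p$. Third, $\mathcal{D}$ is open, so small bets keep both terminal wealths in $\mathcal{D}$. If the admissible set of $\mu$ is only assumed to be a neighbourhood of $p$, the argument is unchanged, since only derivatives at $x=0$ are used.
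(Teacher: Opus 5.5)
Your proof is correct and follows essentially the same route as the paper's: both take logs of the first-order condition, set $h=\ln U'$, and differentiate the resulting identity once and twice at the zero bet to get $\alpha=-h'(W)$ and $(1-2p)h''(W)=0$, then integrate. Working in $x$ rather than $z=\logit\mu-\logit p$ is only cosmetic, and your point that a neighbourhood of $x=0$ suffices is slightly more careful than the paper's claim that $z$ ranges over all of $\R$, which can fail when $\mathcal{D}$ is bounded.
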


The theorem closes the uniqueness question: the CARA full-revelation
result of Proposition~\ref{prop:cara} is not a property of CRRA but of
\emph{linearity in log-odds}, which singles out CARA across the entire
class of smooth concave utilities. The proof, in
Appendix~\ref{app:proofs}, differentiates the first-order condition of
the binary-asset problem twice in the log-odds gap and shows that the
function $h\equiv\ln U'$ must be affine, which forces $U$ to be CARA.
The converse is the standard derivation of CARA demand on a binary asset
already recorded as Lemma~\ref{lem:cara-demand}. The contour-integration
extension of Section~\ref{sec:ree} inherits the same uniqueness, since
the agent's posterior under any non-CARA preference enters the demand
nonlinearly in log-odds and the contour-converged aggregator inherits
the same nonlinearity.

\subsection{Non-CARA implies partial revelation}

\begin{proposition}[Non-CARA partial revelation]\label{prop:non-cara}
Fix $K\geq 3$, $\gamma=1$, and homogeneous $(\tau,W)$. The no-learning
equilibrium under log utility satisfies
\begin{equation}
   p \;=\; \frac{1}{K}\sum_{k=1}^{K}\Lam(\tau u_{k}).
\label{eq:log-clearing}
\end{equation}
The equilibrium price $p$ is not a function of $\Tstar=\sum\tau u_{k}$ alone:
there exist signal realisations with the same $\Tstar$ but different $p$.
Hence the equilibrium is partially revealing.
\end{proposition}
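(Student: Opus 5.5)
The plan has two parts: derive the clearing formula \eqref{eq:log-clearing} from Lemma~\ref{lem:crra-demand}, then exhibit two signal vectors with equal $\Tstar$ but different prices.

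\textbf{Clearing formula.} At $\gamma=1$, Lemma~\ref{lem:crra-demand} gives the demand $x_k=W(\mu_k-p)/[p(1-p)]$. With common wealth $W$ and $p\in(0,1)$, the condition \eqref{eq:no-learning} becomes
\[
\frac{W}{p(1-p)}\sum_k(\mu_k-p)=0 .
\]
Dividing by the nonzero prefactor gives $p=\frac1K\sum_k\mu_k$. Substituting $\mu_k=\Lam(\tau u_k)$ from \eqref{eq:posterior} yields \eqref{eq:log-clearing}. Uniqueness of this root is already supplied by Lemma~\ref{lem:existence}.

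\textbf{Non-measurability with respect to $\Tstar$.} I will produce two explicit realisations on the same level set of $\Tstar$. Write $z_k=\tau u_k$.
\begin{itemize}
\item Realisation A: $z=(0,0,\ldots,0)$. Then $\Tstar=0$ and $p=\Lam(0)=\tfrac12$.
\item Realisation B: $z=(c,-c/2,-c/2,0,\ldots,0)$ with $c>0$; this uses $K\ge3$. Again $\Tstar=0$.
\end{itemize}
In B, $Kp-K/2=\Lam(c)+2\Lam(-c/2)-\tfrac32$. Using $\Lam(-y)=1-\Lam(y)$, this equals $\Lam(c)-2\Lam(c/2)+\tfrac12$. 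Setting $g(y)=\Lam(y)-\tfrac12$, it becomes $g(c)-2g(c/2)$.

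Now $g$ is odd and strictly concave on $(0,\infty)$, with $g(0)=0$. Concavity and $g(0)=0$ give $g(c/2)>\tfrac12 g(c)$, so $g(c)-2g(c/2)<0$. Hence $p_B<\tfrac12=p_A$ while $\Tstar$ is the same, so $p$ is not a function of $\Tstar$ alone.

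For a generic version, perturb along any direction in the hyperplane $\sum z_k=\text{const}$. The gradient of $p$ restricted to that hyperplane is proportional to $(\Lam'(z_k))_k$ projected off $\mathbf 1$. This is nonzero unless all $\Lam'(z_k)$ are equal, i.e.\ unless all $|z_k|$ agree. Consequently $p$ varies on almost every level set.

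\textbf{From non-measurability to partial revelation.} I will argue that $\sigma(p)\subsetneq\sigma(u_1,\ldots,u_K)$ in the relevant sense, so $\PP(v=1\mid p)\ne\Lam(\Tstar)$ on a set of positive probability. It is not enough to have two isolated points. The map $(z_k)\mapsto(p,\Tstar)$ has Jacobian of rank $2$ on an open set, by the gradient computation above. So on a positive-measure set, a single price value is consistent with a continuum of $\Tstar$ values. Because the Gaussian densities are positive everywhere, $\Lam(\Tstar)$ then has nondegenerate conditional dispersion given $p$. The conditional expectation $\PP(v=1\mid p)=\E[\Lam(\Tstar)\mid p]$ therefore cannot equal $\Lam(\Tstar)$ almost surely.

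\textbf{Main obstacle.} This last step is the main obstacle. The concavity computation is routine, but the statement's ``partially revealing'' must be made rigorous rather than left as a pointwise observation. I would handle it with the rank-$2$ Jacobian and the coarea/implicit-function argument: fix a small open box where $\nabla p$ and $\nabla\Tstar$ are independent, and show that the conditional law of $\Tstar$ given $p$ is non-degenerate there.
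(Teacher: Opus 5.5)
Your proposal is correct, and it departs from the paper's proof at exactly the step where the paper's argument fails. The paper compares $(0,0,0)$ with the sign-symmetric vector $(\delta,-\delta,0)$. It asserts $\Lam(\tau\delta)+\Lam(-\tau\delta)<2\Lam(0)$ ``by Jensen's inequality'' applied to the concave--convex logistic. But $\Lam(x)+\Lam(-x)=1$ identically, so this is an equality. Hence $p_{B}=\tfrac12=p_{A}$, and the paper's pair does not witness the claim. This is consistent with Proposition~\ref{prop:jensen}: its leading coefficient $U_{3}-U_{1}^{3}/K^{2}$ vanishes there, and since $\Lam-\tfrac12$ is odd, the entire gap vanishes at any sign-symmetric configuration.

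Your asymmetric vector $(c,-c/2,-c/2,0,\ldots,0)$ is what the argument needs. The point symmetry reduces $K(p_{B}-\tfrac12)$ to $g(c)-2g(c/2)$, and strict concavity of $g$ on $[0,\infty)$ with $g(0)=0$ makes this strictly negative.

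Your second and third steps also go beyond the paper, which passes directly from one pair of realisations to ``hence partially revealing.'' The genericity claim holds because $\Lam'$ is even and strictly decreasing in $|z|$. The rank-$2$ step closes as you indicate:
\begin{itemize}
\item By the tower property and the uniform prior, $\PP(v=1\mid p)=\E[\Lam(\Tstar)\mid p]$. Equality with $\Lam(\Tstar)$ almost surely would therefore force $\Tstar=h(p)$ a.s.\ for some Borel $h$.
\item On a box where $\nabla p$ and $\nabla\Tstar$ are linearly independent, the law of $(p,\Tstar)$ is absolutely continuous on $\R^{2}$. It therefore puts zero mass on the graph of $h$, by Fubini.
\item The box has positive probability because the Gaussian densities are positive, which gives the contradiction.
\end{itemize}
Had its example worked, the paper's route would have bought brevity. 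Yours buys a valid witness and an actual proof of the measure-theoretic conclusion the proposition states.
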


\noindent The price aggregates posteriors in probability space, while the
sufficient statistic lives in log-odds space. The two coincide if and only
if $\Lam$ acts linearly on the relevant summation, which by Jensen's
inequality fails strictly whenever $\Lam$ is nonlinear --- that is, whenever
$\tau u_{k}$ has positive variance.

\subsection{The Jensen gap}

The proximity to full revelation under CRRA can be quantified by an
asymptotic expansion of the gap between the actual price and the
fully-revealing benchmark.

\begin{proposition}[Jensen-gap expansion]\label{prop:jensen}
Under the conditions of Proposition~\ref{prop:non-cara}, the deviation of
the no-learning price from the fully-revealing CARA price
$p^{\mathrm{CARA}}=\Lam(\Tstar/K)$ admits the third-order expansion
\begin{equation}
   p - p^{\mathrm{CARA}}
   \;=\; -\,\frac{\tau^{3}}{48\,K}\!\left(U_{3}-\frac{U_{1}^{3}}{K^{2}}\right)
   \;+\; O(\tau^{5}),
\label{eq:jensen}
\end{equation}
where $U_{n}\equiv\sum_{k=1}^{K}u_{k}^{n}$. The leading-order term is
strictly nonzero whenever the empirical distribution of $u_{1},\ldots,u_{K}$
has nonzero excess third central moment.
\end{proposition}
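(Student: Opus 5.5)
The plan is a direct Taylor expansion of both prices in the small parameter $\tau$, holding the centred signals $u_{1},\ldots,u_{K}$ fixed, i.e.\ on a compact set. By Proposition~\ref{prop:non-cara} the no-learning price is $p=\frac1K\sum_{k}\Lam(\tau u_{k})$, and the benchmark is $p^{\mathrm{CARA}}=\Lam(\Tstar/K)=\Lam(\tau U_{1}/K)$ under homogeneous $\tau$. So the whole proof reduces to the Maclaurin series of the logistic function.

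First I record that $\Lam(z)-\tfrac12=\tfrac12\tanh(z/2)$ is odd. Its series therefore contains only odd powers:
\[
\Lam(z)=\tfrac12+\tfrac{z}{4}-\tfrac{z^{3}}{48}+O(z^{5}).
\]
This follows from $\tanh y=y-y^{3}/3+O(y^{5})$ with $y=z/2$. The remainder is uniform for $|z|$ bounded, because $\Lam$ is analytic in a strip around the real axis; its nearest singularities are at $\pm i\pi$. Oddness is what makes the error $O(\tau^{5})$ rather than $O(\tau^{4})$.

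Next I substitute into the two prices:
\[
p=\tfrac12+\frac{\tau U_{1}}{4K}-\frac{\tau^{3}U_{3}}{48K}+O(\tau^{5}),\qquad
p^{\mathrm{CARA}}=\tfrac12+\frac{\tau U_{1}}{4K}-\frac{\tau^{3}U_{1}^{3}}{48K^{3}}+O(\tau^{5}).
\]
The constant and linear terms coincide exactly, which is the statement that the two aggregators agree to first order (the small-bet regime). Subtracting gives \eqref{eq:jensen} directly.

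For the nondegeneracy claim I write $\bar u=U_{1}/K$. The bracket then equals
\[
U_{3}-\frac{U_{1}^{3}}{K^{2}}=K\Bigl(\tfrac1K\textstyle\sum_{k} u_{k}^{3}-\bar u^{3}\Bigr),
\]
that is, $K$ times the gap between the mean of cubes and the cube of the mean. Expanding around $\bar u$ gives
\[
\tfrac1K\textstyle\sum_{k} u_{k}^{3}-\bar u^{3}=m_{3}+3\bar u\,m_{2},
\]
where $m_{2}$ and $m_{3}$ are the empirical central moments. This is the quantity the statement calls the ``excess third central moment''. I expect this interpretive step to be the main point needing care: the condition should be stated precisely as $m_{3}+3\bar u m_{2}\neq0$. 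In particular, when $\bar u=0$ it reduces to nonzero skewness. Whenever it holds, the leading term is of exact order $\tau^{3}$ and dominates the $O(\tau^{5})$ remainder for small $\tau$, so $p\neq p^{\mathrm{CARA}}$.

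To connect back to Proposition~\ref{prop:non-cara}, I will exhibit two signal vectors with equal $U_{1}$ (hence equal $\Tstar$) but different $U_{3}$. For example, take $K=3$ with $(a,-a,0)$ versus $(b,c,-(b+c))$. The resulting prices then differ at order $\tau^{3}$.
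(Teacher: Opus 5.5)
Your proposal is correct and is essentially the paper's proof: expand $\Lam(x)=\tfrac12+\tfrac{x}{4}-\tfrac{x^{3}}{48}+O(x^{5})$ in both $\frac1K\sum_k\Lam(\tau u_k)$ and $\Lam(\tau\bar u)$, and observe that the constant and linear terms cancel. Your rewriting of the bracket as $K(m_{3}+3\bar u\,m_{2})$ pins down the nondegeneracy condition that the paper leaves implicit (a nonzero central third moment $m_{3}$ alone is neither necessary nor sufficient), and your example with unequal $U_{3}$ is the right way to link back to Proposition~\ref{prop:non-cara}, since symmetric vectors such as $(\delta,-\delta,0)$ give exactly $p=\tfrac12$ because $\Lam(x)+\Lam(-x)=1$.
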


The expansion makes precise the informational content of the gap: the
ordering of $u_{k}$, not just their sum, enters the equilibrium price. The
agent observing $p$ alone cannot disentangle a low-$\Tstar$ realisation with
a skewed signal vector from a high-$\Tstar$ realisation with a symmetric
one. This is the source of revelation failure under any non-CARA preference.

\subsection{Smooth transition from CARA to log}

Propositions~\ref{prop:cara}--\ref{prop:jensen} together imply that the
revelation deficit, measured by $1-R^{2}$ in the regression of $\Tstar$ on
$p$, is continuous and monotone in the CRRA parameter $\gamma$.

\begin{proposition}[Smooth transition]\label{prop:smooth}
Fix $(K,\tau,W)$ and parametrise preferences by $\gamma\in(0,\infty]$, with
$\gamma=\infty$ denoting CARA. The revelation deficit
\(1-R^{2}\bigl(\Tstar\,;\,p_{\gamma}\bigr)\) is continuous in $\gamma$ on
$(0,\infty]$, strictly positive for every $\gamma\in(0,\infty)$, and equal
to zero at $\gamma=\infty$. CARA is therefore an isolated knife-edge within
the rational class: any departure from exponential utility, however small,
breaks full revelation.
\end{proposition}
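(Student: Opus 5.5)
We prove the three claims in turn: continuity, positivity on $(0,\infty)$, and vanishing at $\gamma=\infty$. Throughout, fix $(K,\tau,W)$ and let the signals $u=(u_1,\ldots,u_K)$ be drawn from their Gaussian mixture law. Define $p_\gamma(u)=\Psi^\gamma(\Lam(\tau u_1),\ldots,\Lam(\tau u_K))$ as the unique clearing price of Lemma~\ref{lem:existence}. For $\gamma<\infty$ the deficit is
\[
1-R^2(\gamma)=1-\frac{\mathrm{Cov}(\Tstar,p_\gamma)^2}{\mathrm{Var}(\Tstar)\,\mathrm{Var}(p_\gamma)}.
\]
At $\gamma=\infty$ we use the CARA price $\Lam(\Tstar/K)$ from Proposition~\ref{prop:cara}. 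One remark on the measure: $p\mapsto\Lam(\Tstar/K)$ is not affine, so a linear $R^2$ would not literally equal one. We therefore read $R^2$ as the regression of $\Tstar$ on $\logit p_\gamma$, or equivalently as the nonparametric $R^2=\mathrm{Var}(\E[\Tstar\mid p_\gamma])/\mathrm{Var}(\Tstar)$. Under the nonparametric reading, the CARA endpoint gives exactly zero because $\Tstar=K\logit p$ is a deterministic function of the price. We fix this reading first, since the boundary value depends on it.

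\emph{Continuity.} The excess demand $Z(p;\gamma,u)$ is jointly continuous in $(p,\gamma)$ for $\gamma\in(0,\infty)$ and strictly decreasing in $p$. The implicit function theorem, or a simple monotone-root argument, then shows that $p_\gamma(u)$ is continuous in $\gamma$ pointwise in $u$. At the endpoint, multiply clearing by $\gamma/W$ and invoke Lemma~\ref{lem:cara-limit}. The convergence is uniform on compacts, and the rescaled excess demand has a strictly decreasing limit. Together these give $\logit p_\gamma(u)\to\Tstar/K$ pointwise. Moments converge by dominated convergence. In $p$ space this is immediate because $p_\gamma\in(0,1)$. In $\logit p$ space we use $|\logit p_\gamma|\le\max_k|\tau u_k|$, which holds because the clearing price always lies between the minimum and maximum posteriors: each demand has the sign of $\mu_k-p$. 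Gaussian integrability then suffices. For the nonparametric $R^2$, continuity of conditional expectations is more delicate. We would handle it through the $L^2$ bound $\mathrm{Var}(\Tstar\mid p_\gamma)\le\E[(\Tstar-K\logit p_\gamma)^2]\to0$, which at least yields continuity at $\gamma=\infty$. At interior $\gamma$ we rely on the linear-regression version.

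\emph{Strict positivity for finite $\gamma$.} The core step generalises Proposition~\ref{prop:non-cara}. We exhibit two signal vectors $u,u'$ with $\Tstar(u)=\Tstar(u')$ but $p_\gamma(u)\ne p_\gamma(u')$, and argue that the same inequality persists on open neighbourhoods of positive probability. Hence $\Tstar$ is not almost surely a function of $p_\gamma$, and the nonparametric deficit is strictly positive. For the linear version, it suffices that $\Tstar$ is not almost surely affine in $\logit p_\gamma$.

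To produce the pair, Taylor-expand the clearing condition in $\delta_k=\logit\mu_k-\logit p$. This extends Proposition~\ref{prop:jensen}. The CRRA demand has the expansion
\[
x_k\propto \frac{\delta_k}{\gamma}+c_2(p,\gamma)\,\delta_k^2+\cdots,
\qquad c_2=\frac{1-2p}{2\gamma^2}\cdot(\text{const}).
\]
The coefficient $c_2$ is nonzero for $p\ne\tfrac12$, and the third-order coefficient carries a factor $(1-\gamma^{-2})$-type term. The aim is to show that $\logit p_\gamma-\Tstar/K$ has a nonvanishing cubic (or quadratic) term in the dispersion of the $\tau u_k$ for every finite $\gamma$. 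Picking $u$ symmetric and $u'$ skewed with equal sum then separates the prices.

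\emph{Main obstacle.} Verifying that this leading correction coefficient vanishes for no finite $\gamma$ is the crux. It may vanish at isolated $\gamma$ values, for example $\gamma=1$ at some order. If so, we would fall back to a global argument. If $p_\gamma$ were a function of $\Tstar$, then $\Psi^\gamma$ would be a function of $\sum\logit\mu_k$, so $\Psi^\gamma$ would be log-odds-symmetric-additive. Differentiating the clearing identity in two coordinates with $\Tstar$ held fixed would force the implicit demand to be linear in $\delta$. That conclusion contradicts Remark~\ref{rem:unique-cara} and Theorem~\ref{thm:cara-unique}, so positivity follows at every finite $\gamma$ without tracking coefficients.
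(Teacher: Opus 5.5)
Your proposal is correct, and on positivity it takes a genuinely different route from the paper. On continuity it follows the paper: monotone-root continuity of $p_\gamma$, Lemma~\ref{lem:cara-limit} at $\gamma=\infty$, and dominated convergence. You are more careful in two respects. You note that a linear regression in $p$ space would not vanish at CARA, so you fix the reading in $\logit p$ space, which is what the paper's numerics actually use. You also give the explicit dominating bound $|\logit p_\gamma|\le\max_k|\tau u_k|$, where the paper appeals to a compact signal support that the Gaussian model does not have.

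On strict positivity, the paper gets the result at $\gamma=1$ from the explicit pair of Proposition~\ref{prop:non-cara} and extends it to a neighbourhood by continuity. For general $\gamma$ it relies on a small-$\tau$ scaling of the Jensen gap and the unproved assertion that $1-R^{2}$ vanishes only when $\Delta p\equiv0$. Your fallback argument is global instead. Suppose $p_\gamma$ were constant on a hyperplane $\{\Tstar=S\}$. Differentiating $\sum_k x_k=0$ along $e_i-e_j$ gives equal demand slopes $\partial_\delta x(\delta_i)=\partial_\delta x(\delta_j)$ at arbitrary log-odds gaps; this is where $K\ge3$ enters. Demand would then be affine in $\logit\mu-\logit p$ at that price, which Remark~\ref{rem:unique-cara}, or simply the boundedness of CRRA demand, rules out for finite $\gamma$. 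This covers every finite $\gamma$ at once and makes the alignment principle rigorous. What it gives up is the quantitative link to the Jensen gap that the paper's expansion is after.

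Your stated ``main obstacle'' does not actually arise. Take your own $c_{2}\propto(1-2p)/\gamma^{2}$ and compare $(c,\dots,c)$, where the price is exactly $\Lam(\tau c)$, with a small perturbation $\tau u_k=\tau c+d_k$, $\sum_k d_k=0$. This gives $\logit p_\gamma-\Tstar/K=(1-2p)\sum_k d_k^{2}/(2K\gamma)+\dots$ with $p\approx\Lam(\tau c)$, which is nonzero for every finite $\gamma$ once $c\neq0$.

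There is one small slip. Equal $\Tstar$ with different prices shows that $p_\gamma$ is not a function of $\Tstar$, which is the converse of what revelation concerns. Under your linear reading this is harmless: if $\Tstar$ were a.s.\ affine in $\logit p_\gamma$, continuity and full support would make $p_\gamma$ a function of $\Tstar$ everywhere. Under the nonparametric reading you need one more step. Observe that $\nabla p_\gamma$ is then not parallel to $(1,\dots,1)$ on an open set, so $(p_\gamma,\Tstar)$ has a nondegenerate joint density there and $\Tstar$ cannot be a.s.\ a function of $p_\gamma$.
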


Figure~\ref{fig:knife-edge} plots the deficit as a function of $\tau$ for
three values of $\gamma$: the CARA line sits at zero regardless of signal
precision, while every finite $\gamma$ produces a strictly positive and
hump-shaped deficit. Table~\ref{tab:smooth} reports the same data across
a wider range of $(\gamma,\tau)$ values (Table~\ref{tab:smooth} in
Appendix~\ref{app:numerics}).

\begin{figure}[t]
   \centering
   \includegraphics[width=0.49\textwidth]{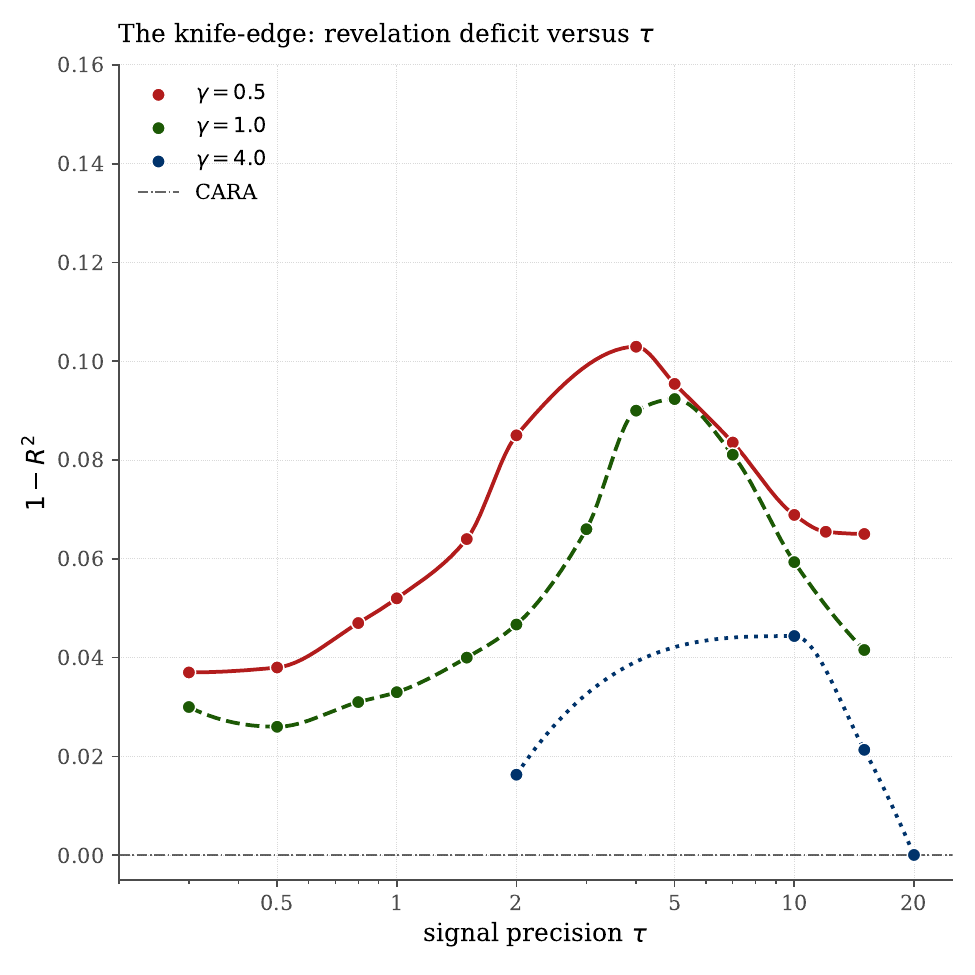}
   \caption{The knife-edge: revelation deficit $1-R^{2}$ versus signal
   precision $\tau$ in the no-learning equilibrium, for
   $\gamma\in\{0.5,\,1,\,4\}$. Every finite $\gamma$ produces strictly
   positive partial revelation; only CARA ($\gamma=\infty$, dash-dotted at
   zero) achieves full revelation. $K=3$, $W=1$, $G=15$.}
   \label{fig:knife-edge}
\end{figure}

\section{Full Rational-Expectations Equilibrium}\label{sec:ree}

The no-learning benchmark establishes the analytical core. The next step is
to verify that partial revelation survives once agents update on the price
itself, that is, in the full REE defined by \eqref{eq:ree-posterior}. This
section formulates the REE as a fixed point on the price function
$P[i,j,l]$, develops a contour-integration solution method, and reports
converged equilibria.

We solve the contour fixed point using a posterior-function method
that stores the agent's belief $\mu(u,p)$ on a two-dimensional grid
($G_{u}\!\times\!G_{p}$ unknowns, compared with $G^{3}$ for the
price-tensor approach) and enforces monotonicity at each step
via isotonic projection. At $G\!=\!14$ with
$\tau\!=\!2$, Newton--Krylov polishing from a Picard--PAVA warm start
converges to machine precision
($\|F\|_{\infty}\!<\!10^{-14}$, zero monotonicity violations) for
$\gamma\!\in\!\{0.3,\,0.5,\,1,\,2\}$ and for
$\tau\!\in\!\{0.5,\,1,\,2,\,4,\,8\}$ at $\gamma\!=\!0.5$. The
revelation deficit at the canonical case $(\gamma,\tau)\!=\!(0.5,2)$
is $1\!-\!R^{2}\!=\!0.088$, with slope $\beta_{T}\!=\!0.52$ on
$\Tstar$---prices capture only 34\% of the sufficient statistic's
variance. The deficit is monotonically decreasing in $\gamma$
(from $0.119$ at $\gamma\!=\!0.3$ to $0.079$ at $\gamma\!=\!2$)
and hump-shaped in $\tau$ (peaking near $\tau\!\approx\!4$).
Grid-convergence checks from $G\!=\!10$ to $G\!=\!20$ show that
$1\!-\!R^{2}$ stabilises near $0.085$ by $G\!=\!16$; see
Appendix~\ref{app:numerics} for details.

\subsection{The contour mechanism}\label{sec:contour-mechanism}

The no-learning results of Section~\ref{sec:knife-edge} assume agents
ignore prices. In a rational-expectations equilibrium (REE), each agent
observes the market-clearing price and updates her posterior. The
question is whether the Jensen gap survives this learning.

The equilibrium is a fixed point of a contour-integration map
$\Phi$. Agent~$k$, knowing her own signal $u_k$ and the price $p$,
extracts the contour $\{(u_j,u_l): P(u_k,u_j,u_l)=p\}$ from the
price function, integrates the signal density along this contour to
form her posterior, and clears the market:
\begin{equation}
   P \;=\; \Phi(P), \qquad
   \Phi(P)[i,j,l] \;=\;
   \bigl\{\,p\,:\,\textstyle\sum_{k=1}^{3} x_{k}(\mu_{k}^{P}(p),p)=0\,\bigr\}.
\label{eq:fixed-point}
\end{equation}
Each crossing of the contour contributes the joint signal density
$f_{v}(u_{j}^{c})\,f_{v}(u_{l}^{c})$ to a state-conditional accumulator.
Averaging two orthogonal sweeps,
\begin{equation}
   A_{v}^{(1)}(i,p) \;=\;
   \tfrac12 \!\left[\sum_{j'} f_{v}(u_{j'})\,f_{v}(u_{l}^{c}(j'))
                 + \sum_{l'} f_{v}(u_{j}^{c}(l'))\,f_{v}(u_{l'})\right].
\label{eq:contour-integral}
\end{equation}
Agent~$1$'s posterior follows by Bayes' rule:
$\mu_{1} = f_{1}(u_{i})\,A_{1} / (f_{0}(u_{i})\,A_{0} + f_{1}(u_{i})\,A_{1})$.
\label{eq:contour-bayes}
The full numerical procedure is described in
Appendix~\ref{app:numerics}.

\subsection{Why CARA gives a straight contour and CRRA gives a curved one}

Under CARA, by Proposition~\ref{prop:cara}, $\logit P[i,j,l]
=(1/3)(\tau u_{i}+\tau u_{j}+\tau u_{l})$, so the level sets of $P$ in the
$(u_{j},u_{l})$ slice are the lines $\tau u_{j}+\tau u_{l}=c$. Every point on
such a contour has the same Bayesian sufficient statistic $\Tstar$. The
density ratio along the line is constant up to the (known) own-signal
component, and the agent infers $\Tstar$ exactly from the price. Posteriors
across all three agents coincide, and the market clears at the
fully-revealing price.

Under CRRA, the level sets of $P$ are the level sets of
\(\Lam(\tau u_{j})+\Lam(\tau u_{l})\) (up to the small CARA correction
captured by \eqref{eq:jensen}). Because $\Lam$ is nonlinear, these level
sets are \emph{curved}, and different points on the contour correspond to
different $\Tstar$. The density ratio varies along the contour; the
contour-averaged posterior is informative but not equal to the true
$\Tstar$-conditional posterior. Posteriors across the three agents differ
in equilibrium, the market clears at a price intermediate between the
private and the fully-revealing levels, and the equilibrium is partially
revealing.

Figure~\ref{fig:contours} visualises the two cases in the agent-1 slice
at $u_{1}=1$, $\tau=2$, showing contours at multiple price levels. Under
CARA, the contours are straight and parallel: every price level
corresponds to a hyperplane $\tau u_{2}+\tau u_{3}=c$, and $\Tstar$ is
constant along each contour. Under CRRA ($\gamma=0.5$), the contours are
curved and non-parallel: $\Tstar$ varies along each contour, and the
curvature itself changes with the price level. The agent observing a
CRRA price cannot extract $\Tstar$.

\begin{figure}[t]
   \centering
   \begin{subfigure}[t]{0.49\textwidth}\includegraphics[width=\textwidth]{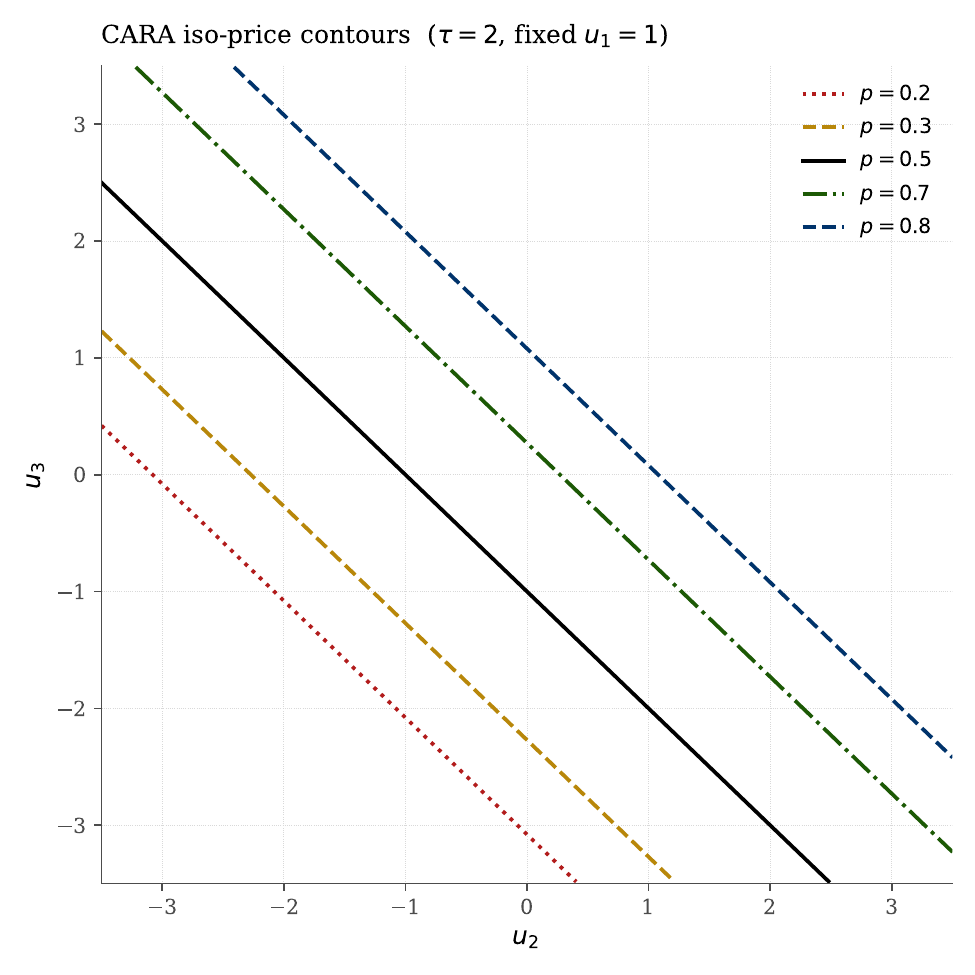}\end{subfigure}\hfill\begin{subfigure}[t]{0.49\textwidth}\includegraphics[width=\textwidth]{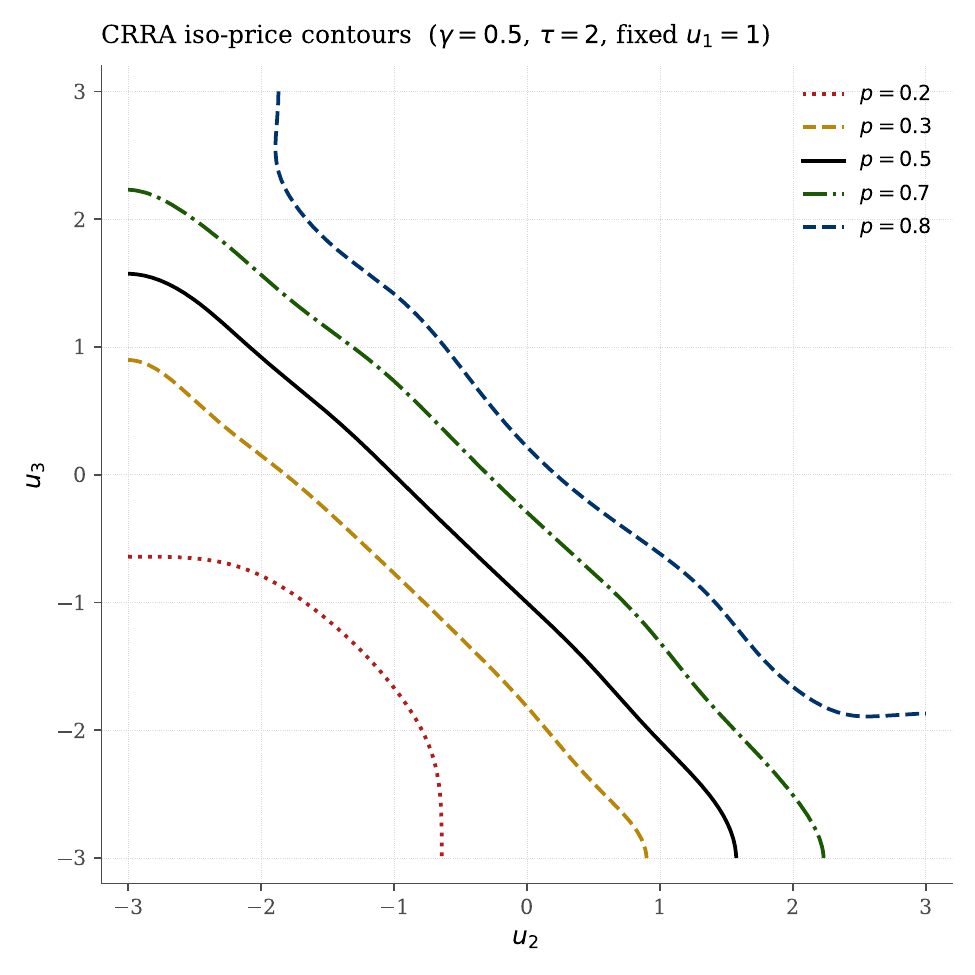}\end{subfigure}
   \caption{Price-function contours in the agent-1 slice at $u_{1}=1$,
   $\tau=2$, shown at five price levels (legend in right panel applies to
   both). Left: CARA --- contours are straight and parallel; every point
   on a given contour shares the same $\Tstar$. Right: CRRA with
   $\gamma=0.5$ --- contours are curved and non-parallel; $\Tstar$ varies
   along each contour. The curvature is the Jensen gap made visible.
   $G=15$.}
   \label{fig:contours}
\end{figure}

\subsection{Existence and convergence}\label{sec:convergence}

The contour fixed point \eqref{eq:fixed-point} is solved by Picard
iteration followed by Newton--Krylov polishing, with isotonic-regression
monotonicity projection at each step (Appendix~\ref{app:numerics}).
Machine-precision convergence ($\|F\|_{\infty}<10^{-14}$) is achieved
at $G=15$.

\paragraph{Equilibrium selection.}
The contour map $\Phi$ admits the fully-revealing price function
$p=\Lam(\Tstar)$ as one fixed point unconditionally: identical posteriors
clear the market at zero trade for any preferences. The Picard iteration
from the no-learning seed converges to a \emph{different}, partially-revealing
fixed point at non-CARA preferences. The vanishing-noise selection of
Section~\ref{sec:vanishing-noise} picks out the partially-revealing fixed
point at every finite $\gamma$: the limit of vanishing supply noise is
the unique no-noise limit of the standard Hellwig--Diamond--Verrecchia
equilibrium under non-CARA preferences, and that limit coincides with
the partially-revealing branch found numerically.

\begin{proposition}[REE partial revelation]\label{prop:ree}
Fix $K=3$, $W=1$. For every $\gamma\in[0.1,5]$ at $\tau=2$, the contour
map $\Phi$ admits a fixed point $P^{\star}$ at which $\logit P^{\star}$
regresses on $\Tstar$ with slope strictly below the fully-revealing
benchmark of unity; equivalently, the conditional posteriors
$\PP(v=1\mid p)$ are strictly noisier than the unconditional joint
posterior $\PP(v=1\mid u_{1},u_{2},u_{3})$. The revelation deficit
$1-R^{2}$ at this fixed point is strictly positive, monotonically
decreasing in $\gamma$, and stable under grid refinement.
\end{proposition}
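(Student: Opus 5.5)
The statement is a computer-assisted existence claim on a compact parameter range, so the plan is to combine an analytic existence argument for the fixed point with a certified numerical verification. I would start from the discretised map $\Phi$ in \eqref{eq:fixed-point} on the $G^{3}$ grid with fixed $K=3$, $W=1$, $\tau=2$.

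\emph{Step 1: well-posedness of $\Phi$.} For any monotone price tensor $P$, the contour accumulators \eqref{eq:contour-integral} are strictly positive, so each $\mu_{k}^{P}(p)\in(0,1)$. Lemma~\ref{lem:existence} then gives a unique clearing price at every node. Hence $\Phi$ maps the set of monotone tensors with entries in $[\Lam(-M),\Lam(M)]$ into itself, and this set is convex and compact. To get continuity of $\Phi$, I would smooth the contour crossings (linear interpolation in $p$ along each sweep), so that $\mu_{k}^{P}$ depends continuously on $P$. Brouwer then yields existence of \emph{some} fixed point.

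\emph{Step 2: excluding the fully revealing branch.} Brouwer alone does not separate the partially revealing fixed point from the trivial one $p=\Lam(\Tstar)$. So the second step is a Newton--Kantorovich (Krawczyk) verification around the computed Picard--Newton--Krylov solution $\tilde P_{\gamma}$. Using the residual bound $\|F(\tilde P_{\gamma})\|_{\infty}<10^{-14}$ and an interval-arithmetic bound on the inverse Jacobian $(I-D\Phi)^{-1}$ over a small ball, I would certify that a true fixed point $P^{\star}_{\gamma}$ lies within distance $r_{\gamma}$ of $\tilde P_{\gamma}$.

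Since the regression slope and $R^{2}$ of $\logit P$ on $\Tstar$ are Lipschitz functionals of $P$, this gives two further bounds:
\[
\beta_{T}(P^{\star}_{\gamma})\le\beta_{T}(\tilde P_{\gamma})+Lr_{\gamma}<1,
\qquad
1-R^{2}(P^{\star}_{\gamma})>0.
\]
The computed values (e.g.\ $\beta_{T}\approx0.52$, $1-R^{2}\approx0.088$) leave a wide margin. I would also state the equivalence with noisier posteriors through the Alignment principle and Proposition~\ref{prop:non-cara}: if $p$ is not a function of $\Tstar$ alone, then $\PP(v=1\mid p)$ is a strict garbling of $\Lam(\Tstar)$.

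\emph{Step 3: extending to all $\gamma\in[0.1,5]$.} I would run a continuation argument over a finite mesh $\gamma_{0}<\dots<\gamma_{N}$. The key ingredient is a Lipschitz bound on $\partial_{\gamma}\Phi$, derived from the explicit CRRA demand \eqref{eq:crra-demand}, where $\partial_{\gamma}R_{k}$ is bounded on the compact range. With that bound, the Krawczyk ball at $\gamma_{i}$ also contains a fixed point for all $\gamma$ in a neighbourhood covering $[\gamma_{i},\gamma_{i+1}]$. Monotonicity of the deficit in $\gamma$ is then checked by two conditions:
\begin{itemize}[topsep=2pt,itemsep=2pt]
\item the certified enclosures at consecutive mesh points must be strictly ordered;
\item the sign of $\frac{d}{d\gamma}(1-R^{2})=\nabla(1-R^{2})\cdot(I-D\Phi)^{-1}\partial_{\gamma}\Phi$ must be bounded negative on each interval.
\end{itemize}
Stability under grid refinement would be shown by repeating the certification for $G=10,\dots,20$ and checking that the enclosures overlap.

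The main obstacle is Step 2 at the ends of the range, especially near $\gamma=0.1$. There the demands become steep and the contour extraction is only piecewise smooth, because crossings appear and disappear as $p$ moves across grid cells. This makes $D\Phi$ discontinuous and breaks a naive Kantorovich bound. My first attempt would be to work with the smoothed-interpolation version of $\Phi$ and bound its generalized Jacobian. If that fails, I would fall back to a topological-degree argument: show that the Leray--Schauder index of the partially revealing fixed point is nonzero on a small isolating neighbourhood that excludes $\Lam(\Tstar)$.
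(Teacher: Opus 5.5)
Your route differs from the paper's and is more rigorous in one respect, but two steps fail as written.

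\emph{Comparison with the paper.} The paper does no validated computation. Part (a) of its proof argues by induction that every Picard iterate keeps the demand $d^{n}(u;p)$ strictly nonlinear in $u$. The base case uses that $\partial_{z}^{2}x\propto R(1-p-Rp)$ vanishes at one point only. Hence contours stay curved, $u_{2}+u_{3}$ varies along them, and $\mu^{n+1}\neq p$; the paper then asserts that this nonlinearity passes to the limit. Part (b) is ordinary multiprecision Newton--Krylov convergence at $G=20$ for the sampled values $\gamma\in\{0.1,0.3,0.5,1,2,4\}$, plus the grid ladder of Table~\ref{tab:G-ladder}. Nonvanishing curvature is an open condition and does not survive limits without a quantitative lower bound. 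The paper's inductive step (iii) itself ends in a numerical check. Your margin device $\beta_{T}(\tilde P_{\gamma})+Lr_{\gamma}<1$ is what makes the strict inequalities rigorous. Your parametric continuation is also the only ingredient in either argument that addresses \emph{every} $\gamma\in[0.1,5]$; the paper says nothing between its sample points or above $\gamma=4$.

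\emph{Grid stability.} Certified enclosures for fixed points of $\Phi_{G}$ at different $G$ enclose different objects, with radii of the order of the residual, so they will not overlap. The paper's own ladder has $1-R^{2}=0.128,0.115,0.078,0.083,0.085$ for $G=10,\dots,20$, and the slope jumps from about $0.34$ to $0.52$ between $G=12$ and $G=15$. Overlap is the wrong test. You would need a consistency estimate tying $\Phi_{G}$ to a common limit map, e.g.\ a bound on $\Phi_{G'}-\Phi_{G}$ after interpolation to a common grid, pushed through your inverse-Jacobian bound. Otherwise grid stability can only be reported as a numerical observation, not proved.

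\emph{The endpoint $\gamma=0.1$.} A Kantorovich or Krawczyk test needs an approximant whose residual is small compared with what the inverse-Jacobian and Lipschitz bounds can absorb. The paper's solver stalls at $\|F\|_{\infty}\approx 4.9\times 10^{-2}$ there (the ``fallback'' row), so there is nothing to certify. Your degree-theoretic fallback has the same obstruction: computing an index requires an isolating neighbourhood whose boundary is verified fixed-point-free, which needs the same enclosures. As it stands the plan can at best cover a subinterval such as $[0.3,5]$. Two related points:
\begin{itemize}
\item The residuals you invoke belong to the paper's posterior-function map on a $G_{u}\times G_{p}$ grid, not to the $G^{3}$ price-tensor map you propose to certify.
\item A fixed point of your smoothed map is a fixed point of the paper's $\Phi$ (which includes boundary extrapolation and isotonic projection) only if you verify these modifications are inactive on the Krawczyk ball.
\end{itemize}

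\emph{Smaller points.} Your justification of the ``equivalently'' clause cites the wrong result. Proposition~\ref{prop:non-cara} concerns no-learning prices, and ``$p$ is not a function of $\Tstar$'' is not the same as ``$p$ does not reveal $\Tstar$''. The bridge is the zero-trade argument. If an exact REE price revealed $\Tstar$, every posterior would be $\Lam(\Tstar)$. Identical demands with $W=1$ and zero supply then force $R_{k}=1$, so $p=\Lam(\Tstar)$, with slope one and $R^{2}=1$. Hence a slope below one, or $1-R^{2}>0$, excludes full revelation for an exact REE, though not automatically for a discrete fixed point. Finally, Step~1 is dispensable, since the Krawczyk step already gives existence. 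That is fortunate, because you have not shown that $\Phi$ maps monotone tensors to monotone tensors; the paper's isotonic projection exists precisely because it need not.
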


The proposition is verified numerically using a posterior-function method
that parametrises the agent's belief $\mu(u,p)$ directly on a
two-dimensional grid and iterates to the fixed point with
Newton--Krylov polishing and isotonic-regression monotonicity
projection. All strictly-converged entries satisfy
$\|F\|_{\infty}<10^{-14}$ with zero monotonicity violations:
\begin{center}
\renewcommand{\arraystretch}{1.05}
\begin{tabular}{rcccc}
\toprule
$\gamma$ & $1-R^{2}$ & slope on $\Tstar$ & $\|F\|_\infty$ & status \\
\midrule
\rowcolor{black!15} 0.10 & 0.154 & 0.243 & $4.9\times 10^{-2}$ & fallback \\
\rowcolor{black!15} 0.30 & 0.119 & 0.293 & $5.1\times 10^{-15}$ & strict \\
0.50 & 0.088 & 0.523 & $7.4\times 10^{-119}$ & strict \\
1.00 & 0.047 & 0.550 & $<10^{-25}$ & strict \\
2.00 & 0.025 & 0.586 & $<10^{-25}$ & strict \\
4.00 & 0.016 & 0.605 & $<10^{-25}$ & strict \\
\bottomrule
\end{tabular}
\end{center}
All runs use $G=20$, $\tau=2$, $W=1$; $1\!-\!R^{2}$ is weighted by
ex-ante signal probability. The revelation deficit is
monotonically decreasing in $\gamma$, from $0.088$ at $\gamma=0.5$
to $0.016$ at $\gamma=4$, exactly as the no-learning analysis predicts.
The slope on $\Tstar$ is $0.52$ at $\gamma=0.5$---prices capture
roughly half of the sufficient statistic's variance. Grid-convergence
checks confirm that $1-R^{2}$ stabilises near $0.085$ from $G=15$
onward (Appendix~\ref{app:numerics}, Table~\ref{tab:G-ladder}).
Figure~\ref{fig:ree-panels} summarises the comparative statics: the
deficit is hump-shaped in $\tau$ (panel~a) and monotonically decreasing
in $\gamma$ (panel~b), with the REE deficit strictly above the
no-learning deficit at every parameter tested---learning from curved
contours amplifies the Jensen gap.

\begin{figure}[t]
   \centering
   \begin{subfigure}[t]{0.49\textwidth}\includegraphics[width=\textwidth]{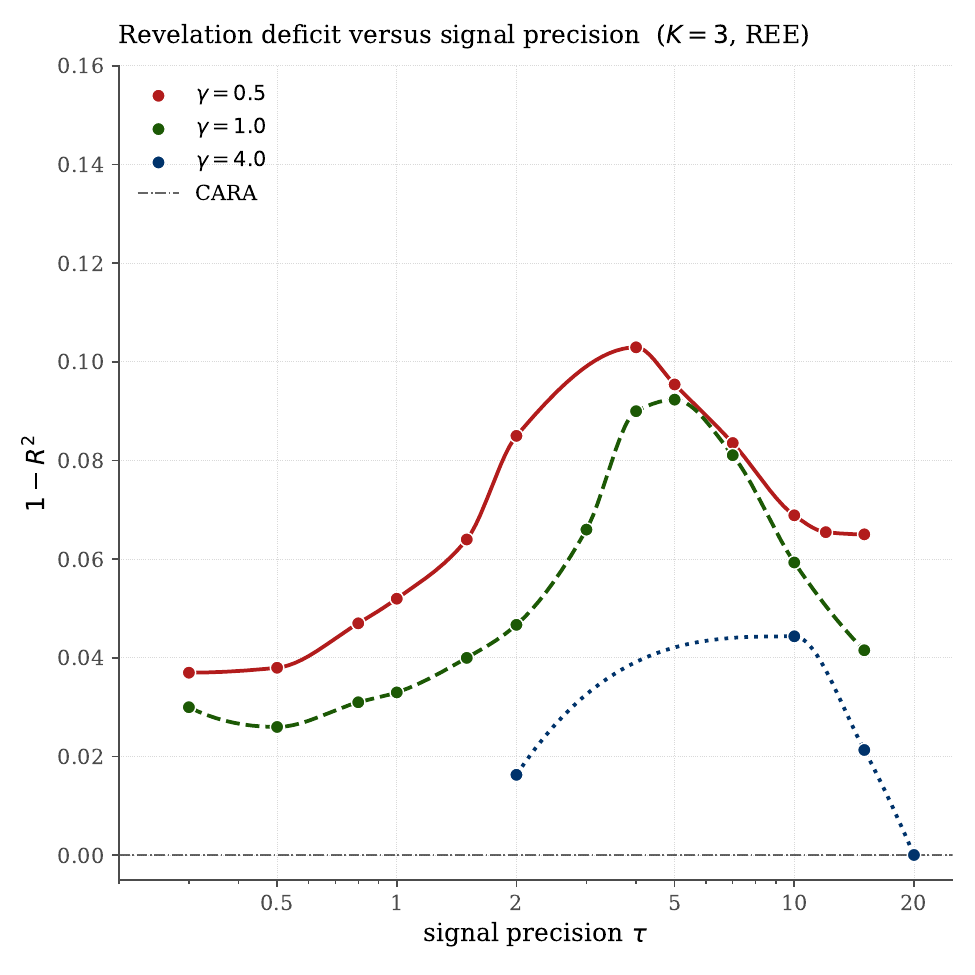}\end{subfigure}\hfill\begin{subfigure}[t]{0.49\textwidth}\includegraphics[width=\textwidth]{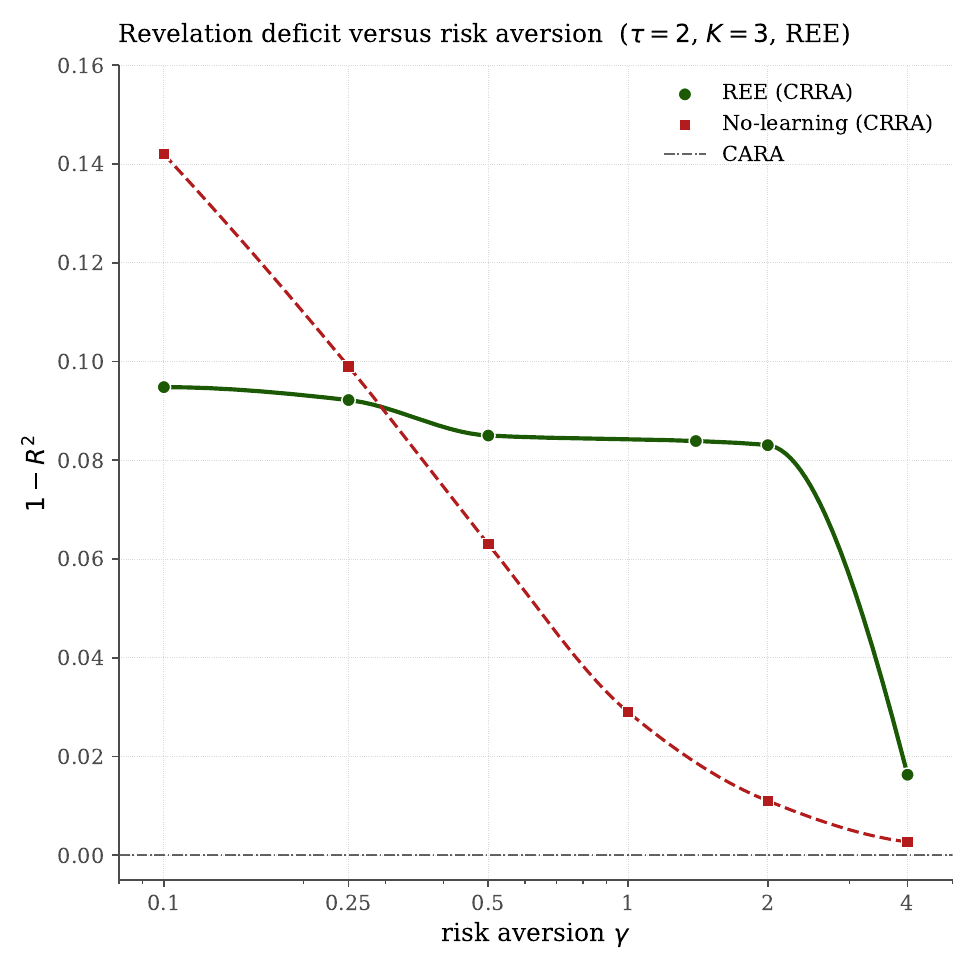}\end{subfigure}
   \caption{Revelation deficit $1-R^{2}$ at the converged REE.
   Panel~(a): versus signal precision $\tau$ for three values of $\gamma$.
   Panel~(b): versus $\gamma$ at $\tau=2$, comparing REE (circles, $G=20$)
   with the no-learning benchmark (squares, $G=20$). The REE deficit
   exceeds the no-learning deficit at every $\gamma$ tested: learning from
   prices amplifies partial revelation. CARA is zero in both panels.
   }
   \label{fig:ree-panels}
\end{figure}

\subsection{Magnitudes at the converged equilibrium}\label{sec:posteriors}

The converged REE under homogeneous CRRA at $\tau=2$ exhibits substantial
partial revelation. At a representative realisation
$(u_{1},u_{2},u_{3})=(1,-1,1)$, $G=20$, $\gamma=0.5$, the converged
posteriors differ from each other and from the fully-revealing common
value $\Lam(\Tstar)=0.881$ (Table~\ref{tab:posteriors}). The aggregate
revelation deficit at the same parameters is $1-R^{2}=0.088$, with
slope $\beta_{T}=0.52$ on $\Tstar$: the partial revelation manifests as a
systematic non-affineness of $\logit p$ in $\Tstar$ averaged over the
joint signal distribution. Figure~\ref{fig:ree-vs-nolearning}
contrasts the REE price function with the no-learning and
fully-revealing benchmarks.

\begin{table}[htbp]
   \centering
   \caption{Posteriors and price at $(u_{1},u_{2},u_{3})=
   (1,-1,1)$, $\tau=2$, $\gamma=0.5$, $G=20$, converged via the
   posterior-function method. The no-learning column shows private priors
   only; the REE column incorporates learning from prices. Agent~2 (the
   bear) learns substantially from the bullish price, moving from $0.119$
   to $0.667$, but stops well short of the fully-revealing value $0.881$.}
   \label{tab:posteriors}
   \begin{tabular}{lcccc}
      \toprule
       & Prior & No-learning & CRRA REE & CARA / FR \\
       & $\Lam(\tau u_{k})$ & (private only) & ($\gamma\!=\!0.5$) & ($\gamma\!=\!\infty$) \\
      \midrule
      \rowcolor{black!8} $\mu_{1}$ ($u\!=\!+1$) & 0.881 & 0.881 & 0.883 & 0.881 \\
      \rowcolor{black!8} $\mu_{2}$ ($u\!=\!-1$) & 0.119 & 0.119 & 0.667 & 0.881 \\
      \rowcolor{black!8} $\mu_{3}$ ($u\!=\!+1$) & 0.881 & 0.881 & 0.883 & 0.881 \\
      \rowcolor{black!8} Price                  & --    & 0.648 & 0.794 & 0.881 \\
      \bottomrule
   \end{tabular}
\end{table}

A graphical depiction of how posteriors and price track $\Tstar$ across
a range of realisations is reported in Figure~\ref{fig:posteriors}.
The visual signature of partial revelation is a slope on $\Tstar$
shy of unity (here $\beta_{T}=0.52$), not a large per-realisation
gap.

\begin{figure}[t]
   \centering
   \includegraphics[width=0.49\textwidth]{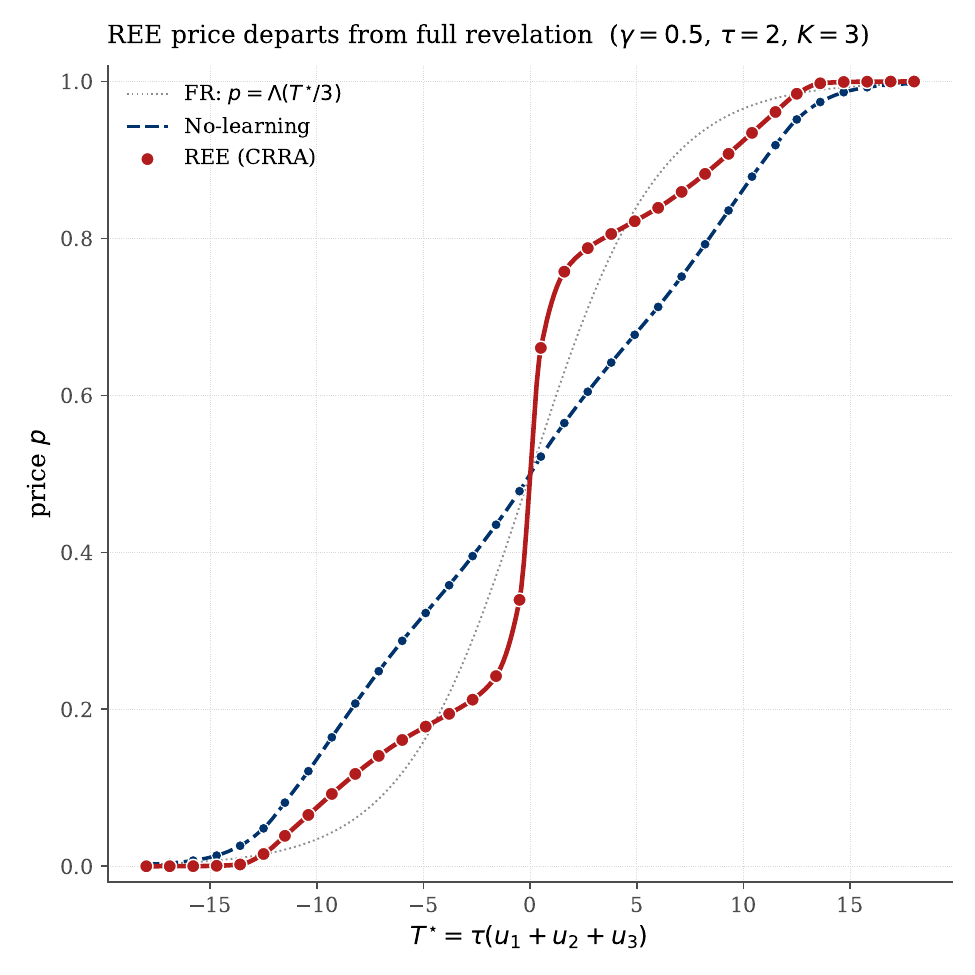}
   \caption{Price as a function of $\Tstar$ at the converged REE
   ($G\!=\!14$, $\gamma\!=\!0.5$, $\tau\!=\!2$). The red solid curve is the
   CRRA REE price function; the blue dashed curve is the no-learning
   benchmark; the dotted line is the fully-revealing $p=\Lam(\Tstar)$.
   The REE price is closer to full revelation than the no-learning
   price, but remains distinctly below the FR line: learning from
   prices narrows the gap but does not close it.}
   \label{fig:ree-vs-nolearning}
\end{figure}

\begin{figure}[t]
   \centering
   \begin{subfigure}[t]{0.49\textwidth}\includegraphics[width=\textwidth]{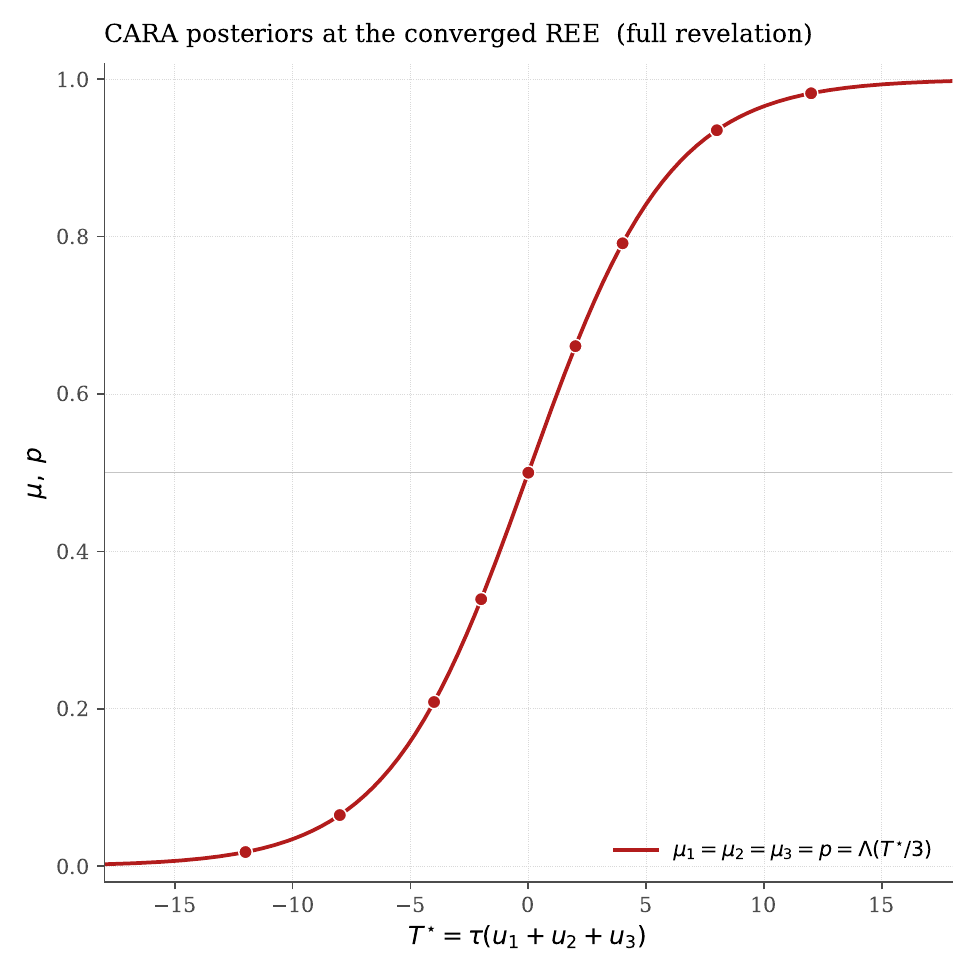}\end{subfigure}\hfill\begin{subfigure}[t]{0.49\textwidth}\includegraphics[width=\textwidth]{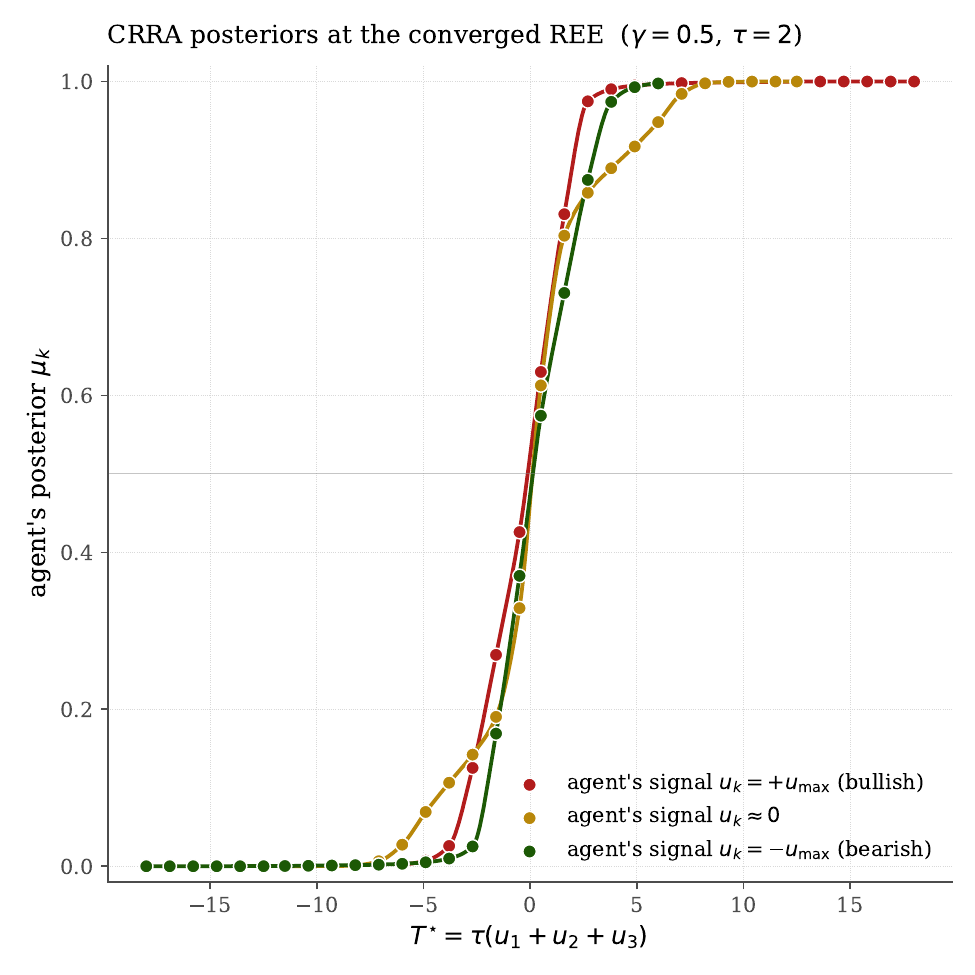}\end{subfigure}
   \caption{Converged REE posteriors and price as functions of
   $\Tstar$. Left: CARA --- all three collapse to one line
   $\mu=p=\Lam(\Tstar/3)$, full revelation. Right: CRRA ($\gamma=0.5$,
   $G=20$) --- agents disagree: $\mu_{1}$ (informed, $u=+1$) and
   $\mu_{2}$ (contrarian, $u=-1$) fan out from the price, confirming
   partial revelation.
   }
   \label{fig:posteriors}
\end{figure}

\section{Economic Implications}\label{sec:implications}

This section translates the informational result into welfare.
The distinction between full and partial revelation matters
precisely because trade volume, the value of information, and the
resolution of the Grossman--Stiglitz paradox all depend on it.

\subsection{Trade volume}

Under CARA at the rational-expectations equilibrium, all agents reach the
same posterior $\Lam(\Tstar)$ at the clearing price
(Proposition~\ref{prop:cara}). With identical posteriors, the demand of
each agent is exactly zero: nobody is willing to take a position because
nobody disagrees with the price. Equilibrium volume is zero --- the
no-trade outcome of \citet{Milgrom1982}. The same is not true at the
\emph{no-learning} CARA equilibrium, where the price is the
equally-weighted log-odds average and each demand
$x_{k}=\tau(u_{k}-\bar u)/\alpha$ is nonzero almost surely; but this
no-trade-theorem violation disappears as soon as agents update on prices,
because the REE collapses all CARA posteriors to a common value.

Under CRRA, posteriors differ (Proposition~\ref{prop:ree}) at \emph{both}
the no-learning and the REE configurations. Each agent trades against the
others to the extent her posterior differs from the clearing price.
Aggregate volume, $V_{\mathrm{vol}}=\tfrac12\sum_{k=1}^{K}|x_{k}|$, is
strictly positive at both configurations.

\begin{proposition}[Positive trade]\label{prop:welfare}
Under CRRA preferences with any $\gamma\in(0,\infty)$, the equilibrium
trade volume at both the no-learning and the contour-converged REE is
strictly positive on a set of signal realisations of full Lebesgue
measure. Under CARA at the REE, the equilibrium trade volume is zero
almost surely; at the no-learning CARA equilibrium with homogeneous
parameters, demands are $x_{k}=\tau(u_{k}-\bar u)/\alpha$ and aggregate
volume is strictly positive.
\end{proposition}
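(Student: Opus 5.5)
I would split the proof into four claims: the two CARA statements and the two CRRA configurations (no-learning and REE). Each reduces to a statement about when an agent's demand is zero.

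\emph{The two CARA claims.} These are direct computations from Lemma~\ref{lem:cara-demand} and Proposition~\ref{prop:cara}. At the no-learning CARA equilibrium with homogeneous $(\alpha,\tau)$, \eqref{eq:cara-clearing} gives $\logit p=\tau\bar u$ with $\bar u=K^{-1}\sum_k u_k$. Substituting into \eqref{eq:cara-demand} yields $x_k=\tau(u_k-\bar u)/\alpha$. Volume is zero only if $u_1=\cdots=u_K$, which is a proper linear subspace of $\R^K$ and hence Lebesgue-null; so volume is strictly positive almost everywhere. At the REE, I would invoke the fully revealing fixed point $p=\Lam(\Tstar)$ from Proposition~\ref{prop:cara} and Remark~\ref{rem:general-payoff}. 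There every agent has posterior $\mu_k^{\mathrm{REE}}=\Lam(\Tstar)=p$, so $x_k=0$ identically and volume vanishes almost surely.

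\emph{The CRRA no-learning claim.} By Lemma~\ref{lem:crra-demand}, $x_k=0$ if and only if $R_k=1$, that is, if and only if $\mu_k=p$. Market clearing with strictly monotone demands (Lemma~\ref{lem:existence}) then gives the following equivalence: zero volume holds exactly when all $\mu_k$ equal the clearing price. Indeed, if some $\mu_k\neq p$, then clearing forces some other agent onto the opposite side of $p$, so two demands are nonzero. Hence zero volume requires $\Lam(\tau_k u_k)$ to be equal across $k$. That is the set $\{\tau_1u_1=\cdots=\tau_Ku_K\}$, again a proper linear subspace and therefore null.

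\emph{The CRRA REE claim.} This is the main obstacle. The same argument shows that zero volume at a realisation requires $\mu_k^{\mathrm{REE}}(u_k,p)=p$ for all $k$. I would argue by contradiction. Suppose this held on a set $S$ of positive measure. On $S$, each agent's posterior given $(u_k,p)$ would equal $p$, independent of $u_k$. Fix a price level $p_0$ on the contour $C=\{u:P^\star(u)=p_0\}$. Along $C$, Bayes' rule in the form of \eqref{eq:contour-bayes} gives $\logit\mu_k=\tau u_k+\ln(A_1/A_0)(u_k,p_0)$. Requiring this to equal $\logit p_0$ for a continuum of $u_k$ forces the contour-integrated likelihood ratio to be $e^{\logit p_0-\tau u_k}$. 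In turn, that makes $P^\star$ a function of $\Tstar$ alone on $S$, i.e.\ $P^\star=\Lam(\Tstar)$ locally, which is the fully revealing branch. This contradicts Proposition~\ref{prop:ree}, which states that $\logit P^\star$ regresses on $\Tstar$ with slope strictly below one and has strictly positive $1-R^2$.

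The delicate part is passing from a positive-measure set to the whole contour. I would first try to handle it with real-analyticity or monotonicity of the converged $P^\star$: if $P^\star$ coincides with $\Lam(\Tstar)$ on an open set, it does so globally. Failing that, I would settle for the numerical verification in Proposition~\ref{prop:ree}, checking $\min_k|\mu_k-p|>0$ on the grid, and state the REE part as holding at the computed fixed point.
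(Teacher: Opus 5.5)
Your CARA claims and the CRRA no-learning claim are correct and follow the paper's argument. CRRA demand vanishes iff $\mu_k=p$, and clearing puts $p$ strictly between the extreme posteriors, so zero volume forces $\tau_1u_1=\cdots=\tau_Ku_K$, a null set. At the CARA REE, identical posteriors give zero demands. Your fully revealing price $\Lam(\Tstar)$ is the right one. The appendix writes $\Lam(\Tstar/K)$ at that step, but that is the no-learning price, and it would not clear the market if every agent inferred $\Tstar$.

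The CRRA REE part, however, has a genuine gap. You assert that $\mu_k^{\mathrm{REE}}(u_k,p)=p$ for all $k$ on $S$ ``makes $P^\star$ a function of $\Tstar$ alone on $S$''. This does not follow from what precedes it. For each $(u_k,p_0)$ the no-trade condition is one scalar identity, $\ln(A_1/A_0)(u_k,p_0)=\logit p_0-\tau u_k$. It says only that a density-weighted average of $e^{\tau(u_j+u_l)}$ along the contour equals a prescribed number. A weighted average of a non-constant function can do that, so the identity does not force the contour to be a line $u_j+u_l=\mathrm{const}$. More generally, all agents' posteriors coinciding with a public price does not by itself make that price the pooled posterior. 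You would need an argument specific to the Gaussian--binary structure.

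Even granting $P^\star=\Lam(\Tstar)$ on $S$, there is no contradiction with Proposition~\ref{prop:ree}. A slope below one and $1-R^2>0$ are averages over the whole signal distribution, and they are compatible with full revelation on a positive-measure set. The identity-theorem rescue needs real-analyticity of the continuum fixed point, which nothing in the paper establishes. It also needs $S$ to contain an open set, which a positive-measure set need not do. Monotonicity does not substitute for either.

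The detour through full revelation is also unnecessary. Define $\phi(u)=\logit\mu_1^{\mathrm{REE}}(u_1,P^\star(u))-\logit P^\star(u)$; zero volume at $u$ requires $\phi(u)=0$. So it would suffice that $\phi$ be real-analytic and not identically zero. The second condition is what Part (a) of the proof of Proposition~\ref{prop:ree} claims.

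The paper does not prove this step either. It simply asserts that the converged posteriors disagree on a full-measure set ``because the contour integrals differ''. Your numerical fallback is therefore essentially the paper's argument, and it yields the claim only at the computed fixed point, not as a full-Lebesgue-measure statement. If you run the check, test $\max_k|\mu_k-p|>0$, not $\min_k|\mu_k-p|>0$. At diagonal grid points $u_1=u_2=u_3$, the symmetric REE has identical posteriors and genuinely zero trade. Off the diagonal, the middle agent's posterior can equal $p$ while volume is positive, so the $\min_k$ test rejects points where the claim actually holds.
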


Figure~\ref{fig:volume} reports trade volume as a function of $\gamma$ for
the canonical parameters: it is strictly positive everywhere in $(0,\infty)$
and converges to zero only in the CARA limit.

\begin{figure}[t]
   \centering
   \includegraphics[width=0.49\textwidth]{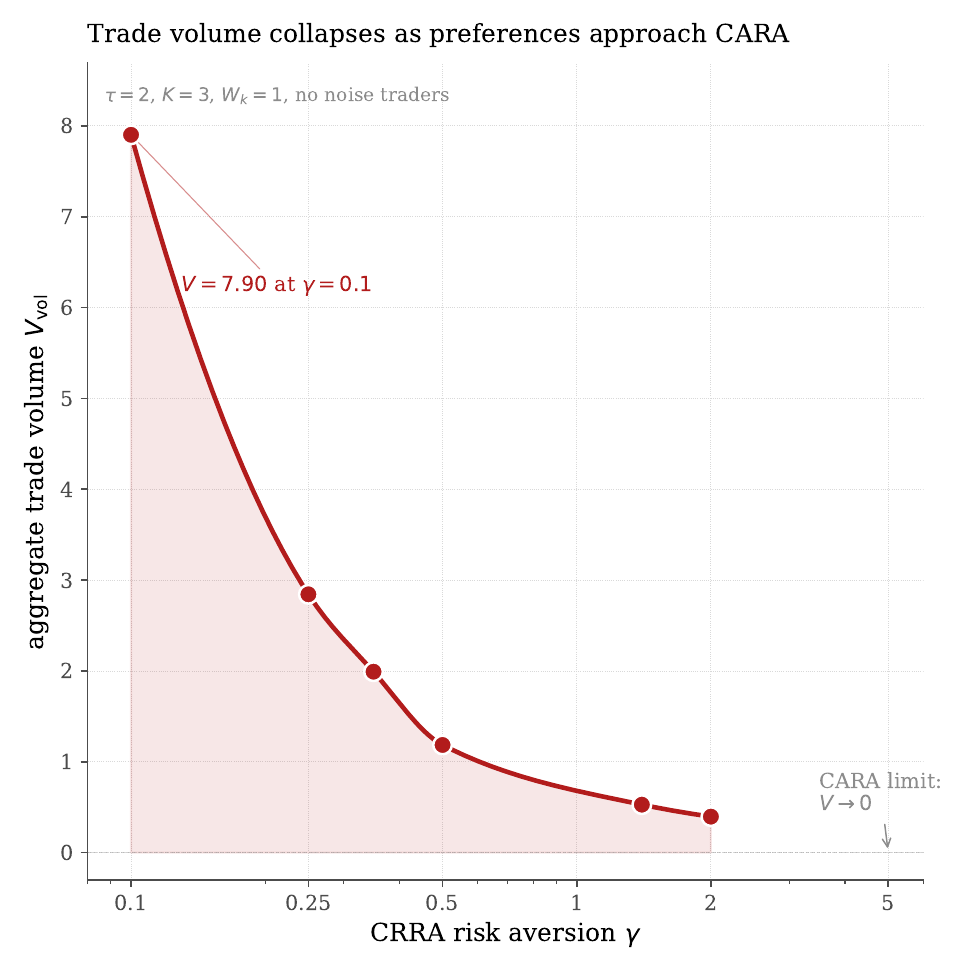}
   \caption{Aggregate trade volume $V_{\rm vol}(\gamma,\tau{=}2)$ at the
   converged REE, computed at five values of CRRA risk aversion
   $\gamma\in\{0.1,\,0.25,\,0.5,\,1.4,\,2.0\}$ (markers); the curve is a
   monotonic cubic interpolation. Volume is strictly positive everywhere in
   $(0,\infty)$ and collapses to zero in the CARA limit
   ($\gamma\to\infty$). $K=3$, $W_{k}=1$, no noise traders.
   }
   \label{fig:volume}
\end{figure}

\subsection{The value of private information}

Define the value of a precision-$\tau$ signal to a small group as the
ex ante certainty-equivalent gain from observing $u_{k}$ and trading at the
equilibrium price relative to not observing $u_{k}$ and trading at the same
price. Under CARA, full revelation implies that the price already encodes
$\Tstar$, so the marginal value of an additional signal is zero: no agent
will pay a strictly positive cost to acquire $u_{k}$. Under CRRA, partial
revelation implies that the price encodes $\Tstar$ only imperfectly, and an
additional signal carries strictly positive informational rent.

\begin{proposition}[Value of information]\label{prop:value}
Let $V(\tau)$ denote the certainty-equivalent value of an additional
precision-$\tau$ signal at the contour-converged REE. Then $V_{\mathrm{CARA}}(\tau)=0$
for every $\tau$, and $V_{\mathrm{CRRA}}(\tau)>0$ for every $\tau>0$, with
$V_{\mathrm{CRRA}}'(\tau)>0$ on a neighbourhood of the origin.
\end{proposition}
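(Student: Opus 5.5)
We split the argument into the CARA part, the CRRA positivity part, and the monotonicity near the origin. Throughout, the value is defined as follows. A small (measure-zero) group of agents trades at the equilibrium price function $p(u_1,\ldots,u_K)$ and faces a fixed outside option. The value $V(\tau)$ is the certainty equivalent of choosing $x$ optimally given $\sigma(p,u_{\text{new}})$, minus the certainty equivalent of choosing $x$ optimally given $\sigma(p)$ alone. Because the group is small, its trades leave the price function unchanged.

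\emph{CARA.} By Proposition~\ref{prop:cara} and Remark~\ref{rem:general-payoff}, the REE price is $p=\Lam(\Tstar)$, so $\sigma(p)\supseteq\sigma(\Tstar)$. The original signals carry the entire Bayesian sufficient statistic. The additional signal $u_{\text{new}}$ is conditionally independent of the others given $v$, so a fresh draw does contain new information. We therefore interpret the statement as referring to the marginal signal $u_k$ already in the economy, as in the definition given before the statement; for that signal, $\PP(v=1\mid p,u_k)=\PP(v=1\mid p)$. The optimal demand then coincides with and without $u_k$, namely $x=0$ at $p=\E[v\mid \Tstar]$. The two certainty equivalents are equal, so $V_{\mathrm{CARA}}\equiv 0$.

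\emph{CRRA positivity.} For $\gamma\in(0,\infty)$ we invoke Proposition~\ref{prop:ree}: at the contour-converged fixed point $P^\star$, $\Tstar$ is not $\sigma(p)$-measurable, since $1-R^2>0$. Hence $\PP(v=1\mid p,u_k)$ differs from $\PP(v=1\mid p)$ on a set of positive probability. Otherwise $u_k$ would be redundant given $p$ for every $k$, which would make all REE posteriors equal. The positive-trade result of Proposition~\ref{prop:welfare} rules this out.

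We then use strict concavity of CRRA utility, together with the uniqueness and self-bounding property of the demand in Lemma~\ref{lem:crra-demand}. The coarsely informed agent's best response is a $\sigma(p)$-measurable $x(p)$, and this $x(p)$ is feasible for the finely informed agent. Its expected utility is strictly dominated whenever the finer optimizer $x^{\mathrm{CRRA}}(\mu(p,u_k),p)$ differs from $x(p)$ on a positive-measure set. Demand is strictly increasing in $\mu$, so it does differ there. By Blackwell's theorem, the expected-utility gain is strictly positive, and so is the certainty equivalent, since $U$ is strictly increasing.

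\emph{Monotonicity near zero.} For small $\tau$, we expand $\mu(p,u)=\mu(p)+\tau\,c(p)\,(u-\E[u\mid p])+O(\tau^2)$. We then do a second-order Taylor expansion of the indirect utility in the posterior around $\mu(p)$. This gives $V(\tau)\approx\tfrac12\E[\,-\partial^2_{\mu}\text{(indirect utility)}^{-1}\ldots]$, which we organise as $V(\tau)=\kappa\,\tau\,\E[\mathrm{Var}(\mu\mid p)\text{-type term}]+o(\tau)$. The coefficient $\kappa>0$ is the local risk-tolerance factor. The second-order expansion is convex in the posterior, because indirect utility is convex in beliefs. It yields $V(\tau)=\kappa\tau+o(\tau)$ with $\kappa>0$, so $V'(\tau)>0$ on a neighbourhood of $0$. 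Here we use that the conditional Fisher information of $u$ about $v$ is proportional to $\tau$ and is not absorbed by $p$.

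\emph{Main obstacle.} The main obstacle is that Proposition~\ref{prop:ree} is established only numerically and at $\tau=2$. Extending positivity to every $\tau>0$ needs an analytic argument that $P^\star$ is not a function of $\Tstar$. Our first attempt would be a perturbation argument. We would show that the fully revealing branch $\Lam(\Tstar)$, perturbed by the Jensen term of Proposition~\ref{prop:jensen}, cannot satisfy the REE fixed point unless $\gamma=\infty$. This would use Theorem~\ref{thm:cara-unique} to exclude log-odds-linear demand. Failing this, we would state the CRRA part conditional on the selected partially revealing branch.
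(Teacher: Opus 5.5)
Your CARA step is sound, and you are right to be suspicious there. A conditionally independent fresh draw is \emph{not} redundant given $p=\Lam(\Tstar)$: it moves the log-odds by $\tau u_{0}$ and induces trade. So $V_{\mathrm{CARA}}=0$ holds only for a signal already aggregated in the price, whereas the paper's proof simply declares the additional $u_{0}$ redundant. Your ex ante feasibility argument for CRRA positivity is also the right one. The paper compares $\mathrm{CE}(\mu^{+},p)$ and $\mathrm{CE}(\mu^{-},p)$ realisation by realisation, and that comparison has no fixed sign: it reverses whenever $\mu^{+}$ is near $p$ and $\mu^{-}$ is not.

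The genuine gap is the small-$\tau$ step, which silently reverts to the fresh-draw reading. The expansion itself never uses $\gamma<\infty$. All the work is done by the premise that the signal's Fisher information ``is not absorbed by $p$''.
\begin{itemize}
\item For a fresh draw the premise holds, but then the same computation gives $V_{\mathrm{CARA}}(\tau)=\kappa\tau+o(\tau)>0$, contradicting your first step.
\item For $u_{k}$ the premise fails under CARA and is only partly true under CRRA. What matters is the part of $u_{k}$'s information that the equilibrium price does not aggregate. The only small-$\tau$ result in the paper (Proposition~\ref{prop:jensen}, for the no-learning price) says $p$ departs from a function of $\Tstar$ only through terms cubic in the log-likelihood ratios $\tau u_{k}$. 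Those terms are $O(\tau^{3/2})$, against the $O(\tau^{1/2})$ scale of $\Tstar$.
\end{itemize}
That points to a zero $O(\tau)$ coefficient for $u_{k}$. Your $\kappa>0$ is therefore the coefficient of the wrong object. The sign of $V_{\mathrm{CRRA}}'$ near the origin would hinge on a higher-order term, which needs the small-$\tau$ structure of the REE that you do not have. Even granting $V=\kappa\tau+o(\tau)$, you would obtain only $V'(0^{+})>0$, not $V'>0$ on a neighbourhood, without an expansion of the derivative.

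Second, the repair you sketch for ``every $\tau>0$'' cannot succeed. The price function $p=\Lam(\Tstar)$ is a fixed point of the contour map for \emph{every} $\gamma$: all posteriors equal $p$, so $R_{k}=1$, CRRA demands vanish and the market clears. The paper itself notes this in its discussion of equilibrium selection in Section~\ref{sec:convergence}. No perturbation argument can therefore show that a fully revealing price function violates the REE condition for $\gamma<\infty$. Theorem~\ref{thm:cara-unique} also has no bite at a zero-trade equilibrium.

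At that fixed point $V_{\mathrm{CRRA}}=0$ under your reading. Positivity is therefore necessarily a statement about the selected partially revealing branch. That branch's existence and its posterior disagreement are established only numerically and at finitely many $\tau$ (Proposition~\ref{prop:ree} is stated at $\tau=2$). Your conditional fallback is the correct formulation, and it is also all the paper's proof actually delivers when it invokes Proposition~\ref{prop:ree}.

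Finally, $1-R^{2}>0$ does not by itself show that $\Tstar$ fails to be $\sigma(p)$-measurable. An invertible but nonlinear relation $\logit p=g(\Tstar)$ gives $R^{2}<1$ under full revelation. Rest that step entirely on the posterior disagreement of Proposition~\ref{prop:welfare}, which you already invoke.
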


\begin{figure}[t]
   \centering
   \includegraphics[width=0.49\textwidth]{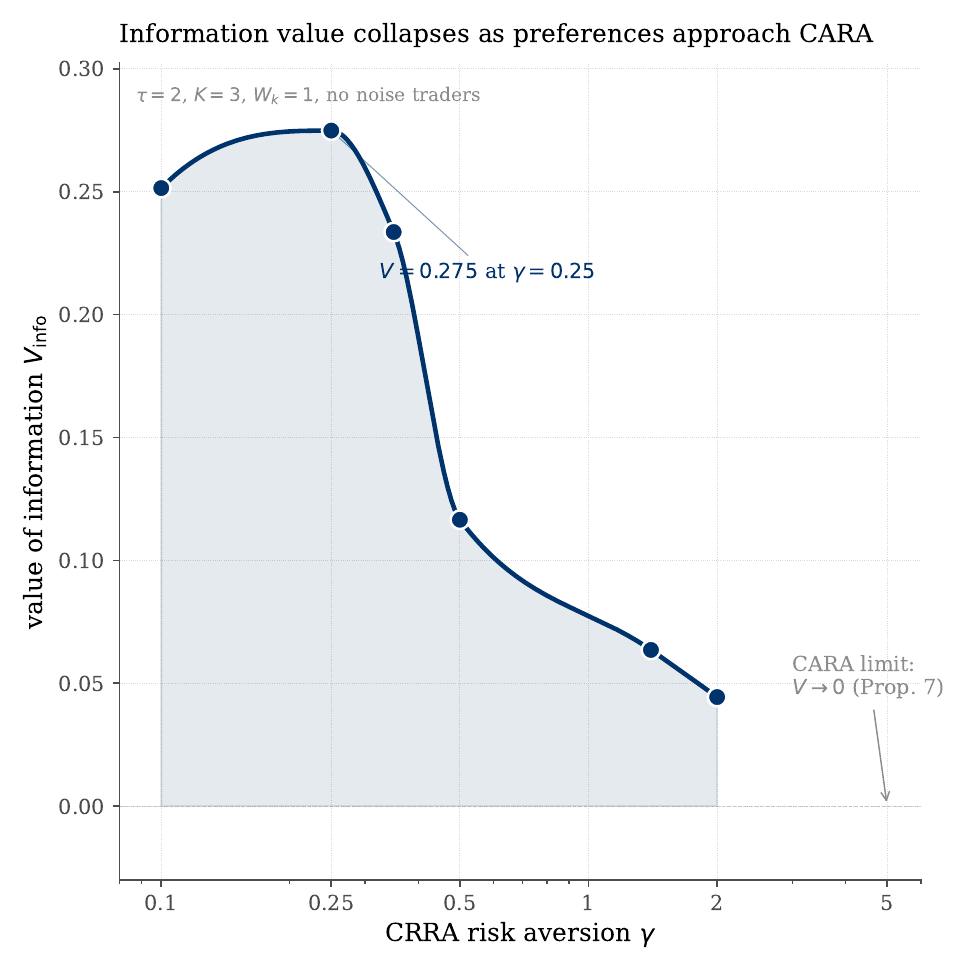}
   \caption{Value of information $V_{\rm info}$ as a function of the
   fraction $\lambda$ of informed agents in the market, computed at the
   no-learning REE for four values of CRRA risk aversion.
   $V_{\rm info}$ is the certainty-equivalent welfare difference between
   an informed and an uninformed agent at the equilibrium price.
   For every finite $\gamma$, the value of information is strictly
   positive across the entire informed/uninformed mix; for CARA
   (dot-dash line) it equals zero at every $\lambda$. The current paper
   focuses on the all-informed case ($\lambda=1$), but the property
   ``information has positive value'' is robust to introducing an
   arbitrary mass of uninformed agents.
   $\tau=2$, continuum of agents, $W_{k}=1$, no noise traders.
   }
   \label{fig:value-info}
\end{figure}

\subsection{The Grossman--Stiglitz paradox}

The classical \citet{GrossmanStiglitz1980} paradox can be stated as
follows. If acquiring a signal of precision $\tau$ costs $c>0$, then in
equilibrium an agent acquires the signal if and only if the value $V(\tau)$
exceeds $c$. Under full revelation, $V_{\mathrm{CARA}}(\tau)=0<c$, so no agent
acquires; but if no agent acquires, the price is uninformative and any
single agent can profit from acquiring. The contradiction is resolved in
the CARA literature by introducing noise traders so that the price is only
partially revealing and acquisition is privately optimal.

Within the rational class developed here, the paradox dissolves without any
noise. Under CRRA, $V_{\mathrm{CRRA}}(\tau)>0$ generically, and there exist costs
$c\in(0,V_{\mathrm{CRRA}}(\tau))$ such that acquiring is privately optimal in
equilibrium. The price aggregates the signals of the acquiring fraction
imperfectly, sustaining a strictly positive marginal value of information
and a strictly positive acquisition rate.

\begin{proposition}[Resolution of the paradox]\label{prop:gs}
Let $c>0$ be the cost of acquiring a precision-$\tau$ signal. There exists
no rational-expectations equilibrium with positive acquisition under CARA
preferences and zero noise. Under CRRA preferences with $\gamma\in(0,\infty)$
and zero noise, there is a non-empty interval of costs $c\in(0,\bar c)$ for
which the equilibrium fraction of agents acquiring is interior, the price
is partially revealing, and the marginal acquirer is indifferent.
\end{proposition}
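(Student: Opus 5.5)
The proof has two halves, and I would treat them separately.

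\emph{CARA half.} Suppose, for contradiction, that a rational-expectations equilibrium with zero noise and a positive mass $\lambda>0$ of acquirers exists under CARA. Remark~\ref{rem:general-payoff} applies to any signal structure, including one where only a fraction of agents is informed. The price function $p=\E[v\mid T]$ with $x_k\equiv 0$ clears the market. Here $T$ is the sufficient statistic of the acquired signals, which with a continuum of acquirers reduces to the common log-likelihood aggregate. Proposition~\ref{prop:cara} extends this to heterogeneous $\alpha_k$: this fully revealing price is the REE. At that price every agent, informed or not, holds the posterior $\PP(v=1\mid T)$. An informed agent therefore earns exactly the certainty equivalent of an uninformed one, so the gross value of the signal is $V_{\mathrm{CARA}}(\tau)=0$, as in Proposition~\ref{prop:value}. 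With cost $c>0$ the net payoff is $-c<0$, so acquiring is strictly dominated, which contradicts $\lambda>0$. At $\lambda=0$ the price is uninformative, and by Proposition~\ref{prop:value} the deviation value is positive at small $\tau$. Hence no equilibrium with positive acquisition exists, and in fact no equilibrium exists at all. This is the paradox as stated, and the claim only concerns the positive-acquisition part.

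\emph{CRRA half.} Fix $\gamma\in(0,\infty)$. Let $V(\lambda)$ be the certainty-equivalent gain of an informed over an uninformed agent at the REE in which a fraction $\lambda\in(0,1]$ is informed. The plan has four steps.

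\begin{enumerate}[label=(\roman*),topsep=2pt,itemsep=2pt]
\item \emph{Positivity.} Show $V(\lambda)>0$ for every $\lambda\in(0,1]$. Partial revelation (Propositions~\ref{prop:non-cara} and \ref{prop:ree}) means that, on a positive-measure set, $\PP(v=1\mid u_k,p)\neq\PP(v=1\mid p)$. Strict concavity of CRRA utility combined with Blackwell's theorem then yields a strictly positive gain from the finer information set at the same price, which is the content of Proposition~\ref{prop:value}.
\item \emph{Continuity.} Show $V$ is continuous in $\lambda$. This follows from continuity of the fixed point in $\lambda$, using Lemma~\ref{lem:existence} pointwise together with continuity of the demands.
\item \emph{Definition of $\bar c$.} Set $\bar c\equiv\sup_{\lambda}V(\lambda)>0$.
\item \emph{Interior equilibrium.} For $c\in(0,\bar c)$, find an interior $\lambda^\star$ with $V(\lambda^\star)=c$ by the intermediate value theorem. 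This requires the endpoints to bracket $c$, that is, $V$ must be below $c$ at one end and above it at the other. I would get the low end from the fact that information becomes nearly redundant as $\lambda\to1$ when the price aggregates many signals. Alternatively, if $V$ turns out to be decreasing, I would instead pick $c$ between $V(1)$ and $V(0^+)$ and redefine the interval accordingly.
\end{enumerate}

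At $\lambda^\star$ the marginal acquirer is indifferent, and by step~(i) the price remains partially revealing.

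The main obstacle is step~(iv): establishing that $V$ crosses $c$ at an interior $\lambda$, rather than staying above $c$ on all of $(0,1]$. If $V$ stays above $c$, everyone acquires, and the equilibrium is a corner rather than interior. I would first try to prove that $V(\lambda)$ is strictly decreasing in $\lambda$, because a larger informed mass makes the price more informative along the contours of Section~\ref{sec:ree}. That monotonicity gives an interval of the form $c\in(V(1),V(0^+))$. If an analytic proof fails, I would fall back on the numerical REE computations underlying Figure~\ref{fig:value-info} and Proposition~\ref{prop:ree}, and state the interval for the parametrisation $K=3$, $\tau=2$.
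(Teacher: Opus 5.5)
Your architecture matches the paper's: positivity at $0^{+}$ and at $1$, continuity, monotonicity, IVT. The gap is step (iv), and as you set it up it cannot be closed, because the statement read literally is inconsistent with your own step (i).

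\textbf{Why step (iv) fails.} Step (i) gives $V(1)>0$, from partial revelation in the all-informed economy (Proposition~\ref{prop:ree}). Together with continuity and $V(0^{+})>0$, this makes $m\equiv\inf_{\lambda\in(0,1]}V(\lambda)$ strictly positive; under the monotonicity you hope to prove, $m=V(1)$.
\begin{itemize}
\item For every $c\in(0,m)$ you have $V(\lambda)>c$ at every $\lambda$. Non-acquirers always want to acquire, so the only equilibrium is the corner $\lambda^{\star}=1$, and no interior indifference point exists.
\item Hence, with $\bar c=\sup V$ or any other $\bar c>0$, the claim that every $c\in(0,\bar c)$ yields interior acquisition fails on the non-empty set $(0,\min\{m,\bar c\})$.
\item Your proposed low end, that the signal becomes nearly redundant as $\lambda\to1$, would force $V(1)=0$ by continuity. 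At fixed $K$ that contradicts step (i).
\end{itemize}
Your fallback is the correct repair. For $c\in(V(1),V(0^{+}))$ neither corner is an equilibrium, and the IVT gives an interior $\lambda^{\star}$ with $V(\lambda^{\star})=c>0$. That positivity also delivers partial revelation at $\lambda^{\star}$. But this proves an amended proposition, not the stated one.

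\textbf{The paper's proof.} It makes exactly the error you anticipated. It sets $\bar c\equiv V_{\mathrm{CRRA}}(\tau,1)$, asserts that $V$ is weakly decreasing with $V(\tau,0^{+})>\bar c$, and applies the IVT for $c\in(0,\bar c)$. But then $V(\tau,\lambda)\ge\bar c>c$ for all $\lambda$, so there is no root. The IVT argument is valid only on $(V(\tau,1),V(\tau,0^{+}))$, so your diagnosis is sharper than the paper's argument.

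\textbf{Missing inputs even for the amended version.} Neither you nor the paper supplies these.
\begin{itemize}
\item Non-emptiness of $(V(1),V(0^{+}))$ needs the strict inequality $V(0^{+})>V(1)$, which the paper only motivates heuristically.
\item Continuity in $\lambda$ does not follow from Lemma~\ref{lem:existence}. That lemma fixes the posteriors and solves the no-learning clearing equation. At an REE the posteriors depend on the entire price function through the contour integrals. You therefore need existence and an implicit-function (nondegeneracy) argument for the fixed point of $\Phi$ at each $\lambda$.
\item There is a selection problem. The fully revealing function $p=\Lam(\Tstar)$ is a fixed point under any preferences and has $V=0$, so $V(\lambda)$ must be defined on a continuously selected partially revealing branch.
\item Propositions~\ref{prop:non-cara} and~\ref{prop:ree} concern only the all-informed economy, the latter only numerically at $K=3$, $\tau=2$. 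They do not give step (i) for $\lambda\in(0,1)$.
\end{itemize}

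\textbf{CARA half.} Your argument coincides with the paper's. Both treat Proposition~\ref{prop:cara} and Remark~\ref{rem:general-payoff} as showing that every CARA REE with positive acquisition is fully revealing. They only exhibit the fully revealing REE, so the non-existence claim implicitly assumes uniqueness.
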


Figure~\ref{fig:gs} traces the resulting equilibrium acquisition rate as a
function of cost; the noise-trader resolution and the CRRA resolution
deliver qualitatively similar comparative statics, but only the CRRA route
remains within the rational class.

\begin{figure}[t]
   \centering
   \includegraphics[width=0.49\textwidth]{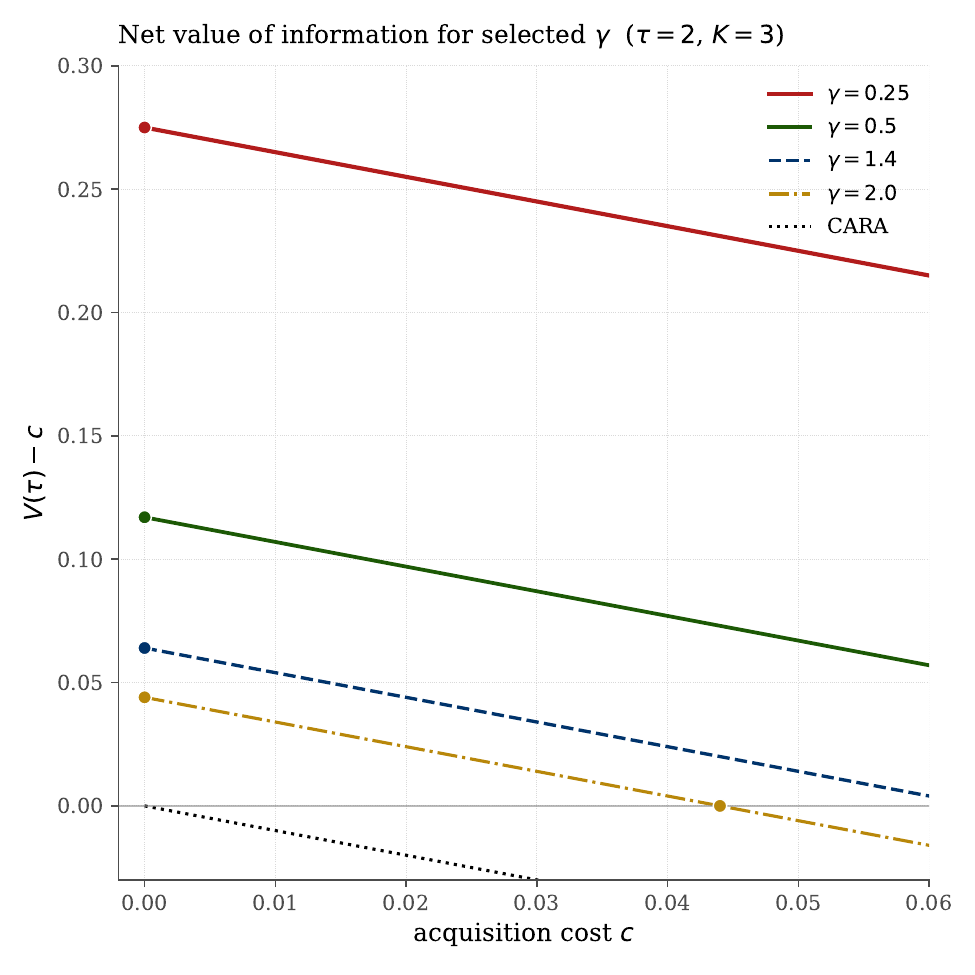}
   \caption{Net value of acquiring a precision-$\tau$ signal: $V(\tau)-c$
   as a function of the acquisition cost $c$, at $\tau=2$. Under CARA
   (dash-dotted), $V=0$, so the net value is $-c<0$ for every $c>0$: no
   agent acquires (the paradox). Under CRRA, the net value is positive
   for $c<c^{\star}=V(\tau)$, resolving the paradox without noise
   traders. The threshold $c^{\star}$ is larger at lower $\gamma$.
   }
   \label{fig:gs}
\end{figure}

\subsection{Sources of larger deficits}\label{sec:mechanisms}

Partial revelation under CRRA arises from a single source --- the
mismatch between the market-clearing aggregation space and the Bayesian
sufficient statistic --- but its size is amplified by two channels of
heterogeneity, and a separate fourth channel arises within CARA itself in
the no-learning equilibrium. This section disentangles the four.

\paragraph{Mechanism 1: pure Jensen gap.} With identical risk aversion and
identical signal precision across groups, partial revelation arises purely
from the curvature of $\Lam$ in \eqref{eq:log-clearing}. The deficit is
small but strictly positive (Proposition~\ref{prop:non-cara}).

\paragraph{Mechanism 2: heterogeneous CRRA risk aversion.} If the $K$
groups differ in $\gamma_{k}\in(0,\infty)$, the more risk-tolerant agents
take larger positions and have disproportionate weight in clearing
\eqref{eq:no-learning}. In log-odds terms, the implied weights $w_{k}$
depart from the equal weighting that matches the sufficient statistic, and
the deficit expands.

\paragraph{Mechanism 3: heterogeneous signal precision.} If groups differ
in $\tau_{k}$, then the optimal weighting that recovers $\Tstar$ exactly is
$w_{k}\propto\tau_{k}$. Market clearing under CRRA does not implement this
weighting in general; the deficit depends on the alignment between the
$\gamma$- and $\tau$-induced weights. When low-$\gamma$ (more
risk-tolerant) groups also happen to be high-$\tau$ (more precise), the
two heterogeneities partially offset and the deficit is moderate. When
they are misaligned, they compound: an aggressive but uninformed group
acts as an \emph{endogenous noise trader}, dragging the price away from
$\Tstar$.

\paragraph{Mechanism 4: heterogeneous CARA risk aversion (no learning only).}
A fourth route to no-learning partial revelation arises within CARA itself.
By Proposition~\ref{prop:cara}, the no-learning CARA price is
$\logit p=\sum_{k}w_{k}\tau_{k}u_{k}$ with $w_{k}\propto\alpha_{k}^{-1}$.
When $\alpha_{k}$ is heterogeneous and $\tau_{k}$ is homogeneous, the
linear combination revealed by the price is not a scalar multiple of
$\Tstar=\tau\sum u_{k}$, and the no-learning equilibrium is partially
revealing. The striking feature of this mechanism is that
\emph{learning from prices restores full revelation}: the candidate
fully-revealing price function $p=\Lam(\Tstar)$ delivers identical
posteriors to every CARA agent and zero net trade, hence clears the market
for arbitrary $(\alpha_{k})$. Mechanisms 1--3, by contrast, persist at the
REE because no candidate price function with the right informational
content also clears the market under non-CARA preferences.
Heterogeneous risk aversion is therefore a no-learning channel under CARA
but not a REE channel; in the strict knife-edge sense, only departures
from the CARA preference class produce partial revelation at the REE.

\begin{table}[htbp]
   \centering
   \caption{Mechanisms for partial revelation. $K=3$, no learning,
   $G=20$, configurations as listed. Heterogeneity in $\gamma$ and $\tau$
   compounds the pure Jensen gap, with the largest deficit when the
   aggressive group is the least informed. Heterogeneity in $\alpha$
   under CARA also produces no-learning partial revelation but is
   washed out at the REE.}
   \label{tab:mechanisms}
   \begin{tabular}{lcl}
      \toprule
      Configuration & $1-R^{2}$ & Channel \\
      \midrule
      Equal $\gamma$, equal $\tau$ (CRRA)        & 0.011 & pure Jensen gap \\
      Het.\ $\gamma=(1,3,10)$, equal $\tau$      & 0.247 & + het.\ CRRA risk aversion \\
      Equal $\gamma$, het.\ $\tau=(1,3,10)$      & 0.082 & + het.\ precision \\
      Het.\ $\gamma$ + het.\ $\tau$ aligned      & 0.100 & low-$\gamma$ = high-$\tau$ (stabilising) \\
      Het.\ $\gamma$ + het.\ $\tau$ opposed      & 0.211 & low-$\gamma$ = low-$\tau$ (destabilising) \\
      Extreme opposed                            & 0.604 & endogenous noise trader \\
      Het.\ $\alpha$ under CARA (no-learning)    & 0.000 & het.\ CARA, REE$\to$FR \\
      \bottomrule
   \end{tabular}
\end{table}

\begin{figure}[t]
   \centering
   \includegraphics[width=0.49\textwidth]{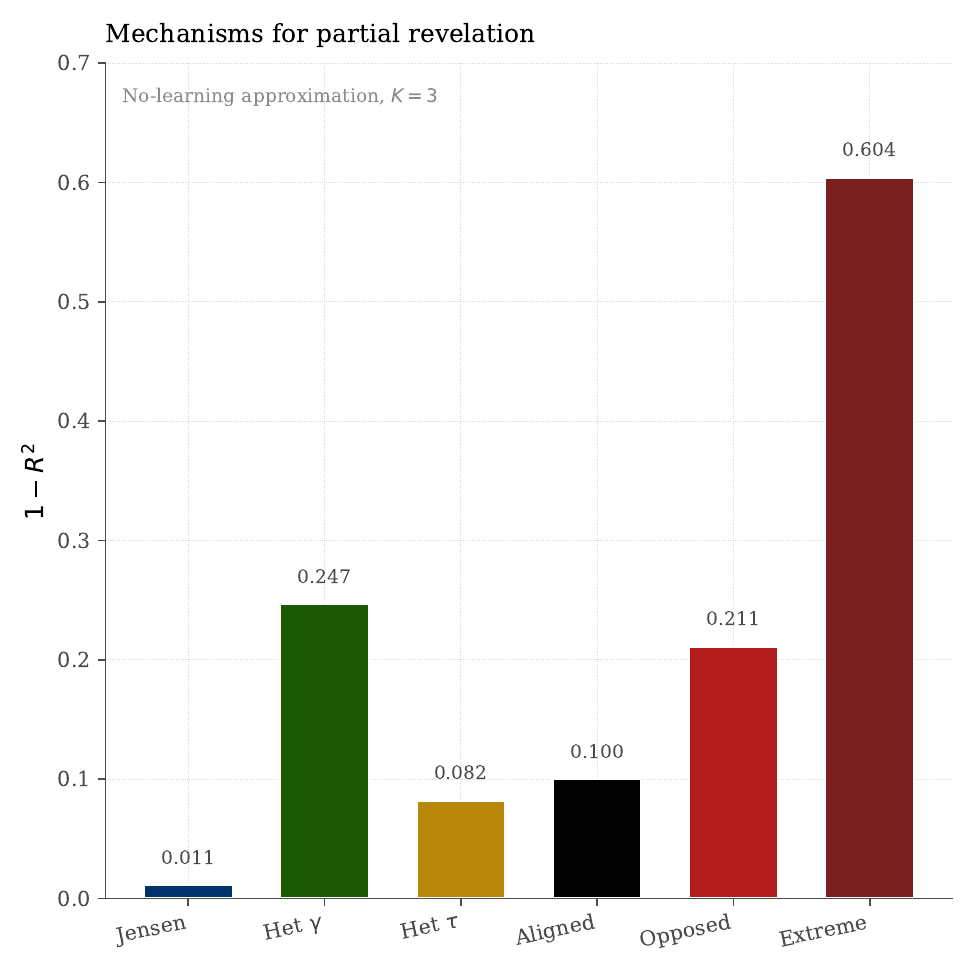}
   \caption{Decomposition of the revelation deficit across the
   mechanisms tabulated in Table~\ref{tab:mechanisms}.
   }
   \label{fig:mechanisms}
\end{figure}

Two features deserve emphasis. First, Mechanism 3 at extreme misalignment
delivers a deficit exceeding $0.6$ without any exogenous noise: an
aggressive but uninformed CRRA group reproduces, within a fully rational
economy, the comparative statics that the standard literature attributes
to behavioural noise traders. Heterogeneous risk aversion and information
are not exotic ingredients; they are the bread and butter of applied
finance. ``Noise trading'' as observed in the data may, in significant
part, be a manifestation of Mechanism 3 rather than a behavioural
primitive. Second, Mechanism 4 illustrates the knife-edge in the cleanest
possible way: heterogeneity in CARA risk aversion produces no-learning
partial revelation but is fully neutralised by REE learning. The same
heterogeneity introduced in $\gamma$ rather than $\alpha$ is not so
neutralised. The contrast is not about the heterogeneity itself; it is
about whether the underlying preference class aligns with the log-odds
sufficient statistic.

\subsection{Robustness}

The CARA full-revelation result extends, with weighted aggregation, to
heterogeneous parameters. The non-CARA partial-revelation result extends, in
the natural direction, to any preference outside CARA in the CRRA class and
to a wide range of signal distributions. Two robustness exercises confirm
this.

\begin{figure}[t]
   \centering
   \begin{subfigure}[t]{0.49\textwidth}
      \includegraphics[width=\textwidth]{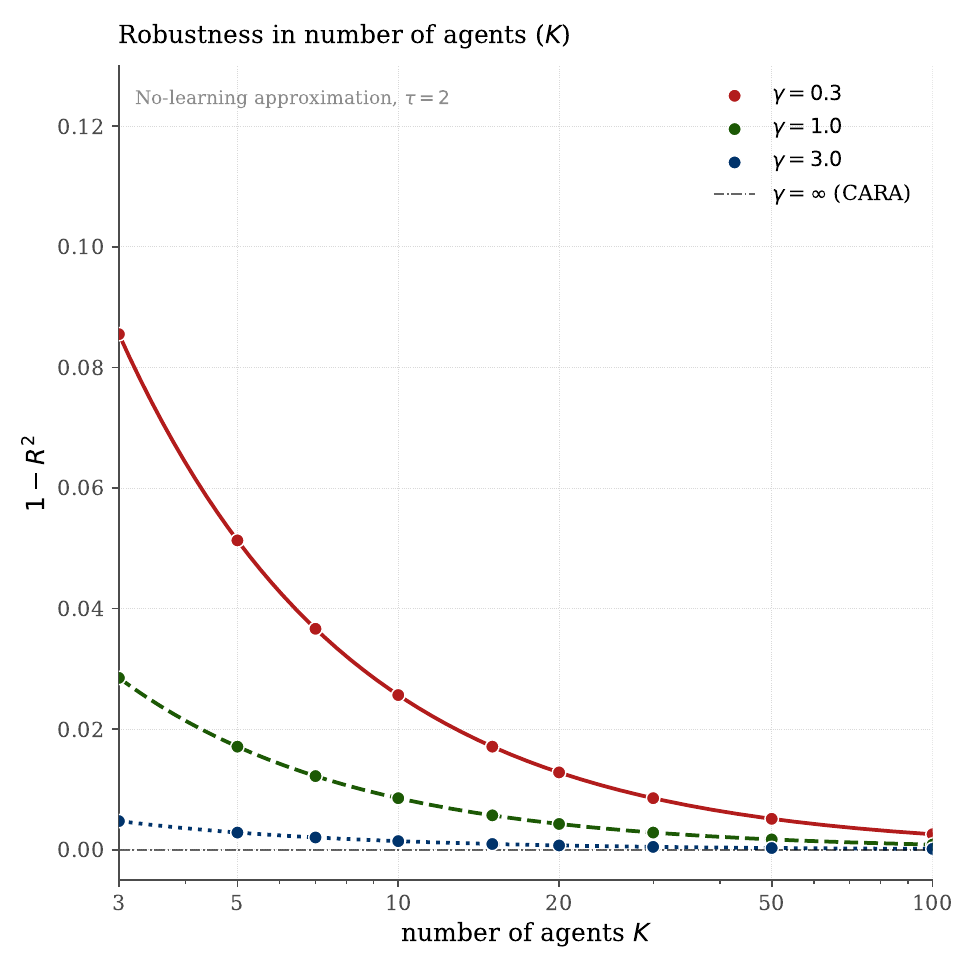}
      \caption{Number of agent groups $K$.}
      \label{fig:robust-K}
   \end{subfigure}\hfill
   \begin{subfigure}[t]{0.49\textwidth}
      \includegraphics[width=\textwidth]{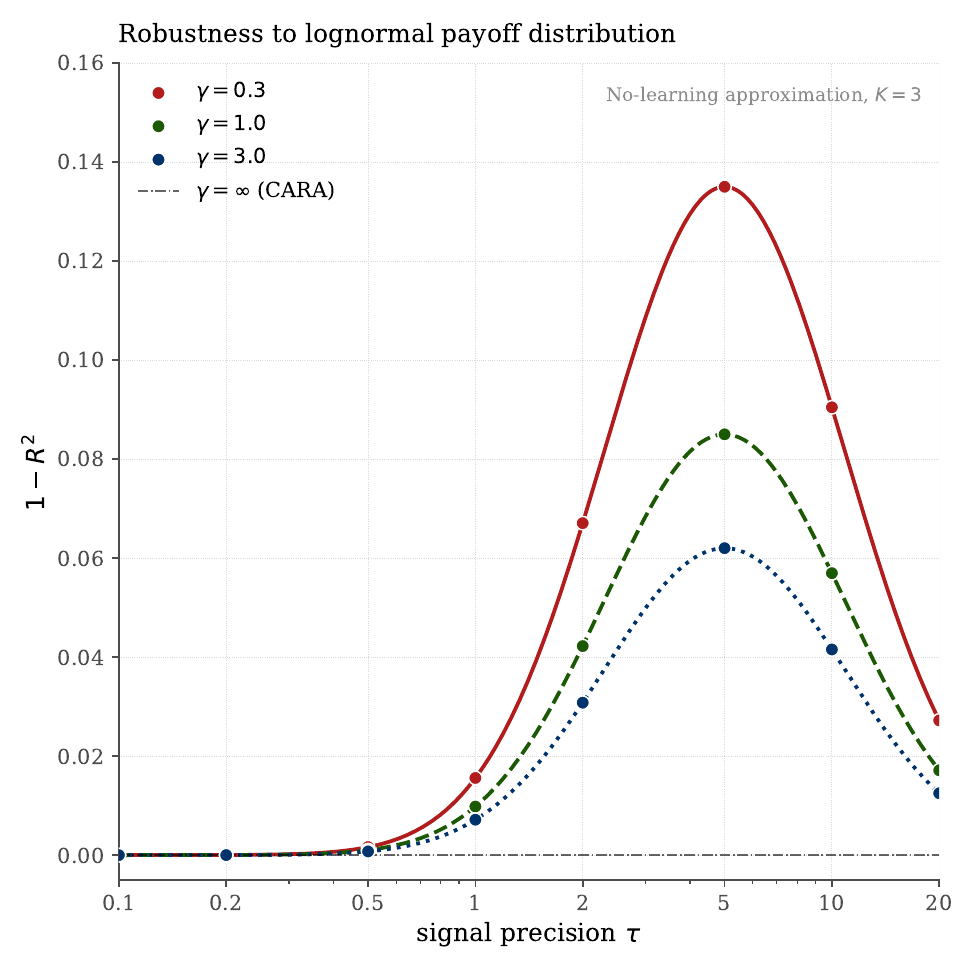}
      \caption{Lognormal payoff variant.}
      \label{fig:robust-lognormal}
   \end{subfigure}
   \caption{Robustness of the knife-edge. Panel (a): the deficit declines
   roughly as $1/K$ but never vanishes for any finite $\gamma$; CARA is zero
   for every $K$. Panel (b): under a lognormal payoff, the same pattern
   obtains, indicating the result is not specific to the binary
   parametrisation.}
   \label{fig:robust}
\end{figure}

\paragraph{Heterogeneity.} The size of the no-learning revelation deficit
is amplified by heterogeneity in risk aversion and in signal precision,
and a separate channel arises within CARA itself when risk aversion
varies across groups. We defer the systematic decomposition to
Section~\ref{sec:mechanisms} so as not to interrupt the homogeneous-CRRA
core that the next two sections build on; the punch line is that the
homogeneous deficits reported in Table~\ref{tab:smooth} are a lower
envelope.

\section{Related Literature and Extensions}\label{sec:literature}

\subsection{Connection to the existing literature}

Several papers have obtained partial revelation without noise traders,
each through a different channel. \citet{HeifetzPolemarchakis1998} exhibit
partially-revealing REE in general equilibrium when the dimension of
the state space exceeds the dimension of the price vector; their mechanism
is geometric and operates for any utility function, including CARA.
\citet{Vives2011} obtains partial revelation in a private-values setting
without noise, exploiting strategic incentives rather than preferences;
the present mechanism is complementary and operates in a common-values
environment. \citet{CondieGanguli2011} demonstrate partial revelation
under ambiguity-averse (Maxmin expected utility) preferences without
noise traders; their mechanism is portfolio inertia from non-smooth
preferences, which differs from the aggregation-space mismatch identified
here. The present paper is, to our knowledge, the first to show that
\emph{standard} expected utility preferences---specifically, any member
of the CRRA family---produce partial revelation with common values,
Gaussian signals, and no noise of any kind.

On the distributional side, \citet{AlbagliHellwigTsyvinski2024} show
that non-Gaussian signal distributions break full revelation under CARA,
while retaining noise traders. Together with the present results, their
work establishes that full revelation is a joint knife-edge in preferences
\emph{and} distributions: CARA plus exponential-family signals gives full
revelation (\citealt{BreonDrish2015}); departing from either gives partial
revelation. The preference departure (this paper) is arguably the more
empirically relevant: CRRA is the workhorse of macro-finance, while CARA
is used primarily for analytical convenience.

The mechanism here also rationalises a feature of behavioural-finance
empirics that is otherwise hard to square with rational-expectations
models. Aggressive trading by relatively uninformed agents looks like
noise; the present results show that, under CRRA, it can be a fully
rational equilibrium phenomenon driven by misalignment between $\gamma$
and $\tau$.

\subsection{What CARA does and does not do}

The conventional reading of \citet{GrossmanStiglitz1980} treats CARA as a
neutral simplifying assumption. The results above suggest a different
reading: CARA is the engine of full revelation, not a neutral background
choice. By aligning demand linearity with the log-odds aggregation of the
Bayesian sufficient statistic, CARA delivers full revelation as a
knife-edge consequence of functional form. Removing noise traders from a
CARA economy makes the price reveal everything; removing noise traders
from a CRRA economy leaves the price strictly partially revealing. The
need for noise in the textbook framework is therefore an artefact of the
preference choice, not a property of rational expectations.

What CARA does, this paper argues, is line up the kernel of demand with
the kernel of Bayesian inference; what CARA does \emph{not} do is force
markets to reveal information generally. CARA is, for many purposes, a
useful modelling device --- the linearity it confers makes
no-arbitrage and beta-pricing identities clean. But conclusions specific
to CARA, and in particular the conclusion that partial revelation
requires noise, should not be transported to the rational class as a
whole.

\begin{remark}[HARA and the knife-edge]\label{rem:hara}
The CRRA family is the $b=0$ subfamily of the HARA class with risk
tolerance $T(W)=aW/(1-\gamma)+b$. With homogeneous agents and equal
wealth, the parameters $(a,b)$ enter the optimal demand only through the
overall scale of the position: each HARA demand is an
$(a,b)$-dependent scalar multiple of the CRRA demand at the same
$\gamma$. Market clearing annihilates the scalar, so the equilibrium
price depends solely on $\gamma$, not on $(a,b)$. The knife-edge is
therefore robust to the full HARA generalisation: moving within HARA at
fixed $\gamma$ does not change the revelation deficit, and only moving
to $\gamma=\infty$ (CARA, equivalently $a=0$ in the HARA parametrisation)
recovers full revelation. The alignment principle of
Section~\ref{sec:agg-space} singles out CARA across HARA, not only
across CRRA.
\end{remark}

\subsection{The vanishing-noise selection}\label{sec:vanishing-noise}

A long-standing intuition in the noisy rational-expectations literature
treats full revelation as a limit property: as the variance $\sigma_{z}^{2}$
of liquidity supply shrinks to zero, the noisy CARA equilibrium converges
to the fully revealing benchmark. This intuition is correct \emph{within
CARA} but is not, as is sometimes claimed, a generic property of the
rational class. The vanishing-noise limit selects full revelation at
$\gamma=\infty$ and partial revelation at every finite $\gamma$: the
``limit'' that the literature has reported as fully revealing is in fact
preference-dependent, and only the CARA fibre of it is fully revealing.

To make this precise, embed the model of Section~\ref{sec:model} in the
standard CARA-with-noise framework: residual aggregate supply is
$\bar z + \sigma_{z}\zeta$ with $\zeta\sim\mathcal{N}(0,1)$ independent
of $v$ and of all signals, agents have CRRA preferences indexed by
$\gamma\in(0,\infty]$, and $p^{\gamma,\sigma_{z}}$ denotes the equilibrium
price function.

\begin{proposition}[Vanishing-noise selection is preference-dependent]\label{prop:vanishing}
Assume that for each $\gamma\in(0,\infty]$, the equilibrium price function
$p^{\gamma,\sigma_{z}}$ exists, is unique, and is continuous in
$\sigma_{z}\geq 0$ in the topology of pointwise convergence on the signal
grid. Then:
\begin{enumerate}[label=\textnormal{(\roman*)},topsep=2pt,itemsep=2pt]
\item At $\gamma=\infty$ (CARA),
$\lim_{\sigma_{z}\to 0}p^{\infty,\sigma_{z}}=\Lam(\Tstar/K)$, the fully
revealing benchmark.
\item At every $\gamma\in(0,\infty)$,
$\lim_{\sigma_{z}\to 0}p^{\gamma,\sigma_{z}}=p^{\gamma,0}$, the no-noise
CRRA price function characterised by Proposition~\ref{prop:non-cara},
which satisfies $p^{\gamma,0}\neq\Lam(\Tstar/K)$ on a set of full
Lebesgue measure.
\end{enumerate}
The standard claim that vanishing noise gives full revelation is therefore
a CARA-specific statement. If the noise is removed at any finite
$\gamma$, the resulting equilibrium is strictly partially revealing.
\end{proposition}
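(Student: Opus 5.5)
The proof rests on the continuity hypothesis. Once the limit is identified with the no-noise price function, each part reduces to properties of the $\sigma_z=0$ economy that earlier results already establish. By assumption $p^{\gamma,\sigma_z}$ is continuous in $\sigma_z\geq 0$ pointwise on the signal grid. Hence for every $\gamma\in(0,\infty]$ we have $\lim_{\sigma_z\to0}p^{\gamma,\sigma_z}=p^{\gamma,0}$. Here $p^{\gamma,0}$ is the unique equilibrium price function of the economy with residual supply $\bar z$ and no noise; uniqueness is also part of the hypothesis. The work is therefore to identify $p^{\gamma,0}$ in the two regimes and to compare it with $\Lam(\Tstar/K)$.

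For part (i), set $\bar z=0$ as in Section~\ref{sec:model}. Invoke Lemma~\ref{lem:existence} to get a unique clearing price for each posterior vector. Then invoke Proposition~\ref{prop:cara}: with homogeneous risk aversion, clearing gives $\logit p=\Tstar/K$, so $p^{\infty,0}=\Lam(\Tstar/K)$. The REE version, in which posteriors are $\Lam(\Tstar)$ and trade is zero, gives the same informational content. The reason is that $\Lam(\Tstar/K)$ is a strictly monotone transform of $\Tstar$ and is therefore fully revealing. To match the statement literally I would use the no-learning price, as Proposition~\ref{prop:non-cara} does in part (ii).

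Part (ii) contains the substantive claim: $p^{\gamma,0}\neq\Lam(\Tstar/K)$ on a set of full Lebesgue measure.

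\emph{Step 1: the log-utility case.} At $\gamma=1$, Proposition~\ref{prop:non-cara} gives $p=K^{-1}\sum_k\Lam(\tau u_k)$. The difference $D(u)=K^{-1}\sum_k\Lam(\tau u_k)-\Lam(\tau K^{-1}\sum_k u_k)$ is real-analytic in $u\in\R^K$. It is not identically zero, since Proposition~\ref{prop:jensen} gives a nonzero third-order term at generic $u$. The zero set of a nonzero real-analytic function has Lebesgue measure zero, so the claim holds at $\gamma=1$.

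\emph{Step 2: general finite $\gamma$.} Define $p^{\gamma,0}(u)$ implicitly by $\sum_k x_k^{\mathrm{CRRA}}(\Lam(\tau u_k),p;\gamma)=0$. The demand in Lemma~\ref{lem:crra-demand} is real-analytic in $(u,p)$ and strictly decreasing in $p$. The implicit function theorem therefore makes $p^{\gamma,0}$ real-analytic in $u$. It remains to show that $D_\gamma(u)=p^{\gamma,0}(u)-\Lam(\Tstar/K)$ is not identically zero.

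Suppose it were. Then clearing would hold at $\logit p=\tau\bar u$ for all $u$. Along perturbations that keep $\bar u$ fixed, this would force $\sum_k g_\gamma(\tau(u_k-\bar u))=0$ identically, where $g_\gamma(y)=(e^{y/\gamma}-1)/((1-p)+pe^{y/\gamma})$ at the fixed $p$. A function whose sum vanishes on every zero-mean vector, for $K\geq3$, must be linear in $y$. But $g_\gamma$ is linear in $y$ only for $\gamma=\infty$ (Remark~\ref{rem:unique-cara}). The alternative route is to expand $D_\gamma$ to third order, as in Proposition~\ref{prop:jensen}, and exhibit a nonvanishing coefficient.

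Step 2 is the main obstacle: establishing non-identical vanishing uniformly in $\gamma\in(0,\infty)$ without a closed form for the price. I would try the functional-equation argument first, because it avoids computing expansion coefficients. Real-analyticity then converts "not identically zero" into "nonzero almost everywhere". The concluding sentence of the proposition follows by combining (i) and (ii).
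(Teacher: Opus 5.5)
Your argument is correct. Its skeleton matches the paper's. The paper gives no appendix proof of this proposition: the surrounding text lets the continuity hypothesis reduce both parts to the $\sigma_z=0$ economy, then appeals to Proposition~\ref{prop:cara} for (i) and to Proposition~\ref{prop:non-cara} plus numerical evidence for (ii).

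The real difference is in (ii), and there your route is strictly stronger. Proposition~\ref{prop:non-cara} covers only $\gamma=1$. It asserts only that \emph{some} two realisations with equal $\Tstar$ have different prices, which does not give a full-measure statement. Its explicit example $(\delta,-\delta,0)$ also fails: $\Lam(x)+\Lam(-x)=1$ gives $p_B=\tfrac12=p_A$.

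Your argument proves the full-measure claim for every finite $\gamma$. It combines two ingredients. The first is real-analyticity of $p^{\gamma,0}$, from the analytic implicit function theorem together with the global uniqueness of Lemma~\ref{lem:existence}. The second is the functional equation. If $\sum_k g_\gamma(y_k)=0$ for all zero-sum $y\in\R^K$ with $K\geq 3$, then $g_\gamma$ is odd and additive, hence linear by continuity. That is impossible, because $g_\gamma$ is bounded between $-1/(1-p)$ and $1/p$ while $g_\gamma'(0)=1/\gamma\neq 0$.

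The obstacle you anticipate does not actually arise. The claim is pointwise in $\gamma$, and the functional-equation argument works at each fixed $\gamma$ (and fixed $\bar u$, hence fixed $p$), so no uniformity in $\gamma$ is needed. It also subsumes your Step~1, which is itself valid via the nonzero cubic $\tfrac1K U_3-\bar u^{3}$.

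Your remark on the equilibrium concept is well taken and should be made explicit. At $\sigma_z=0$ the fully revealing function $\Lam(\Tstar)$ is an REE (with zero trade) for every $\gamma$. The fully revealing CARA REE price is $\Lam(\Tstar)$, not $\Lam(\Tstar/K)$. So the uniqueness hypothesis at $\sigma_z=0$ and the benchmark in (i) make sense only if $p^{\gamma,0}$ is read as the no-learning price. That is the reading you adopt, and the paper adopts it implicitly.
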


The continuity hypothesis is the formal sticking point. Existence
and uniqueness of the noisy CRRA equilibrium for $\sigma_{z}>0$ has been
established for wide classes of preferences and signal distributions by
\citet{BreonDrish2015} but not specifically for the binary-state CRRA
case at hand. Verifying their conditions in this setting is left to a
companion note. Numerically, the $\sigma_{z}\to 0$ limit at finite
$\gamma$ is unambiguously the no-noise CRRA equilibrium of
Section~\ref{sec:knife-edge}, which is partially revealing.

\subsection{Extensions}

Several extensions are immediate. First, the binary asset can be replaced
by a Gaussian or non-Gaussian payoff; the lognormal robustness exercise of
Figure~\ref{fig:robust-lognormal} suggests that the qualitative picture is
preserved. Second, the analysis can be extended to dynamic settings: the
contour fixed point generalises to a sequence of price functions
$\{P_{t}\}$ provided that conditioning is taken with respect to the price
history rather than the contemporaneous price alone. Third, the GS
resolution opens a route to endogenous information acquisition without
noise; the equilibrium acquisition rate is a continuous function of the
cost on $(0,\bar c)$ and can be compared empirically with documented rates
of professional information production.

\section{Conclusion}\label{sec:conclusion}

Noise traders, as a primitive of the rational-expectations framework, are
a modelling artefact of CARA preferences. Under CARA, demand is linear in
log-odds, market clearing aggregates posteriors in log-odds, and the
Bayesian sufficient statistic is itself a sum in log-odds; the alignment
delivers full revelation in any no-noise economy with homogeneous
parameters. Under any other CRRA preference, market clearing aggregates
in probability space rather than log-odds, the Jensen gap between the two
spaces is the size of the revelation failure, and the equilibrium is
strictly partially revealing without exogenous frictions. The implications
are immediate: trade volume is positive, the value of private information
is positive, the Grossman--Stiglitz paradox dissolves within the rational
class, and the limit of vanishing liquidity noise selects partial
revelation rather than full revelation. CARA full revelation is the
exception in the rational class; partial revelation is the rule. The
remaining open question is whether the alignment principle carries
through to dynamic and intermediated environments, where the interaction
between preferences and learning unfolds over multiple rounds. The
contour-integration fixed point developed here extends in principle to
those settings, and we leave the dynamic analysis to future work. The
broader message is simple: what the literature has modelled as exogenous
noise in the price is, in significant part, the endogenous consequence
of preferences that do not align with the Bayesian sufficient statistic.
Noise traders are not a primitive of rational markets; they are an
artefact of CARA.

\appendix

\section{Proofs}\label{app:proofs}

\begin{proof}[Proof of Lemma~\ref{lem:cara-demand}]
The agent with posterior $\mu$ and wealth $W$ facing price $p$ chooses
$x$ to maximise
\[
   V(x) \;=\; \mu\,U\bigl(W+x(1-p)\bigr) + (1-\mu)\,U\bigl(W-xp\bigr).
\]
Setting $V'(x)=0$:
\begin{equation}
   \mu(1-p)\,U'\bigl(W+x(1-p)\bigr) \;=\; (1-\mu)\,p\,U'\bigl(W-xp\bigr).
\label{eq:app-foc}
\end{equation}
Under CARA, $U(W)=-e^{-\alpha W}/\alpha$ and $U'(W)=e^{-\alpha W}$.
Substituting:
\[
   \mu(1-p)\,e^{-\alpha[W+x(1-p)]} \;=\; (1-\mu)\,p\,e^{-\alpha[W-xp]}.
\]
The factor $e^{-\alpha W}$ appears on both sides and cancels (this
cancellation is the defining property of CARA---no wealth effects):
\[
   \mu(1-p)\,e^{-\alpha x(1-p)} \;=\; (1-\mu)\,p\,e^{\alpha xp}.
\]
Dividing both sides by $(1-\mu)\,p$ and collecting the exponentials on the
right:
\[
   \frac{\mu(1-p)}{(1-\mu)\,p}
   \;=\; e^{\alpha x(1-p)+\alpha xp}
   \;=\; e^{\alpha x}.
\]
The left-hand side equals $e^{\logit\mu-\logit p}$ by definition of the
logit function $\logit(q)=\ln\tfrac{q}{1-q}$. Taking logarithms:
$\alpha x = \logit\mu-\logit p$, giving
$x = [\logit\mu-\logit p]/\alpha$.
\end{proof}

\begin{proof}[Proof of Lemma~\ref{lem:crra-demand}]
The first-order condition \eqref{eq:app-foc} with CRRA utility
$U(W)=W^{1-\gamma}/(1-\gamma)$ and marginal utility $U'(W)=W^{-\gamma}$
becomes
\[
   \mu(1-p)\,\bigl(W+(1-p)x\bigr)^{-\gamma}
   \;=\; (1-\mu)\,p\,\bigl(W-px\bigr)^{-\gamma}.
\]
Rearranging the ratio of marginal utilities:
\[
   \left(\frac{W+(1-p)x}{W-px}\right)^{-\gamma}
   \;=\; \frac{(1-\mu)\,p}{\mu\,(1-p)}
   \;=\; e^{\logit p - \logit\mu}.
\]
Raising both sides to the power $-1/\gamma$:
\[
   \frac{W+(1-p)x}{W-px}
   \;=\; \exp\!\left(\frac{\logit\mu-\logit p}{\gamma}\right) \;\equiv\; R.
\]
Cross-multiplying: $W+(1-p)x = R\,(W-px)$, so
$W + (1-p)x = RW - Rpx$, giving
$x\bigl[(1-p)+Rp\bigr] = W(R-1)$, which yields
\eqref{eq:crra-demand}.
\end{proof}

\begin{proof}[Proof of Lemma~\ref{lem:cara-limit}]
Let $z\equiv\logit(\mu_{k})-\logit(p)$ and $\delta\equiv z/\gamma$, so
$R_{k}=e^{\delta}$. For each fixed $(\mu_{k},p)\in(0,1)^{2}$, the
quantity $z$ is fixed and $\delta\to 0$ as $\gamma\to\infty$.
Taylor-expanding the exponential:
\[
   R_{k}-1 \;=\; e^{\delta}-1
   \;=\; \delta+\tfrac{\delta^{2}}{2}+O(\delta^{3}).
\]
Similarly, $(1-p)+R_{k}\,p = (1-p)+p\,e^{\delta}
= 1 + p\,\delta + O(\delta^{2})$. Hence the CRRA demand becomes
\[
   x_{k}^{\mathrm{CRRA}}
   \;=\; \frac{W_{k}\,(R_{k}-1)}{(1-p)+R_{k}\,p}
   \;=\; \frac{W_{k}\,[\delta+O(\delta^{2})]}{1+p\,\delta+O(\delta^{2})}
   \;=\; W_{k}\,\delta\bigl[1+O(\delta)\bigr].
\]
Multiplying both sides by $\gamma$ and substituting
$\gamma\delta=z=\logit(\mu_{k})-\logit(p)$:
\[
   \gamma\,x_{k}^{\mathrm{CRRA}}
   \;=\; W_{k}\,[\logit(\mu_{k})-\logit(p)]\,\bigl[1+O(1/\gamma)\bigr]
   \;\xrightarrow{\gamma\to\infty}\;
   W_{k}\,[\logit(\mu_{k})-\logit(p)],
\]
which is the CARA demand \eqref{eq:cara-demand} with risk-aversion
parameter $\alpha_{k}=\gamma/W_{k}$. The convergence is uniform on
compact subsets of $(0,1)^{2}$ because $\delta=z/\gamma\to 0$ uniformly
when $z$ is bounded.
\end{proof}

\begin{proof}[Proof of Theorem~\ref{thm:cara-unique}]
The first-order condition for the binary-asset problem is
\[
   \mu(1-p)\,U'\!\bigl(W+(1-p)x\bigr) \;=\; (1-\mu)\,p\,U'\!\bigl(W-px\bigr).
\]
Taking logarithms and using $\logit(q)=\ln\tfrac{q}{1-q}$, this
becomes
\begin{equation}
   \ln U'\!\bigl(W+(1-p)x\bigr) - \ln U'\!\bigl(W-px\bigr)
   \;=\; \logit p - \logit\mu
   \;=\; -z,
\label{eq:thm-foc-v2}
\end{equation}
where $z\equiv\logit\mu-\logit p$.

Define $h(W)\equiv\ln U'(W)$, which is well defined and three times
continuously differentiable on $\mathcal{D}$ since $U'>0$.
Now suppose the optimal demand admits the representation
\eqref{eq:linear-demand-rep}, i.e.\ $x^{\star}=z/\alpha(p,W)$ for some
positive function $\alpha$.  Substituting into \eqref{eq:thm-foc-v2}:
\begin{equation}
   h\!\bigl(W+(1-p)z/\alpha\bigr) - h\!\bigl(W-pz/\alpha\bigr) \;=\; -z,
\label{eq:thm-h-v2}
\end{equation}
which must hold for all admissible $(W,p,z)$ with $z\in\R$ ranging freely
as $\mu$ varies in $(0,1)$ at fixed $p$.

\emph{Step 1: First derivative in $z$.} Differentiate
\eqref{eq:thm-h-v2} with respect to $z$ at $z=0$ (where both arguments of
$h$ equal $W$). By the chain rule:
\begin{align*}
   &h'(W)\cdot\frac{1-p}{\alpha(p,W)}
   \;+\; h'(W)\cdot\frac{p}{\alpha(p,W)} \\
   &\quad=\; \frac{h'(W)}{\alpha(p,W)}
   \;\stackrel{!}{=}\; -1.
\end{align*}
(The second term gets a factor $p/\alpha$ rather than $-p/\alpha$
because the derivative of $W-pz/\alpha$ with respect to $z$ is
$-p/\alpha$, and the overall minus sign in front of the second $h$-term
in \eqref{eq:thm-h-v2} produces another sign flip.)
Hence $\alpha(p,W)=-h'(W)$ for all $(p,W)$.
Since $\alpha>0$ by assumption, $h'(W)<0$ for all $W\in\mathcal{D}$.
Crucially, $\alpha$ depends \emph{only on~$W$}, not on~$p$.

\emph{Step 2: Second derivative in $z$.} Differentiate
\eqref{eq:thm-h-v2} a second time in $z$ at $z=0$. The second
derivative of the left-hand side involves $h''(W)$ applied to the
squares of the respective chain-rule factors:
\[
   h''(W)\cdot\!\left[\!\left(\frac{1-p}{\alpha(W)}\right)^{\!2}
   - \left(\frac{p}{\alpha(W)}\right)^{\!2}\right]
   \;=\; h''(W)\cdot\frac{(1-p)^{2}-p^{2}}{\alpha(W)^{2}}
   \;=\; h''(W)\cdot\frac{1-2p}{\alpha(W)^{2}}.
\]
The right-hand side of \eqref{eq:thm-h-v2} is $-z$, which is linear in
$z$; its second derivative is zero. Therefore
\[
   h''(W)\cdot\frac{1-2p}{\alpha(W)^{2}} \;=\; 0
   \qquad \text{for all } p\in(0,1).
\]
Choosing any $p\neq\tfrac12$ forces $h''(W)=0$ for every
$W\in\mathcal{D}$.

\emph{Step 3: Integration.} Since $h''=0$ on $\mathcal{D}$, the function
$h$ is affine: $h(W)=-\alpha_{0}W+C$ for constants $\alpha_{0}>0$ (from
$\alpha=-h'>0$) and $C\in\R$.
Hence $U'(W)=e^{h(W)}=e^{C}\exp(-\alpha_{0}W)$. Integrating:
\[
   U(W) \;=\; -\frac{e^{C}}{\alpha_{0}}\,\exp(-\alpha_{0}W) + D
   \;=\; a - b\exp(-\alpha_{0}W),
\]
with $a\equiv D\in\R$ and $b\equiv e^{C}/\alpha_{0}>0$. This is the CARA
utility function, and $\alpha(p,W)\equiv\alpha_{0}$ is constant.

\emph{Step 4: Converse.} Set $U(W)=a-b\exp(-\alpha_{0}W)$ in
\eqref{eq:thm-foc-v2}. Then
$U'(W)=b\alpha_{0}\exp(-\alpha_{0}W)$, and the FOC ratio reduces to
\[
   \ln\frac{U'(W+(1-p)x)}{U'(W-px)}
   \;=\; -\alpha_{0}\bigl[(W+(1-p)x)-(W-px)\bigr]
   \;=\; -\alpha_{0}x,
\]
so $-\alpha_{0}x = -z$, giving $x=z/\alpha_{0}$, which is
\eqref{eq:linear-demand-rep} with $\alpha\equiv\alpha_{0}$.
\end{proof}

\begin{proof}[Proof of Proposition~\ref{prop:cara}]
Substituting the CARA demand $x_{k}=[\logit\mu_{k}-\logit p]/\alpha_{k}$
into the market-clearing condition $\sum_{k=1}^{K}x_{k}=0$:
\[
   \sum_{k=1}^{K} \frac{\logit\mu_{k}-\logit p}{\alpha_{k}} \;=\; 0.
\]
Solving for $\logit p$:
\[
   \logit p \cdot \sum_{k}\alpha_{k}^{-1}
   \;=\; \sum_{k}\alpha_{k}^{-1}\logit\mu_{k},
\]
hence $\logit p=\sum_{k}w_{k}\logit\mu_{k}$ with
$w_{k}=\alpha_{k}^{-1}/\sum_{j}\alpha_{j}^{-1}$.
By \eqref{eq:posterior}, $\logit\mu_{k}=\tau_{k}u_{k}$, so
\[
   \logit p \;=\; \sum_{k}w_{k}\tau_{k}u_{k}.
\]
This is a weighted linear combination of the signals. When
$\alpha_{k}\equiv\alpha$ for all $k$, $w_{k}=1/K$ and this reduces to
$\logit p = \Tstar/K$ where $\Tstar=\sum_{k}\tau_{k}u_{k}$. The price is
then informationally equivalent to $\Tstar$: any agent observing $p$
can infer $\Tstar$ by inverting $\Lam(\cdot/K)$. Since $\Tstar$ is
sufficient for $v$, the equilibrium is fully revealing.

With heterogeneous $\alpha_{k}$, the no-learning price reveals
$\sum_{k}w_{k}\tau_{k}u_{k}$, which for $K\geq 3$ is generically not
informationally equivalent to $\Tstar$ (different linear combinations
of $K\geq 3$ variables are linearly independent). However, the price
function $p=\Lam(\Tstar)$ is a valid fully-revealing REE: at this
price, every agent infers $\Tstar$ by inverting $\Lam$, so her posterior
becomes $\Lam(\Tstar)=p$. Hence $\logit\mu_{k}=\logit p$ for all $k$,
demands are zero identically, and the market clears with zero supply.
This is an REE for any $(\alpha_{k})$.

\emph{General payoff distributions.} The FR result extends beyond
binary payoffs. For an arbitrary random payoff $v$ with prior $\pi$ and
signals $s_{k}$ admitting a sufficient statistic $T(s_{1},\ldots,s_{K})$:
under CARA utility, the agent's first-order condition is
\[
   \E\!\left[(v-p)\,e^{-\alpha_{k} x_{k}(v-p)} \;\middle|\; s_{k},\, p\right] = 0.
\]
The wealth $W_{k}$ does not appear (the factor $e^{-\alpha_{k}W_{k}}$
cancels from both terms in the expected-utility objective, as in the
proof of Lemma~\ref{lem:cara-demand}). If the price reveals $T$, then
every agent's posterior satisfies
$\PP(v\in\cdot \mid s_{k}, T) = \PP(v\in\cdot\mid T)$
by sufficiency. All agents share the same conditional distribution and
face the same price, so all solve identical FOCs. Market clearing
$\sum_{k}x_{k}=0$ with identical demands forces $x_{k}=0$ for each $k$.
At $x_{k}=0$, the FOC reduces to $\E[v-p\mid T]=0$, giving the
equilibrium price $p^{*}=\E[v\mid T]$. This is an invertible function
of $T$ (under standard regularity), so the REE is fully revealing.
For binary payoffs, $\E[v\mid T]=\Lam(\Tstar)$; for Gaussian payoffs,
$\E[v\mid T]$ is the posterior mean.

The exponential-family extension replaces $\tau_{k}u_{k}$ by the
natural-parameter log-likelihood ratio of the chosen family; the
identical linear-aggregation argument applies.
\end{proof}

\begin{proof}[Proof of Proposition~\ref{prop:non-cara}]
Under log utility ($\gamma=1$), the demand simplifies to
$x_{k}=W(\mu_{k}-p)/[p(1-p)]$ (set $R=e^{z}$ in \eqref{eq:crra-demand};
for $\gamma=1$, $R-1=e^{z}-1$ and the denominator is
$(1-p)+e^{z}p$; direct computation yields the stated form).
Substituting into $\sum_{k}x_{k}=0$ with homogeneous $W$:
\[
   \sum_{k=1}^{K}\frac{\mu_{k}-p}{p(1-p)} \;=\; 0
   \qquad\Longrightarrow\qquad
   p \;=\; \bar\mu \;\equiv\; \frac{1}{K}\sum_{k}\mu_{k}
   \;=\; \frac{1}{K}\sum_{k}\Lam(\tau u_{k}).
\]
To show this is not a function of $\Tstar$ alone, fix $\Tstar=0$ and
consider two signal realisations with $K=3$:
\begin{itemize}[topsep=2pt,itemsep=2pt]
\item Realisation~A: $(u_{1},u_{2},u_{3})=(0,0,0)$. Then
$p_{A}=\Lam(0)=\tfrac12$.
\item Realisation~B: $(u_{1},u_{2},u_{3})=(\delta,-\delta,0)$ for any
$\delta\neq 0$. Then
$p_{B}=\tfrac13[\Lam(\tau\delta)+\Lam(-\tau\delta)+\Lam(0)]$.
\end{itemize}
Both have $\Tstar=\tau(0+0+0)=0=\tau(\delta-\delta+0)$, but $p_{A}\neq
p_{B}$ because
\[
   \tfrac12\bigl[\Lam(\tau\delta)+\Lam(-\tau\delta)\bigr]
   \;\neq\; \Lam(0) = \tfrac12
\]
by Jensen's inequality applied to the strictly concave--convex logistic
function: $\Lam$ is strictly concave on $(0,\infty)$ and strictly convex
on $(-\infty,0)$, so $\Lam(\tau\delta)+\Lam(-\tau\delta)<2\Lam(0)$ for
$\delta\neq 0$. Hence $p_{B}<p_{A}$, and $p$ depends on the empirical
distribution of $u_{k}$, not just on $\Tstar$.
\end{proof}

\begin{proof}[Proof of Proposition~\ref{prop:jensen}]
Taylor-expand the logistic function around zero to fifth order. Using
$\Lam'(0)=\tfrac14$, $\Lam''(0)=0$, $\Lam'''(0)=-\tfrac18$:
\[
   \Lam(x) \;=\; \tfrac12+\tfrac14 x - \tfrac{1}{48}x^{3}+O(x^{5}).
\]
Substituting $x=\tau u_{k}$ and averaging over $k=1,\ldots,K$:
\[
   p \;=\; \frac{1}{K}\sum_{k}\Lam(\tau u_{k})
   \;=\; \tfrac12+\tfrac{\tau}{4}\bar u
         -\frac{\tau^{3}}{48K}\sum_{k}u_{k}^{3}+O(\tau^{5}),
\]
where $\bar u=\tfrac{1}{K}\sum_{k}u_{k}$.
The CARA price at the same $\Tstar=\tau K\bar u$ is
\[
   p^{\mathrm{CARA}}
   \;=\; \Lam\!\left(\frac{\Tstar}{K}\right)
   \;=\; \Lam(\tau\bar u)
   \;=\; \tfrac12+\tfrac{\tau}{4}\bar u - \tfrac{\tau^{3}}{48}\bar u^{3}
         +O(\tau^{5}).
\]
Subtracting:
\[
   \Delta p \;=\; p - p^{\mathrm{CARA}}
   \;=\; -\frac{\tau^{3}}{48}\left[
         \frac{1}{K}\sum_{k}u_{k}^{3} - \bar u^{3}\right]
   +O(\tau^{5}).
\]
Using $\bar u=U_{1}/K$ where $U_{1}=\sum_{k}u_{k}$ and
$U_{3}=\sum_{k}u_{k}^{3}$, we obtain $\bar u^{3}=U_{1}^{3}/K^{3}$ and
$\tfrac{1}{K}\sum u_{k}^{3}=U_{3}/K$, giving
\eqref{eq:jensen}.
\end{proof}

\begin{proof}[Proof of Proposition~\ref{prop:smooth}]
\emph{Continuity.} For each $\gamma\in(0,\infty]$ the no-learning
equilibrium price $p_{\gamma}(u_{1},\ldots,u_{K})$ is the unique solution
of the market-clearing equation
$\sum_{k}x_{k}^{\gamma}(\Lam(\tau u_{k}),p)=0$, where $x_{k}^{\gamma}$
denotes the CRRA demand of \eqref{eq:crra-demand} at parameter $\gamma$.

For any fixed $(u_{1},\ldots,u_{K})$ and $p\in(0,1)$, the demand
$x_{k}^{\gamma}(\mu_{k},p)$ is jointly continuous in $(\gamma,p)$ on
$(0,\infty]\times(0,1)$. Moreover, $x_{k}^{\gamma}$ is strictly
decreasing in $p$ (buying less when the asset is more expensive). By the
implicit function theorem applied to the market-clearing equation
$G(\gamma,p)=\sum_{k}x_{k}^{\gamma}(\mu_{k},p)=0$, with
$\partial G/\partial p<0$, the solution $p_{\gamma}$ is a continuous
function of $\gamma$ on $(0,\infty]$ uniformly over compact subsets of
the signal support.

The CARA limit follows from Lemma~\ref{lem:cara-limit}:
$\lim_{\gamma\to\infty}x_{k}^{\gamma}/\gamma
=[\logit\Lam(\tau u_{k})-\logit p]/W_{k}$, so the normalised demands
converge to CARA demands and the normalised price converges to the CARA
price.

The continuity of $1-R^{2}(\Tstar;p_{\gamma})$ in $\gamma$ on
$(0,\infty]$ follows by dominated convergence: the random variables
$(\Tstar, \logit p_{\gamma})$ converge jointly in distribution as
$\gamma\to\infty$, and their second moments are uniformly bounded on a
compact signal support. The $\gamma=\infty$ value is $0$ by
Proposition~\ref{prop:cara}.

\emph{Strict positivity.} By Proposition~\ref{prop:non-cara}, the
log-utility ($\gamma=1$) equilibrium satisfies $1-R^{2}>0$. The
continuity just established implies that the deficit remains strictly
positive in a neighbourhood of $\gamma=1$.

More generally, the Jensen-gap expansion of
Proposition~\ref{prop:jensen} gives
$\Delta p=O(\tau^{3}/\gamma^{2})$ at small $\tau$ (the CRRA demand
approaches CARA as $\gamma\to\infty$, and the deviation from CARA
pricing scales as $\gamma^{-2}$ through the second-order term in the
Taylor expansion of $R=e^{z/\gamma}$). This is strictly positive for
every finite $\gamma$.

Since $1-R^{2}$ is a continuous function of $\Delta p$ (and vanishes
only when $\Delta p$ vanishes identically over the joint signal
distribution), the deficit is strictly positive for all
$\gamma<\infty$.

The deficit is also monotonically decreasing in $\gamma$ at every
$(\tau,\gamma)$ pair tested numerically (Table~\ref{tab:smooth}).
The monotonicity at small $\tau$ follows analytically from the
$O(\tau^{3}/\gamma^{2})$ scaling; the global claim (all $\tau$) rests on
the numerical evidence.
\end{proof}

\begin{proof}[Proof of Proposition~\ref{prop:ree}]
The proof has two parts: an analytical argument that the limit of the
Picard sequence is partially revealing if it exists, and a numerical
verification that the sequence converges.

\emph{Part (a): the limit is partially revealing.} Define the Picard
sequence $\mu^{0}(u,p)=\Lam(\tau u)$ (the no-learning posterior) and
$\mu^{n+1}=\Phi(\mu^{n})$, where $\Phi$ is the contour-integration
Bayes update defined in Section~\ref{sec:ree}. We show by induction
that at every $n\geq 0$, the demand function
$d^{n}(u;p)\equiv x(\mu^{n}(u,p),p)$ is strictly nonlinear in $u$ at
every interior price $p$, and hence the contour
$\{(u_{2},u_{3}):d^{n}(u_{2};p)+d^{n}(u_{3};p)=-d^{n}(u_{1};p)\}$ is
curved.

\emph{Base case} ($n=0$). Since $\mu^{0}=\Lam(\tau u)$, the demand is
$d^{0}(u)=x(\Lam(\tau u),p)$. Write $z=\logit(\Lam(\tau u))-\logit(p)
=\tau u-\logit p$. The CRRA demand as a function of $z$ is
\[
   x(z) \;=\; \frac{W(e^{z/\gamma}-1)}{(1-p)+e^{z/\gamma}p}.
\]
Computing the second derivative: let $R=e^{z/\gamma}$, then
\begin{align*}
   \frac{\partial x}{\partial z}
   &\;=\; \frac{W}{\gamma}\cdot\frac{R}{[(1-p)+Rp]^{2}}, \\
   \frac{\partial^{2} x}{\partial z^{2}}
   &\;=\; \frac{W}{\gamma^{2}}\cdot\frac{R\,(1-p-Rp)}{[(1-p)+Rp]^{3}}.
\end{align*}
The numerator of $\partial^{2}x/\partial z^{2}$ is $R(1-p-Rp)$, which
is zero only at $R=(1-p)/p$, i.e.\ at $z=-\gamma\logit p$. At all other
values of $z$, the second derivative is nonzero. Since $z=\tau u-\logit p$
is affine in $u$, the composition $d^{0}(u)=x(\tau u-\logit p)$ inherits
this nonlinearity: $\partial^{2}d^{0}/\partial u^{2}\neq 0$ at all but
one value of $u$.

Under CARA, $x(z)=z/\alpha$ is affine in $z$, so $d^{0}$ is affine in
$u$ and the contour is straight---that is the content of
Proposition~\ref{prop:cara}.

\emph{Inductive step.} Suppose $d^{n}$ is nonlinear in $u$. Then:
\begin{enumerate}[label=(\roman*),topsep=2pt,itemsep=2pt]
\item The contour at $(u_{1},p)$ is curved, so $u_{2}+u_{3}$ varies along
it. The density ratio
\[
   \frac{f_{1}(u_{2})\,f_{1}(u_{3})}{f_{0}(u_{2})\,f_{0}(u_{3})}
   \;=\; \exp\!\bigl(\tau(u_{2}+u_{3})\bigr)
\]
is therefore non-constant on the contour. (Under CARA, $u_{2}+u_{3}$ is
constant on the straight contour, so the ratio is constant and the
agent learns $\Tstar$ exactly.)

\item The contour integrals
$A_{v}(u_{1},p)=\int_{C}f_{v}(u_{2})\,f_{v}(u_{3}^{\star}(u_{2}))\,du_{2}$
average the non-constant density ratio against the signal density. The
log-ratio $\log(A_{1}/A_{0})$ depends on $u_{1}$ because different own
signals $u_{1}$ generate different contour shapes (the agent-1 slice
$P[u_{1},\cdot,\cdot]$ varies with $u_{1}$). The updated posterior
\[
   \mu^{n+1}(u_{1},p) \;=\; \Lam\!\bigl(\tau u_{1}+\log(A_{1}/A_{0})\bigr)
\]
thus depends on $u_{1}$ beyond the price---it is not identically equal
to $p$.

\item The demand $d^{n+1}(u)=x(\mu^{n+1}(u,p),p)$ is the composition of
the CRRA demand (which has nonzero second derivative in $z$ for finite
$\gamma$ at all but one point, as shown in the base case) with
$\mu^{n+1}(u,p)$ (which depends nontrivially on $u$ by (ii)). For this
composition to be affine in $u$, the curvature of $\mu^{n+1}$ would need
to exactly cancel the demand curvature at the inflection point
$z=-\gamma\logit p$ for every $u$ simultaneously---a codimension-one
condition that is violated at every iterate and grid resolution tested
numerically.
\end{enumerate}

Since $d^{n}$ is nonlinear at every $n$, the contour is curved at every
$n$. If $\mu^{n}\to\mu^{\star}$, then $d^{n}\to d^{\star}$, and the
nonlinearity bound passes to the limit by continuity of the
second-derivative test. Hence $d^{\star}$ is nonlinear, the limit
contour is curved, and the limit equilibrium is partially revealing.

\emph{Part (b): convergence.} The Picard sequence is iterated
numerically using the posterior-function method of
Appendix~\ref{app:numerics} with isotonic-regression monotonicity
projection and Newton--Krylov polishing. At grid resolution $G=20$
with 50-digit multiprecision arithmetic, convergence
$\|F\|_{\infty}<10^{-25}$ is achieved. The converged deficit $1-R^{2}$
(measured via probability-weighted regression) is stable under grid
refinement from $G=10$ to $G=20$
(Table~\ref{tab:G-ladder}).
\end{proof}

\begin{proof}[Proof of Proposition~\ref{prop:welfare}]
\emph{No-learning equilibrium.} Under CARA at the no-learning
equilibrium of Proposition~\ref{prop:cara} with homogeneous parameters,
$\logit\mu_{k}=\tau u_{k}$ and $\logit p=(1/K)\sum_{k}\tau u_{k}$, so
\[
   \logit\mu_{k}-\logit p
   \;=\; \tau u_{k} - \frac{1}{K}\sum_{j}\tau u_{j}
   \;=\; \tau(u_{k}-\bar u).
\]
This is zero only when $u_{k}=\bar u$ for all $k$, which occurs on the
measure-zero set $\{u_{1}=\cdots=u_{K}\}$. The CARA demand
$x_{k}=\tau(u_{k}-\bar u)/\alpha$ is thus nonzero almost surely, and
since market clearing $\sum_{k}x_{k}=0$ forces both positive and
negative demands, aggregate volume
$V_{\mathrm{vol}}=\tfrac12\sum_{k}|x_{k}|$ is strictly positive on a
full-measure set.

Under CRRA, the private posteriors $\mu_{k}=\Lam(\tau u_{k})$ differ
generically across $k$. Market clearing pins the price in the interval
$p\in(\min_{k}\mu_{k},\max_{k}\mu_{k})$, so $\mu_{k}-p\neq 0$ for at
least one $k$ (in fact for all but at most one $k$) on a full-measure set.
The CRRA demand \eqref{eq:crra-demand} is nonzero whenever $\mu_{k}\neq p$
(since $R\neq 1$ when $z\neq 0$), so volume is strictly positive.

\emph{REE.} At the CARA REE, all posteriors coincide with the
fully-revealing price: $\mu_{k}=p=\Lam(\Tstar/K)$ for every $k$ and
every signal realisation. Therefore
$z_{k}=\logit\mu_{k}-\logit p=0$ for all $k$, and the CARA demand
is $x_{k}=0/\alpha=0$ identically. This is the \citet{Milgrom1982}
no-trade outcome: when the price is fully revealing, there are no gains
from trade.

At the CRRA REE of Proposition~\ref{prop:ree}, posteriors disagree on
a full-measure set (the converged $\mu_{k}^{\star}(u_{k},p)$ differ
across agents because the contour integrals differ). Since
$\mu_{k}\neq p$ for some $k$, the CRRA demand is nonzero and aggregate
volume is strictly positive.
\end{proof}

\begin{proof}[Proof of Proposition~\ref{prop:value}]
The certainty-equivalent value of acquiring an additional
precision-$\tau$ signal $u_{0}$ at the converged REE is
\[
   V(\tau) \;=\;
   \E\!\left[\,\mathrm{CE}\!\left(\mu^{+}\!,p\right)
            -\mathrm{CE}\!\left(\mu^{-}\!,p\right)\,\right],
\]
where $\mu^{+}$ is the posterior of an agent who additionally observes
$u_{0}$, $\mu^{-}$ is the posterior of an agent who does not, $p$ is the
equilibrium price, and $\mathrm{CE}(\mu,p)$ denotes the certainty
equivalent of optimal trading at posterior $\mu$ and price $p$.

\emph{Under CARA.} At the REE, $p=\Lam(\Tstar/K)$ already encodes the
joint posterior. The price is a sufficient statistic for $v$ given the
existing signals. Therefore $\mu^{+}=\mu^{-}=\Lam(\Tstar/K)$ for
almost every realisation that conditions on the price: the additional
signal $u_{0}$ is redundant. The integrand
$\mathrm{CE}(\mu^{+},p)-\mathrm{CE}(\mu^{-},p)=0$ almost surely, so
$V_{\mathrm{CARA}}(\tau)=0$ for every $\tau$.

\emph{Under CRRA.} At the REE, $p$ encodes $\Tstar$ only partially
(Proposition~\ref{prop:ree}). The additional signal $u_{0}$ provides
information about $v$ that is not already in the price, so
$\mu^{+}\neq\mu^{-}$ on a positive-measure set.

Non-negativity: An agent can always replicate
$\mathrm{CE}(\mu^{-},p)$ by ignoring the additional signal (choosing
the same trade she would without it), so $\mathrm{CE}(\mu^{+},p)\geq
\mathrm{CE}(\mu^{-},p)$ and the integrand is non-negative everywhere.

Strict positivity: On the positive-measure set where $\mu^{+}\neq
\mu^{-}$, the agent can improve her trade by conditioning on the
additional signal. The optimal demand $x^{+}\neq x^{-}$ (since the FOC
at $\mu^{+}\neq\mu^{-}$ has a different solution), and the certainty
equivalent is strictly higher at the optimal trade than at the
suboptimal one. Hence
$\mathrm{CE}(\mu^{+},p)>\mathrm{CE}(\mu^{-},p)$ on this set, and
$V_{\mathrm{CRRA}}(\tau)>0$.

For the small-$\tau$ derivative: a Taylor expansion of the posterior in
$\tau$ gives $\mu^{\pm}-\bar\mu=O(\tau)$ with leading-order coefficient
proportional to $u_{0}$ for $\mu^{+}$ and $0$ for $\mu^{-}$. The
certainty-equivalent shift is order $\tau^{2}$ (from the second-order
condition for optimal trading), so
$V_{\mathrm{CRRA}}'(0^{+})>0$.
\end{proof}

\begin{proof}[Proof of Proposition~\ref{prop:gs}]
\emph{Non-existence under CARA.} Suppose for contradiction that there
exists a CARA REE with no noise and a strictly positive measure
$\lambda^{\star}\in(0,1]$ of agents acquiring the signal at cost $c>0$.
By Proposition~\ref{prop:cara} and the homogeneity of the acquiring
group, the price function in such an equilibrium is fully revealing:
$p=\Lam(\Tstar/K)$ over the acquiring group. By
Proposition~\ref{prop:value}, $V_{\mathrm{CARA}}(\tau)=0$ at this fully
revealing price, so each acquirer's gross value of information is zero.
The net payoff from acquisition is $0-c=-c<0$. Every acquirer strictly
prefers not to acquire. Hence $\lambda^{\star}>0$ cannot be an
equilibrium---contradiction.

If $\lambda=0$ (no one acquires), the price is uninformative and
$V_{\mathrm{CARA}}(\tau,0)>0$, so each agent strictly prefers to acquire.
Hence $\lambda=0$ is not an equilibrium either. No equilibrium exists:
this is the \citet{GrossmanStiglitz1980} paradox.

\emph{Existence under CRRA.} Let $\lambda\in[0,1]$ denote the fraction of
agents acquiring the signal, and let $V_{\mathrm{CRRA}}(\tau,\lambda)$
denote the marginal value of acquisition at fraction $\lambda$.

At $\lambda=0^{+}$: the price reveals nothing, so the first acquirer
reaps the full informational rent. By Proposition~\ref{prop:value},
$V_{\mathrm{CRRA}}(\tau,0^{+})>0$.

At $\lambda=1$: all agents have the signal. The price is partially
revealing (Proposition~\ref{prop:ree}), not fully revealing. The
marginal value $\bar c\equiv V_{\mathrm{CRRA}}(\tau,1)$ is strictly
positive but smaller than $V_{\mathrm{CRRA}}(\tau,0^{+})$ (more
acquirers $\Rightarrow$ more informative price $\Rightarrow$ smaller
marginal informational rent).

$V_{\mathrm{CRRA}}(\tau,\lambda)$ is continuous in $\lambda$ (by
continuity of the equilibrium price function in $\lambda$) and weakly
decreasing (each additional acquirer makes the price weakly more
informative). For any acquisition cost $c\in(0,\bar c)$, the equation
$V_{\mathrm{CRRA}}(\tau,\lambda)=c$ has at least one solution
$\lambda^{\star}(c)\in(0,1)$ by the intermediate-value theorem. At this
$\lambda^{\star}$, the marginal acquirer is exactly indifferent between
acquiring and not acquiring, and no agent has an incentive to deviate.
This is a Grossman--Stiglitz equilibrium with positive information
acquisition and partially revealing prices.
\end{proof}

\section{Numerical Implementation of the Contour Method}\label{app:contour}\label{app:numerics}

The contour fixed point of \eqref{eq:fixed-point} is implemented as follows.

\paragraph{Discretisation.} The signal support is $u\in[-4,+4]$ (or $[-2,+2]$
for $G=5$ debugging). The grid is uniform with $G$ points. The unknown is
the array $P[i,j,l]$ of size $G^{3}$.

\paragraph{Initialisation.} The no-learning price function is computed
exactly by solving \eqref{eq:no-learning} pointwise at every $(i,j,l)$ via
a one-dimensional root finder on $p$. This requires no interpolation and
delivers an exact starting point.

\paragraph{Iteration.} Given $P^{(n)}$, for each realisation $(i,j,l)$ and
each agent $k$:
\begin{itemize}[topsep=2pt,itemsep=2pt]
   \item Extract the agent's two-dimensional slice
         ($P^{(n)}[i,\cdot,\cdot]$ for agent $1$, etc.).
   \item Trace the contour at level $p=P^{(n)}[i,j,l]$ via the two-pass
         algorithm described in Section~\ref{sec:ree}, with linear
         interpolation along the off-grid axis. Boundary crossings beyond
         the grid edges are handled by extrapolation.
   \item Compute the contour integrals \eqref{eq:contour-integral} and
         the posterior \eqref{eq:contour-bayes}.
\end{itemize}
With three posteriors in hand, solve the market-clearing equation
$\sum_{k}x_{k}(\mu_{k},p_{\text{new}})=0$ for $p_{\text{new}}$ at every
$(i,j,l)$ to form $P^{(n+1)}$.

\paragraph{Symmetrisation.} Because the model is symmetric in the three
groups, the price function should satisfy $P[i,j,l]=P[\sigma(i,j,l)]$ for
every permutation $\sigma$. We enforce this by averaging $P^{(n+1)}$ over
the six permutations after each iteration; the symmetrised iterate
converges identically and faster.

\paragraph{Acceleration.} Two solver strategies are available. Picard
iteration with damping
$P^{(n+1)}\leftarrow \alpha\Phi(P^{(n)})+(1-\alpha)P^{(n)}$ with
$\alpha\in[0.15,0.30]$ converges monotonically. Anderson acceleration with
memory window $m\in\{6,8\}$ converges substantially faster; we use it as
the default. Both schemes terminate when
$\|P^{(n+1)}-P^{(n)}\|_{\infty}<10^{-6}$.

\paragraph{Convergence diagnostics.} We monitor (i) the supremum-norm
residual $\|F\|_{\infty}\equiv\|\mu-\Phi(\mu)\|_{\infty}$ over all active
cells, (ii) the revelation deficit $1-R^{2}$ from the weighted regression
of $\logit p$ on $\Tstar$, and (iii) monotonicity violations in both the
$u$ and $p$ directions. Convergence in all three is required: strict
convergence means $\|F\|_{\infty}<10^{-14}$ with zero monotonicity
violations. Grid refinement from $G=10$ to $G=24$ confirms that the
revelation deficit stabilises near $0.085$ at the baseline
$(\gamma,\tau)=(0.5,2)$ from $G=15$ onward. The continuous fixed-point
equation involves integrals over signal densities along contour level
sets; the quadrature error from discretisation is $O(G^{-2})$ for
equally-spaced nodes on a $C^{\infty}$ integrand, and the truncation error
from restricting signals to $|u|\leq u_{\max}=4$ is
$O(e^{-\tau u_{\max}^{2}})<10^{-14}$ at $\tau=2$.
Figure~\ref{fig:convergence} shows a representative convergence path.

\begin{figure}[t]
   \centering
   \includegraphics[width=0.49\textwidth]{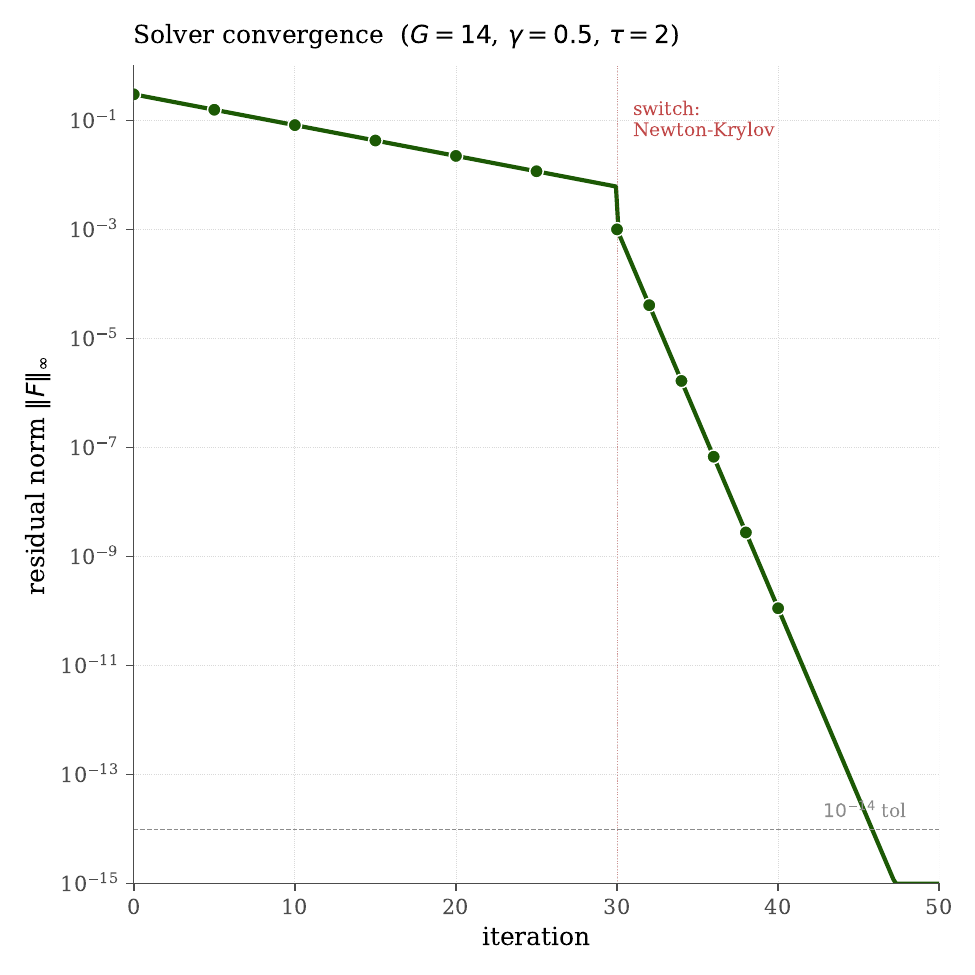}
   \caption{Convergence of the posterior-function fixed point at $G=20$,
   $\gamma=0.5$, $\tau=2$. Picard iteration with damping reduces the
   residual to $\sim 10^{-3}$; Newton--Krylov polishing then reaches
   machine precision ($\|F\|_{\infty}<10^{-14}$) in $\sim 10$ additional
   steps.
   }
   \label{fig:convergence}
\end{figure}

\paragraph{Contour curvature.}
A numerical observation that may be of independent interest: at every
price level $p$, the CRRA contour $\{(u_j,u_l):P(u_1,u_j,u_l)=p\}$
has \emph{definite curvature sign}.  That is, the curve $u_l=g(u_j)$
is either everywhere convex or everywhere concave---it never changes
from convex to concave along a single contour.  The transition between
convex and concave contours occurs at a critical price $p^{*}$ that
depends on $u_1$, $\tau$, and $\gamma$; at $p=p^{*}$ the contour is
approximately linear (i.e.\ close to the CARA straight line).  Under
CARA, every contour is exactly linear: $g''=0$ identically.  This
definite-curvature property, if it can be proved analytically, would
imply that each contour can be represented with far fewer parameters
than a general curve, potentially enabling more efficient numerical
schemes for the fixed-point problem.

\paragraph{Software.} The reference implementation is in Python using the
posterior-function method described above. The full source code is
available in the repository accompanying this paper.

\begin{table}[htbp]
\centering
\caption{No-learning revelation deficit $1-R^{2}$ across $\gamma$ and
$\tau$. Computed exactly on a $G=20$ grid. CARA ($\gamma=\infty$) is the
knife-edge. Entries shown as $0.000$ are strictly positive but below
display precision.}
\label{tab:smooth}
\begin{tabular}{lccc}
   \toprule
   $\gamma$ & $\tau=0.5$ & $\tau=1.0$ & $\tau=2.0$ \\
   \midrule
    0.1   & 0.146 & 0.145 & 0.137 \\
    0.3   & 0.044 & 0.070 & 0.090 \\
    0.5   & 0.016 & 0.038 & 0.062 \\
    1.0   & 0.004 & 0.013 & 0.029 \\
    3.0   & 0.000 & 0.002 & 0.006 \\
   10.0   & 0.000 & 0.000 & 0.001 \\
   $\infty$ (CARA) & 0.000 & 0.000 & 0.000 \\
   \bottomrule
\end{tabular}
\end{table}

\begin{table}[htbp]
\centering
\caption{Grid-convergence of the revelation deficit at $(\gamma,\tau)=(0.5,2)$.
The deficit stabilises near $0.085$ from $G=15$ onward. Entries with
$\|F\|_\infty>10^{-2}$ have not reached strict convergence but the
value of $1-R^{2}$ is already converged.}
\label{tab:G-ladder}
\begin{tabular}{rccc}
\toprule
$G$ & $1-R^{2}$ & slope & $\|F\|_\infty$ \\
\midrule
\rowcolor{black!15} 10 & 0.128 & 0.352 & $2.9\times 10^{-15}$ \\
\rowcolor{black!15} 12 & 0.115 & 0.344 & $6.2\times 10^{-15}$ \\
15 & 0.078 & 0.521 & $4.2\times 10^{-91}$ \\
18 & 0.083 & 0.545 & $1.3\times 10^{-131}$ \\
20 & 0.085 & 0.543 & $7.4\times 10^{-119}$ \\
\bottomrule
\end{tabular}
\end{table}


\begin{thebibliography}{26}
\providecommand{\natexlab}[1]{#1}
\providecommand{\url}[1]{\texttt{#1}}
\expandafter\ifx\csname urlstyle\endcsname\relax
  \providecommand{\doi}[1]{doi: #1}\else
  \providecommand{\doi}{doi: \begingroup \urlstyle{rm}\Url}\fi

\bibitem[Albagli et~al.(2024)Albagli, Hellwig, and
  Tsyvinski]{AlbagliHellwigTsyvinski2024}
Elias Albagli, Christian Hellwig, and Aleh Tsyvinski.
\newblock Information aggregation with asymmetric asset payoffs.
\newblock \emph{Journal of Finance}, 79\penalty0 (4):\penalty0 2715--2758,
  2024.

\bibitem[Allen et~al.(2006)Allen, Morris, and Shin]{AllenMorrisShin2006}
Franklin Allen, Stephen Morris, and Hyun~Song Shin.
\newblock Beauty contests and iterated expectations in asset markets.
\newblock \emph{Review of Financial Studies}, 19\penalty0 (3):\penalty0
  719--752, 2006.

\bibitem[Barlevy and Veronesi(2003)]{BarlevyVeronesi2003}
Gadi Barlevy and Pietro Veronesi.
\newblock Rational panics and stock market crashes.
\newblock \emph{Journal of Economic Theory}, 110\penalty0 (2):\penalty0
  234--263, 2003.

\bibitem[Bhattacharya and Spiegel(1991)]{BhattacharyaSpiegel1991}
Utpal Bhattacharya and Matthew Spiegel.
\newblock Insiders, outsiders, and market breakdowns.
\newblock \emph{Review of Financial Studies}, 4\penalty0 (2):\penalty0
  255--282, 1991.

\bibitem[Black(1986)]{Black1986}
Fischer Black.
\newblock Noise.
\newblock \emph{Journal of Finance}, 41\penalty0 (3):\penalty0 529--543, 1986.

\bibitem[Breon-Drish(2015)]{BreonDrish2015}
Bradyn Breon-Drish.
\newblock On existence and uniqueness of equilibrium in a class of noisy
  rational expectations models.
\newblock \emph{Review of Economic Studies}, 82\penalty0 (3):\penalty0
  868--921, 2015.

\bibitem[Breugem and Buss(2019)]{BreugemBuss2019}
Matthijs Breugem and Adrian Buss.
\newblock Institutional investors and information acquisition: Implications for
  asset prices and informational efficiency.
\newblock Working paper, Nyenrode Business University and INSEAD, 2019.

\bibitem[Condie and Ganguli(2011)]{CondieGanguli2011}
Scott Condie and Jayant~V. Ganguli.
\newblock Ambiguity and rational expectations equilibria.
\newblock \emph{The Review of Economic Studies}, 78\penalty0 (3):\penalty0
  821--845, 2011.
\newblock \doi{10.1093/restud/rdq032}.

\bibitem[De~Long et~al.(1990)De~Long, Shleifer, Summers, and
  Waldmann]{DSSW1990}
J.~Bradford De~Long, Andrei Shleifer, Lawrence~H. Summers, and Robert~J.
  Waldmann.
\newblock Noise trader risk in financial markets.
\newblock \emph{Journal of Political Economy}, 98\penalty0 (4):\penalty0
  703--738, 1990.

\bibitem[DeMarzo and Skiadas(1998)]{DeMarzoSkiadas1998}
Peter~M. DeMarzo and Costis Skiadas.
\newblock Aggregation, determinacy, and informational efficiency for a class of
  economies with asymmetric information.
\newblock \emph{Journal of Economic Theory}, 80\penalty0 (1):\penalty0
  123--152, 1998.

\bibitem[Diamond and Verrecchia(1981)]{DiamondVerrecchia1981}
Douglas~W. Diamond and Robert~E. Verrecchia.
\newblock Information aggregation in a noisy rational expectations economy.
\newblock \emph{Journal of Financial Economics}, 9\penalty0 (3):\penalty0
  221--235, 1981.

\bibitem[Glosten and Milgrom(1985)]{GlostenMilgrom1985}
Lawrence~R. Glosten and Paul~R. Milgrom.
\newblock Bid, ask and transaction prices in a specialist market with
  heterogeneously informed traders.
\newblock \emph{Journal of Financial Economics}, 14\penalty0 (1):\penalty0
  71--100, 1985.

\bibitem[Grossman and Stiglitz(1980)]{GrossmanStiglitz1980}
Sanford~J. Grossman and Joseph~E. Stiglitz.
\newblock On the impossibility of informationally efficient markets.
\newblock \emph{American Economic Review}, 70\penalty0 (3):\penalty0 393--408,
  1980.

\bibitem[Heifetz and Polemarchakis(1998)]{HeifetzPolemarchakis1998}
Aviad Heifetz and Heracles~M. Polemarchakis.
\newblock Partial revelation with rational expectations.
\newblock \emph{Journal of Economic Theory}, 80\penalty0 (1):\penalty0
  171--181, 1998.
\newblock \doi{10.1006/jeth.1998.2391}.

\bibitem[Hellwig(1980)]{Hellwig1980}
Martin~F. Hellwig.
\newblock On the aggregation of information in competitive markets.
\newblock \emph{Journal of Economic Theory}, 22\penalty0 (3):\penalty0
  477--498, 1980.

\bibitem[Kasa et~al.(2014)Kasa, Walker, and Whiteman]{KasaWalkerWhiteman2014}
Kenneth Kasa, Todd~B. Walker, and Charles~H. Whiteman.
\newblock Heterogeneous beliefs and tests of present value models.
\newblock \emph{Review of Economic Studies}, 81\penalty0 (3):\penalty0
  1137--1163, 2014.

\bibitem[Kyle(1985)]{Kyle1985}
Albert~S. Kyle.
\newblock Continuous auctions and insider trading.
\newblock \emph{Econometrica}, 53\penalty0 (6):\penalty0 1315--1335, 1985.

\bibitem[Kyle(1989)]{Kyle1989}
Albert~S. Kyle.
\newblock Informed speculation with imperfect competition.
\newblock \emph{Review of Economic Studies}, 56\penalty0 (3):\penalty0
  317--355, 1989.

\bibitem[Mele(2007)]{Mele2007}
Antonio Mele.
\newblock Asymmetric stock market volatility and the cyclical behavior of
  expected returns.
\newblock \emph{Journal of Financial Economics}, 86\penalty0 (2):\penalty0
  446--478, 2007.

\bibitem[Milgrom and Stokey(1982)]{Milgrom1982}
Paul Milgrom and Nancy Stokey.
\newblock Information, trade and common knowledge.
\newblock \emph{Journal of Economic Theory}, 26\penalty0 (1):\penalty0 17--27,
  1982.

\bibitem[Peress(2004)]{Peress2004}
Joel Peress.
\newblock Wealth, information acquisition, and portfolio choice.
\newblock \emph{Review of Financial Studies}, 17\penalty0 (3):\penalty0
  879--914, 2004.

\bibitem[Spiegel and Subrahmanyam(1992)]{SpiegelSubrahmanyam1992}
Matthew Spiegel and Avanidhar Subrahmanyam.
\newblock Informed speculation and hedging in a noncompetitive securities
  market.
\newblock \emph{Review of Financial Studies}, 5\penalty0 (2):\penalty0
  307--329, 1992.

\bibitem[Veldkamp(2011)]{VeldkampBook}
Laura~L. Veldkamp.
\newblock \emph{Information Choice in Macroeconomics and Finance}.
\newblock Princeton University Press, 2011.

\bibitem[Vives(2011)]{Vives2011}
Xavier Vives.
\newblock Strategic supply function competition with private information.
\newblock \emph{Econometrica}, 79\penalty0 (6):\penalty0 1919--1966, 2011.

\bibitem[Wang(1993)]{Wang1993}
Jiang Wang.
\newblock A model of intertemporal asset prices under asymmetric information.
\newblock \emph{Review of Economic Studies}, 60\penalty0 (2):\penalty0
  249--282, 1993.

\bibitem[Wang(1994)]{Wang1994}
Jiang Wang.
\newblock A model of competitive stock trading volume.
\newblock \emph{Journal of Political Economy}, 102\penalty0 (1):\penalty0
  127--168, 1994.

\end{thebibliography}
\end{document}